\tikzstyle{morphism}=[-stealth, thick]
\tikzstyle{inclusion}=[{right hook}-stealth, thick]
\newcommand{\be}{\begin{enumerate}}
\newcommand{\ee}{\end{enumerate}}
\newcommand{\bi}{\begin{itemize}}
\newcommand{\ei}{\end{itemize}}
\newcommand{\bc}{\begin{center}}
\newcommand{\ec}{\end{center}}
\newcommand{\bsp}{\begin{sloppypar}}
\newcommand{\esp}{\end{sloppypar}}
\newcommand{\sB}{{\cal B}}
\newcommand{\sC}{{\cal C}}
\newcommand{\sD}{{\cal D}}
\newcommand{\sE}{{\cal E}}
\newcommand{\sP}{{\cal P}}
\newcommand{\sQ}{{\cal Q}}
\newcommand{\sS}{{\cal S}}
\newcommand{\sT}{{\cal T}}
\newcommand{\bB}{\mathbb{B}}
\newcommand{\bN}{\mathbb{N}}
\newcommand{\sglsp}{\ }
\newcommand{\dblsp}{\ \ }
\newcommand{\textrel}[1]{\dblsp \text{#1} \dblsp}
\newcommand{\TRUE}{\mbox{{\sc t}}}
\newcommand{\FALSE}{\mbox{{\sc f}}}
\renewcommand{\And}{\wedge}
\newcommand{\And}{\wedge}
\newcommand{\Or}{\vee}
\newcommand{\Forsome}{\exists}
\newcommand{\LambdaApp}{\lambda\,}
\newcommand{\SynEqual}{\equiv}
\renewcommand{\phi}{\varphi}
\newcommand{\compl}[1]{\overline{#1}}
\newcommand{\tarrow}{\rightarrow}
\newcommand{\wrestricted}{\mbox{$|$}\nolinebreak}
\newcommand{\srestricted}{\mbox{$\upharpoonright$}\nolinebreak}
\newcommand{\qzero}{${\cal Q}_0$}
\newcommand{\qzerou}{${\cal Q}^{\rm u}_{0}$}
\newcommand{\fstar}{${\rm F}^\ast$}
  \renewenvironment{proof}[1][\unskip]{\par\noindent{\bf Proof{#1}\dblsp}}{\hfill$\Box$}
  \newenvironment{proof}[1][\unskip]{\par\noindent{\bf Proof{#1}\dblsp}}{\hfill$\Box$}
\newcommand{\mName}[1]{\mathsf{#1}}
\newcommand{\mSet}[1]{\{ #1 \}}
\newcommand{\mTuple}[1]{( #1 )}
\newcommand{\mList}[1]{[ #1 ]}
\newcommand{\mSeq}[1]{\langle #1 \rangle}
\newcommand{\mSeqlike}[1]{\mSeq{\!\mSeq{#1}\!}}
\newcommand{\mAbs}[1]{\lvert #1 \rvert}
\newcommand{\mDot}{\mathrel{.}}
\newcommand{\cBoolTy}{\omicron}
\newcommand{\cB}{\cBoolTy}
\newcommand{\cBaseTy}[1]{#1}
\newcommand{\cFunTy}[2]{({#1} \rightarrow {#2})}
\newcommand{\cFunTyX}[2]{{#1} \rightarrow {#2}}
\newcommand{\cFunTyBX}[3]{\cFunTyX {#1} {\cFunTyX {#2} {#3}}}
\newcommand{\cFunTyCX}[4]{\cFunTyX {#1} {\cFunTyX {#2} {\cFunTyX {#3} {#4}}}}
\newcommand{\cProdTy}[2]{({#1} \times {#2})}
\newcommand{\cProdTyX}[2]{{#1} \times {#2}}
\newcommand{\cVar}[2]{({#1} : {#2})}
\newcommand{\cCon}[2]{{#1}_{#2}}
\newcommand{\cEq}[2]{({#1} = {#2})}
\newcommand{\cEqX}[2]{{#1} = {#2}}
\newcommand{\cFunApp}[2]{(#1\,#2)}
\newcommand{\cFunAppX}[2]{#1\,#2}
\newcommand{\cFunAppB}[3]{(\cFunAppX {\cFunAppX {#1} {#2}} {#3})}
\newcommand{\cFunAppBX}[3]{\cFunAppX {\cFunAppX {#1} {#2}} {#3}}
\newcommand{\cFunAppC}[4]{(\cFunAppX {\cFunAppX {\cFunAppX {#1}{#2}}{#3}}{#4})}
\newcommand{\cFunAppCX}[4]{\cFunAppX {\cFunAppX {\cFunAppX {#1}{#2}}{#3}}{#4}}
\newcommand{\cFunAbs}[3]{(\lambda\, #1 : #2 \mDot #3)}
\newcommand{\cFunAbsX}[3]{\lambda\, #1 : #2 \mDot #3}
\newcommand{\cDefDes}[3]{(\mathrm{I}\, #1 : #2 \mDot #3)}
\newcommand{\cDefDesX}[3]{\mathrm{I}\, #1 : #2 \mDot #3}
\newcommand{\cOrdPair}[2]{(#1,#2)}
\newcommand{\cTPC}{T_\cB}
\newcommand{\cT}{\cTPC}
\newcommand{\cFPC}{F_\cB}
\newcommand{\cF}{\cFPC}
\newcommand{\cAndPC}{\wedge_{\cFunTyBX{\cB}{\cB}{\cB}}}
\newcommand{\cAnd}[2]{({#1} \wedge {#2})}
\newcommand{\cAndX}[2]{{#1} \wedge {#2}}
\newcommand{\cAndBX}[3]{\cAndX {#1} {\cAndX {#2} {#3}}}
\newcommand{\cImpliesPC}{\Rightarrow_{\cFunTyBX {\cB} {\cB} {\cB}}}
\newcommand{\cImplies}[2]{({#1} \Rightarrow {#2})}
\newcommand{\cImpliesX}[2]{{#1} \Rightarrow {#2}}
\newcommand{\cNegPC}{\neg_{\cFunTyX{\cB}{\cB}}}
\newcommand{\cNeg}[1]{(\neg{#1})}
\newcommand{\cNegX}[1]{\neg{#1}}
\newcommand{\cOrPC}{\vee_{\cFunTyBX{\cB}{\cB}{\cB}}}
\newcommand{\cOr}[2]{({#1} \vee {#2})}
\newcommand{\cOrX}[2]{{#1} \vee {#2}}
\newcommand{\cBin}[3]{({#1} \mathrel{#2} {#3})}
\newcommand{\cBinX}[3]{{#1} \mathrel{#2} {#3}}
\newcommand{\cBinB}[5]{({#1} \mathrel{#2} {#3} \mathrel{#4} {#5})}
\newcommand{\cBinBX}[5]{{#1} \mathrel{#2} {#3} \mathrel{#4} {#5}}
\newcommand{\cIff}[2]{({#1} \Leftrightarrow {#2})}
\newcommand{\cIffX}[2]{{#1} \Leftrightarrow {#2}}
\newcommand{\cNotEq}[2]{({#1} \not= {#2})}
\newcommand{\cNotEqX}[2]{{#1} \not= {#2}}
\newcommand{\cForall}[3]{(\forall\, #1 : #2 \mDot #3)}
\newcommand{\cForallX}[3]{\forall\, #1 : #2 \mDot #3}
\newcommand{\cForallBX}[5]{\forall\, #1 : #2,\, #3 : #4 \mDot #5}
\newcommand{\cForallCX}[7]{\forall\, #1 : #2,\, #3 : #4,\, #5 : #6 \mDot #7}
\newcommand{\cForsome}[3]{(\exists\, #1 : #2 \mDot #3)}
\newcommand{\cForsomeX}[3]{\exists\, #1 : #2 \mDot #3}
\newcommand{\cForsomeBX}[5]{\exists\, #1 : #2,\, #3 : #4 \mDot #5}
\newcommand{\cForsomeCX}[7]{\exists\, #1 : #2,\, #3 : #4,\, #5 : #6 \mDot #7}
\newcommand{\cForsomeUnique}[3]{(\exists!\, #1 : #2 \mDot #3)}
\newcommand{\cBotPC}[1]{\bot_{#1}}
\newcommand{\cEmpFunPC}[2]{\Delta_{\cFunTyX {#1} {#2}}}
\newcommand{\cIsDef}[1]{(#1{\downarrow})}
\newcommand{\cIsDefX}[1]{#1{\downarrow}}
\newcommand{\cIsUndef}[1]{(#1{\uparrow})}
\newcommand{\cIsUndefX}[1]{#1{\uparrow}}
\newcommand{\cQuasiEq}[2]{({#1} \simeq {#2})}
\newcommand{\cQuasiEqX}[2]{{#1} \simeq {#2}}
\newcommand{\cNotQuasiEq}[2]{({#1} \not\simeq {#2})}
\newcommand{\cIfThenElse}[3]{\mName{IF}(#1,#2,#3)}
\newcommand{\cIf}[3]{(#1 \mapsto #2 \mid #3)}
\newcommand{\cIfX}[3]{#1 \mapsto #2 \mid #3}
\newcommand{\cSetTy}[1]{\mSet{#1}}
\newcommand{\cIn}[2]{({#1} \in {#2})}
\newcommand{\cInX}[2]{{#1} \in {#2}}
\newcommand{\cNotIn}[2]{({#1} \not\in {#2})}
\newcommand{\cNotInX}[2]{{#1} \not\in {#2}}
\newcommand{\cSet}[3]{\mSet{{{#1} : {#2}} \mid {#3}}}
\newcommand{\cEmpSetPC}[1]{\emptyset_{\cSetTy {#1}}}
\newcommand{\cEmpSetAltPC}[1]{\mSet{\,}_{\cSetTy {#1}}}
\newcommand{\cUnivSetPC}[1]{U_{\cSetTy {#1}}}
\newcommand{\cFinSet}[2]{\textsf{{$#1$}-{$#2$}-SET}}
\newcommand{\cFinSetL}[1]{\mSet{#1}}  
\newcommand{\cSubseteqPC}[1]{\subseteq_{\cFunTyBX {\cSetTy #1} {\cSetTy #1} {\cB}}}
\newcommand{\cSubseteqX}[2]{\cBinX {#1} {\subseteq} {#2}}
\newcommand{\cUnionPC}[1]{\cup_{\cFunTyBX {\cSetTy #1} {\cSetTy #1} {\cSetTy #1}}}
\newcommand{\cIntersPC}[1]{\cap_{\cFunTyBX {\cSetTy #1} {\cSetTy #1} {\cSetTy #1}}}
\newcommand{\cIntersX}[2]{\cBinX {#1} {\cap} {#2}}
\newcommand{\cComplPC}[1]{\overline{\,\cdot\,}_{\cFunTyX {\cSetTy #1} {\cSetTy #1}}}
\newcommand{\cCompl}[1]{\big(\,\overline{#1}\,\big)}
\newcommand{\cComplX}[1]{\overline{#1}}
\newcommand{\cSetDiffPC}[1]{\setminus_{\cFunTyBX {\cSetTy #1} {\cSetTy #1} {\cSetTy #1}}}
\newcommand{\cTupleTyL}[1]{(#1)}  
\newcommand{\cTupleL}[1]{(#1)}    
\newcommand{\cFstPC}[2]{\mName{fst}_{\cFunTyX {\cProdTy {#1} {#2}} {#1}}}
\newcommand{\cSndPC}[2]{\mName{snd}_{\cFunTyX {\cProdTy {#1} {#2}} {#2}}}
\newcommand{\cIdFunPC}[1]{\mName{id}_{\cFunTyX {#1} {#1}}}
\newcommand{\cDomPC}[2]{\mName{dom}_{\cFunTyX {\cFunTy {#1} {#2}} {\cSetTy {#1}}}}
\newcommand{\cRanPC}[2]{\mName{ran}_{\cFunTyX {\cFunTy {#1} {#2}} {\cSetTy {#2}}}}
\newcommand{\cSubfuneqPC}[2]{\sqsubseteq_{\cFunTyBX {\cFunTy {#1} {#2}} {\cFunTy {#1} {#2}} {\cB}}}
\newcommand{\cFunCompPC}[3]{\circ_{\cFunTyBX {\cFunTy {#1} {#2}} {\cFunTy {#2} {#3}} {\cFunTy {#1} {#3}}}}
\newcommand{\cFunComp}[2]{({#1} \circ {#2})}
\newcommand{\cFunCompX}[2]{#1 \circ {#2}}
\newcommand{\cRestrictPC}[2]{|_{\cFunTyBX {\cFunTy {#1} {#2}} {\cSetTy {#1}} {\cFunTy {#1} {#2}}}}
\newcommand{\cRestrict}[2]{(#1 |_{#2})}
\newcommand{\cRestrictX}[2]{#1 |_{#2}}
\newcommand{\cTotal}[1]{\mName{TOTAL}(#1)}
\newcommand{\cTotalB}[1]{\mName{TOTAL2}(#1)}
\newcommand{\cSurj}[1]{\mName{SURJ}(#1)}
\newcommand{\cSurjB}[1]{\mName{SURJ2}(#1)}
\newcommand{\cInj}[1]{\mName{INJ}(#1)}
\newcommand{\cInjB}[1]{\mName{INJ2}(#1)}
\newcommand{\cBij}[1]{\mName{BIJ}(#1)}
\newcommand{\cDistinctL}[1]{\mName{DISTINCT}(#1)}  
\newcommand{\cFunAbsQTy}[3]{\cFunAbs {#1} {#2} {#3}}
\newcommand{\cFunAbsQTyX}[3]{\cFunAbsX {#1} {#2} {#3}}
\newcommand{\cForallQTy}[3]{\cForall {#1} {#2} {#3}}
\newcommand{\cForallQTyX}[3]{\cForallX {#1} {#2} {#3}}
\newcommand{\cForallQTyBX}[5]{\cForallBX {#1} {#2} {#3} {#4} {#5}}
\newcommand{\cForsomeQTy}[3]{\cForsome {#1} {#2} {#3}}
\newcommand{\cForsomeQTyX}[3]{\cForsomeX {#1} {#2} {#3}}
\newcommand{\cDefDesQTy}[3]{\cDefDes {#1} {#2} {#3}}
\newcommand{\cDefDesQTyX}[3]{\cDefDesX {#1} {#2} {#3}}
\newcommand{\cIsDefInQTy}[2]{({#1} \downarrow {#2})}
\newcommand{\cIsDefInQTyX}[2]{{#1} \downarrow {#2}}
\newcommand{\cIsUndefInQTy}[2]{({#1} \uparrow {#2})}
\newcommand{\cFunQTyPC}[2]{\rightarrow_{\cFunTyBX {\cSetTy {#1}} {\cSetTy {#2}} {\cSetTy {\cFunTyX {#1} {#2}}}}}
\newcommand{\cFunQTy}[2]{\cFunTy {#1} {#2}}
\newcommand{\cFunQTyX}[2]{\cFunTyX {#1} {#2}}
\newcommand{\cProdQTyPC}[2]{\times_{\cFunTyBX {\cSetTy {#1}} {\cSetTy {#2}} {\cSetTy {\cProdTyX {#1} {#2}}}}}
\newcommand{\cProdQTy}[2]{\cProdTy {#1} {#2}}
\newcommand{\cProdQTyX}[2]{\cProdTyX {#1} {#2}}
\newcommand{\cSetQTy}[1]{{\cal P}(#1)}
\newcommand{\cTotalOn}[3]{\textsf{TOTAL-ON}(#1,#2,#3)}
\newcommand{\cSurjOn}[3]{\textsf{SURJ-ON}(#1,#2,#3)}
\newcommand{\cInjOn}[2]{\textsf{INJ-ON}(#1,#2)}
\newcommand{\cInjOnB}[3]{\textsf{INJ-ON2}(#1,#2,#3)}
\newcommand{\cBijOn}[3]{\textsf{BIJ-ON}(#1,#2,#3)}
\newcommand{\cInf}[1]{\mName{INF}(#1)}
\newcommand{\cFin}[1]{\mName{FIN}(#1)}
\newcommand{\cCount}[1]{\mName{COUNT}(#1)}
\newcommand{\cSequencesPC}[2]{\mName{sequences}_{\cSetTy {\cFunTyX {#1} {#2}}}}
\newcommand{\cSeqQTy}[1]{\mSeqlike{#1}}
\newcommand{\cStreamsPC}[2]{\mName{streams}_{\cSetTy {\cFunTyX {#1} {#2}}}}
\newcommand{\cSeqInfQTy}[1]{\mSeq{#1}}
\newcommand{\cListsPC}[2]{\mName{lists}_{\cSetTy {\cFunTyX {#1} {#2}}}}
\newcommand{\cSeqFinQTy}[1]{\mList{#1}}
\newcommand{\cConsPC}[2]{\mName{cons}_{\cFunTyBX {#2} {\cFunTy {#1} {#2}} {\cFunTy {#1} {#2}}}}
\newcommand{\cCons}[2]{({#1} :: {#2})}
\newcommand{\cConsX}[2]{{#1} :: {#2}}
\newcommand{\cHdPC}[2]{\mName{hd}_{\cFunTyX {\cFunTy {#1} {#2}} {#2}}}
\newcommand{\cTlPC}[2]{\mName{tl}_{\cFunTyX {\cFunTy {#1} {#2}} {\cFunTy {#1} {#2}}}}
\newcommand{\cNilPC}[2]{\mName{nil}_{\cFunTyX {#1} {#2}}}
\newcommand{\cEmpListPC}[2]{{\mList{\;}}_{\cFunTyX {#1} {#2}}}
\newcommand{\cListL}[1]{\mList{#1}}  
\newcommand{\cLenPC}[2]{\mName{len}_{\cFunTyX {\cFunTy {#1} {#2}} {#1}}}
\newcommand{\cLen}[1]{\mAbs {#1}}
\newcommand{\cAppendPC}[2]{\mName{++}_{\cFunTyBX {\cFunTy {#1} {#2}} {\cFunTy {#1} {#2}} {\cFunTy {#1} {#2}}}}
\newcommand{\cNlistsPC}[2]{\mName{nlists}_{\cFunTyX {#1} {\cSetTy {\cFunTyX {#1} {#2}}}}}
\newcommand{\cSeqNFinQTy}[2]{\mList{#1}^{#2}}
\newcommand{\cProd}[4]{\Big( \prod\limits_{{#1} = {#2}}^{#3} {#4} \Big)}
\newcommand{\cProdX}[4]{\prod\limits_{{#1} = {#2}}^{#3} {#4}}
\newtheorem{thm}{Theorem}[section]
\newtheorem{lem}[thm]{Lemma}
\newtheorem{thydef}[thm]{Theory Definition}
\newtheorem{thyext}[thm]{Theory Extension}
\newtheorem{indtypethyext}[thm]{Inductive Type Theory Extension}
\newtheorem{devdef}[thm]{Development Definition}
\newtheorem{devext}[thm]{Development Extension}
\newtheorem{thytransdef}[thm]{Theory Translation Definition}
\newtheorem{thytransext}[thm]{Theory Translation Extension}
\newtheorem{devtransdef}[thm]{Development Translation Definition}
\newtheorem{devtransext}[thm]{Development Translation Extension}
\newtheorem{deftransport}[thm]{Definition Transportation}
\newtheorem{thmtransport}[thm]{Theorem Transportation}
\newtheorem{grouptransport}[thm]{Group Transportation}
\newenvironment{theory-def}[5]
{
\color{brown!90!black}
\begin{thydef}[#1]\em
\noindent
\begin{itemize} \setlength{\itemsep}{0pt}
\item[]\hspace{-3ex}\textbf{Name:} #2
\item[]\hspace{-3ex}\textbf{Base types:} #3
\item[]\hspace{-3ex}\textbf{Constants:} #4
\item[]\hspace{-3ex}\textbf{Axioms:}
\end{itemize}
 #5
\end{thydef} 
}
\newenvironment{theory-ext}[6]
{
\color{brown!90!black}
\begin{thyext}[#1]\em
\noindent
\begin{itemize} \setlength{\itemsep}{0pt}
\item[]\hspace{-3ex}\textbf{Name:} #2
\item[]\hspace{-3ex}\textbf{Extends\ } #3
\item[]\hspace{-3ex}\textbf{New base types:} #4
\item[]\hspace{-3ex}\textbf{New constants:} #5
\item[]\hspace{-3ex}\textbf{New axioms:}
\end{itemize}
#6
\end{thyext}
}
\newenvironment{ind-type-theory-ext}[5]
{
\color{brown!90!black}
\begin{indtypethyext}[#1]\em
\noindent
\begin{itemize} \setlength{\itemsep}{0pt}
\item[]\hspace{-3ex}\textbf{Name:} #2
\item[]\hspace{-3ex}\textbf{Extends\ } #3
\item[]\hspace{-3ex}\textbf{New base type:} #4
\item[]\hspace{-3ex}\textbf{Constructors:}
\end{itemize}
#5
\end{indtypethyext}
}
\newenvironment{dev-def}[4]
{
\color{brown!90!black}
\begin{devdef}[#1]\em
\noindent
\begin{itemize} \setlength{\itemsep}{0pt}
\item[]\hspace{-3ex}\textbf{Name:} #2
\item[]\hspace{-3ex}\textbf{Bottom theory:} #3
\item[]\hspace{-3ex}\textbf{Definitions and theorems:}
\end{itemize}
#4
\end{devdef}
}
\newenvironment{dev-ext}[4]
{
\color{brown!90!black}
\begin{devext}[#1]\em
\noindent
\begin{itemize} \setlength{\itemsep}{0pt}
\item[]\hspace{-3ex}\textbf{Name:} #2
\item[]\hspace{-3ex}\textbf{Extends\ } #3
\item[]\hspace{-3ex}\textbf{New definitions and theorems:}
\end{itemize}
#4
\end{devext}
}
\newenvironment{theory-trans-def}[6]
{
\color{brown!90!black}
\begin{thytransdef}[#1]\em
\noindent
\begin{itemize} \setlength{\itemsep}{0pt}
\item[]\hspace{-3ex}\textbf{Name:} #2
\item[]\hspace{-3ex}\textbf{Source theory:} #3
\item[]\hspace{-3ex}\textbf{Target theory:} #4
\item[]\hspace{-3ex}\textbf{Base type mapping:}
\end{itemize}
#5
\begin{itemize}
\item[]\hspace{-3ex}\textbf{Constant mapping:}
\end{itemize}
#6
\end{thytransdef}
}
\newenvironment{theory-trans-ext}[7]
{
\color{brown!90!black}
\begin{thytransext}[#1]\em
\noindent
\begin{itemize} \setlength{\itemsep}{0pt}
\item[]\hspace{-3ex}\textbf{Name:} #2
\item[]\hspace{-3ex}\textbf{Extends\ } #3
\item[]\hspace{-3ex}\textbf{New source theory:} #4
\item[]\hspace{-3ex}\textbf{New target theory:} #5
\item[]\hspace{-3ex}\textbf{New base type mapping:}
\end{itemize}
#6
\begin{itemize}
\item[]\hspace{-3ex}\textbf{New constant mapping:}
\end{itemize}
#7
\end{thytransext}
}
\newenvironment{dev-trans-def}[6]
{
\color{brown!90!black}
\begin{devtransdef}[#1]\em
\noindent
\begin{itemize} \setlength{\itemsep}{0pt}
\item[]\hspace{-3ex}\textbf{Name:} #2
\item[]\hspace{-3ex}\textbf{Source development:} #3
\item[]\hspace{-3ex}\textbf{Target development:} #4
\item[]\hspace{-3ex}\textbf{Base type mapping:}
\end{itemize}
#5
\begin{itemize}
\item[]\hspace{-3ex}\textbf{Constant mapping:}
\end{itemize}
#6
\end{devtransdef}
}
\newenvironment{dev-trans-ext}[6]
{
\color{brown!90!black}
\begin{devtransext}[#1]\em
\noindent
\begin{itemize} \setlength{\itemsep}{0pt}
\item[]\hspace{-3ex}\textbf{Name:} #2
\item[]\hspace{-3ex}\textbf{Extends\ } #3
\item[]\hspace{-3ex}\textbf{New source development:} #4
\item[]\hspace{-3ex}\textbf{New target development:} #5
\item[]\hspace{-3ex}\textbf{New defined constant mapping:}
\end{itemize}
#6
\end{devtransext}
}
\newenvironment{def-transport}[9]
{
\color{brown!90!black}
\begin{deftransport}[#1]\em
\noindent
\begin{itemize} \setlength{\itemsep}{0pt}
\item[]\hspace{-3ex}\textbf{Name:} #2
\item[]\hspace{-3ex}\textbf{Source development:} #3
\item[]\hspace{-3ex}\textbf{Target development:} #4
\item[]\hspace{-3ex}\textbf{Development morphism:} #5
\item[]\hspace{-3ex}\textbf{Definition:}
\end{itemize}
#6
\begin{itemize}
\item[]\hspace{-3ex}\textbf{Transported definition:}
\end{itemize}
#7
\begin{itemize} \setlength{\itemsep}{0pt}
\item[]\hspace{-3ex}\textbf{New target development:} #8
\item[]\hspace{-3ex}\textbf{New development morphism:} #9
\end{itemize}
\end{deftransport}
}
\newenvironment{thm-transport}[8]
{
\color{brown!90!black}
\begin{thmtransport}[#1]\em
\noindent
\begin{itemize} \setlength{\itemsep}{0pt}
\item[]\hspace{-3ex}\textbf{Name:} #2
\item[]\hspace{-3ex}\textbf{Source development:} #3
\item[]\hspace{-3ex}\textbf{Target development:} #4
\item[]\hspace{-3ex}\textbf{Development morphism:} #5
\item[]\hspace{-3ex}\textbf{Theorem:}
\end{itemize}
#6
\begin{itemize}
\item[]\hspace{-3ex}\textbf{Transported theorem:}
\end{itemize}
#7
\begin{itemize}
\item[]\hspace{-3ex}\textbf{New target development:} #8
\end{itemize}
\end{thmtransport}
}
\newenvironment{group-transport}[9]
{
\color{brown!90!black}
\begin{grouptransport}[#1]\em
\noindent
\begin{itemize} \setlength{\itemsep}{0pt}
\item[]\hspace{-3ex}\textbf{Name:} #2
\item[]\hspace{-3ex}\textbf{Source development:} #3
\item[]\hspace{-3ex}\textbf{Target development:} #4
\item[]\hspace{-3ex}\textbf{Development morphism:} #5
\item[]\hspace{-3ex}\textbf{Definitions and theorems:}
\end{itemize}
#6
\begin{itemize}
\item[]\hspace{-3ex}\textbf{Transported definitions and theorems:}
\end{itemize}
#7
\begin{itemize} \setlength{\itemsep}{0pt}
\item[]\hspace{-3ex}\textbf{New target development:} #8
\item[]\hspace{-3ex}\textbf{New development morphism:} #9
\end{itemize}
\end{grouptransport}
}
\newcommand{\cCatApp}[2]{({#1}{#2})}
\newcommand{\cCatAppX}[2]{{#1}{#2}}
\newcommand{\cSetCatApp}[2]{({#1}{#2})}
\newcommand{\cIterCat}[4]{\Big( \underset{{#1} = {#2}}{\overset{#3}{\mbox{\large \textsf{cat}}}} {#4} \Big)}
\newcommand{\cSetProdPC}[4]{{#1}_{\cFunTyX {\cFunTy {\cProdTy {#2} {#3}} {#4}} {\cFunTy {\cProdTy {\cSetTy {#2}} {\cSetTy {#3}}} {\cSetTy {#4}}}}}
\newcommand{\cFunCompPairPC}[3]{\circ_{\cFunTyX {\cProdTy {\cFunTy {#1} {#2}} {\cFunTy {#2} {#3}}} {\cFunTy {#1} {#3}}}}
\newcommand{\cFunAppPairPC}[2]{\bullet_{\cFunTyX {\cProdTy {\cFunTy {#1} {#2}} {#1}} {#2}}}
\newcommand{\cMonoid}[3]{\mName{MONOID}(#1,#2,#3)}
\newcommand{\cComMonoid}[3]{\textsf{COM-MONOID}(#1,#2,#3)}
\newcommand{\cMonAction}[5]{\textsf{MON-ACTION}(#1,#2,#3,#4,#5)}
\newcommand{\cMonHomom}[7]{\textsf{MON-HOMOM}(#1,#2,#3,#4,#5,#6,#7)}
\title{
{\bf {\LARGE Monoid Theory in Alonzo}}\\[0.5ex]
{\bf {\large A Little Theories Formalization in Simple Type Theory}}
}
\author{\large 
William M. Farmer%
\thanks{Address: Department of Computing and Software, McMaster
  University, 1280 Main Street West, Hamilton, Ontario L8S 4L7,
  Canada.  Email: {\texttt{wmfarmer@mcmaster.ca},
    \texttt{yankovsd@mcmaster.ca}.}}
\ and Dennis Y. Zvigelsky\footnotemark[1]
}
\date{October 31, 2025}
\begin{document}

\maketitle

\vspace{-3ex}

\begin{abstract}
\emph{Alonzo} is a practice-oriented classical higher-order version of
predicate logic that extends first-order logic and that admits
undefined expressions.  Named in honor of Alonzo Church, Alonzo is
based on Church's type theory, Church's formulation of simple type
theory.  The \emph{little theories method} is a method for formalizing
mathematical knowledge as a \emph{theory graph} consisting of
\emph{theories} as nodes and \emph{theory morphisms} as directed
edges.  The development of a mathematical topic is done in the
``little theory'' in the theory graph that has the most convenient
level of abstraction and the most convenient vocabulary, and then the
definitions and theorems produced in the development are transported,
as needed, to other theories via the theory morphisms in the theory
graph.

The purpose of this paper is to illustrate how a body of mathematical
knowledge can be formalized in Alonzo using the little theories
method.  This is done by formalizing \emph{monoid theory} --- the body
of mathematical knowledge about monoids --- in Alonzo.  Instead of
using the \emph{standard approach to formal mathematics} in which
mathematics is done with the help of a proof assistant and all details
are formally proved and mechanically checked, we employ an
\emph{alternative approach} in which everything is done within a
formal logic but proofs are not required to be fully formal.  The
standard approach focuses on \emph{certification}, while this
alternative approach focuses on \emph{communication} and
\emph{accessibility}.

\medskip

\noindent
\textbf{Keywords:} formal mathematics, simple type theory, little
theories method, monoids, theory graphs, mathematical knowledge
management, alternative approach to formal mathematics.

\medskip

\noindent
\textbf{MSC classification codes:} 03B16, 03B38, 68V20, 68V30.

\end {abstract}

\newpage

\setcounter{tocdepth}{1} 
\tableofcontents

\newpage

\section{Introduction}\label{sec:intro}

\emph{Formal mathematics} is mathematics done within a formal logic.
\emph{Formalization} is the act of expressing mathematical knowledge
in a formal logic.  One of the chief benefits of formal mathematics is
that a body of mathematical knowledge can be formalized as a precise,
rigorous, and highly organized~structure.  This structure records the
logical relationships between mathematical concepts and facts, how
these concepts and facts are expressed in a given theory, and how one
theory is related to another.  Since it is based on a formal logic, it
can be developed and analyzed using software.

An attractive and powerful method for organizing mathematical
knowledge is the \emph{little theories method}~\cite{FarmerEtAl92b}.
A body of mathematical knowledge is represented in the form of a
\emph{theory graph}~\cite{KohlhaseEtAl10} consisting of
\emph{theories} as nodes and \emph{theory morphisms} as directed
edges.  Each mathematical topic is developed in the ``little theory''
in the theory graph that has the most convenient level of abstraction
and the most convenient vocabulary.  Then the definitions and theorems
produced in the development are transported, as needed, from this
abstract theory to other, usually more concrete, theories in the graph
via the theory morphisms in the graph.

The \emph{standard approach to formal mathematics} focuses on
\emph{certification}: Mathematics is done with the help of a proof
assistant and all details are formally proved and mechanically
checked.  We present in Section~\ref{sec:alternative-approach} an
\emph{alternative approach to formal mathematics}, first introduced
in~\cite{Farmer23b}, that focuses on two other goals:
\emph{communication} and \emph{accessibility}.  The idea is that
everything is done within a formal logic but proofs are not required
to be fully formal and the entire development is optimized for
communication and accessibility.  We believe that formal mathematics
can be made more useful, accessible, and natural to a wider range of
mathematics practitioners --- mathematicians, computing professionals,
engineers, and scientists who use mathematics in their work --- by
pursuing this alternative approach.

The purpose of this paper is to illustrate how a body of mathematical
knowledge can be formalized in Alonzo~\cite{Farmer25}, a
practice-oriented classical higher-order logic that extends
first-order logic, using the little theories method and the
alternative approach to formal mathematics.  Named in honor of Alonzo
Church, Alonzo is based on Church's type theory~\cite{Church40},
Church's formulation of simple type theory~\cite{Farmer08}, and is
closely related to Peter Andrews' {\qzero}~\cite{Andrews02};
{\qzerou}~\cite{Farmer08a}, a version of {\qzero} with undefined
expressions; and LUTINS~\cite{Farmer90,Farmer93b,Farmer94}, the logic
of the IMPS proof assistant~\cite{FarmerEtAl93,FarmerEtAl98b}.  Unlike
traditional predicate logics, Alonzo admits partial functions and
undefined expressions in accordance with the approach employed in
mathematical practice that we call the \emph{traditional approach to
undefinedness}~\cite{Farmer04}.  Since partial functions naturally
arise from theory morphisms~\cite{Farmer94}, the little theories
method works best with a logic like Alonzo that supports partial
functions.

Alonzo has a simple syntax with a \emph{formal notation} for machines
and a \emph{compact notation} for humans that closely resembles the
notation found in mathematical practice.  The compact notation is
defined by the extensive set of \emph{notational definitions and
conventions} given in~\cite{Farmer25}.  Alonzo has two semantics, one
for mathematics based on \emph{standard models} and one for logic
based on Henkin-style \emph{general models}~\cite{Henkin50}.  By
virtue of its syntax and semantics, Alonzo is exceptionally well
suited for expressing and reasoning about mathematical ideas and for
specifying mathematical structures.  A~brief overview of the syntax
and semantics of Alonzo is given in Section~\ref{sec:alonzo}.
See~\cite{Farmer25} for a full presentation of Alonzo.

We have chosen \emph{monoid theory} --- the concepts, properties, and
facts about monoids --- as a sample body of mathematical knowledge to
formalize in Alonzo.  A \emph{monoid} is a mathematical structure
consisting of a nonempty set, an associative binary function on the
set, and a member of the set that is an identity element with respect
to the function.  Monoids are abundant in mathematics and computing.
Single-object categories are monoids.  Groups are monoids in which
every element has an inverse.  And several algebraic structures, such
as rings, fields, Boolean algebras, and vector spaces, contain monoids
as substructures.

Since a monoid is a significantly simpler algebraic structure than a
group, monoid theory lacks the rich structure of group theory.  We are
formalizing monoid theory in Alonzo, instead of group theory, since it
has just enough structure to adequately illustrate how a body of
mathematical knowledge can be formalized in Alonzo.  We will see that
employing the little theories method in the formalization of monoid
theory in Alonzo naturally leads to a robust theory graph.

Alonzo is equipped with a set of \emph{mathematical knowledge modules}
(\emph{modules} for short) for constructing various kinds of
mathematical knowledge units.  For example, it has modules for
constructing ``theories'' and ``theory morphisms''.  A \emph{language}
(or \emph{signature}) of Alonzo is a pair $L = (\sB,\sC)$, where $\sB$
is a finite set of base types and $\sC$ is a set of constants, that
specifies a set of expressions.  A \emph{theory} of Alonzo is a pair
$T = (L,\Gamma)$ where $L$ is a language called the \emph{language of
$T$} and $\Gamma$ is a set of sentences of $L$ called the \emph{axioms
of $T$}.  And a \emph{theory morphism} of Alonzo from a theory $T_1$
to a theory $T_2$ is a mapping of the expressions of $T_1$ to the
expressions of $T_2$ such that (1) base types are mapped to types and
closed quasitypes (expressions that denote sets of values), (2)
constants are mapped to closed expressions of appropriate type, and
(3)~valid sentences are mapped to valid sentences.

Alonzo also has modules for constructing ``developments'' and
``development morphisms''.  A \emph{theory development} (or
\emph{development} for short) of Alonzo is a pair $D = (T,\Xi)$ where
$T$ is a theory and $\Xi$ is a (possibly empty) sequence of
definitions and theorems presented, respectively, as definition and
theorem packages (see~\cite[Section 12.1]{Farmer25}).  $T$ is called
the \emph{bottom theory} of $D$, and $T'$, the extension of $T$
obtained by adding the definitions in $\Xi$ to~$T$, is called the
\emph{top theory} of $D$.  We say that $D$ is a \emph{development
of~$T$}.  A \emph{development morphism} from a development $D_1$ to a
development $D_2$ is a partial mapping from the expressions of $D_1$
to the expressions of $D_2$ that restricts to a theory morphism from
the bottom theory of $D_1$ to the bottom theory of $D_2$ and that
canonically extends to a theory morphism from the top theory of $D_1$
to the top theory of $D_2$ (see~\cite[Section 14.4.1]{Farmer25}).
Theories and theory morphisms are special cases of developments and
development morphisms, respectively, since we identify a theory $T$
with the trivial development $(T,\mList{\,})$.

The modules for constructing developments and development morphisms
provide the means to represent knowledge in the form of a
\emph{development graph}, a richer kind of theory graph, in which the
nodes are developments and the directed edges are development
morphisms.  Alonzo includes modules for transporting definitions and
theorems from one development to another via development morphisms.
The design of Alonzo's module system is inspired by the IMPS
implementation of the little theories method
\cite{FarmerEtAl92b,FarmerEtAl93,FarmerEtAl98b}.

The formalization of monoid theory presented in this paper exemplifies
an \emph{alternative approach to formal mathematics}.  We validate
the definitions and theorems in a development using traditional
(nonformal) mathematical proof.  However, we extensively use the
axioms, rules of inference, and metatheorems of $\mathfrak{A}$ --- the
formal proof system for Alonzo presented in~\cite{Farmer25} which is
derived from Andrews' proof system for {\qzero}~\cite{Andrews02} ---
in these traditional proofs.  The proofs are not included in the
modules used to construct developments.  Instead, they are given
separately in Appendix~\ref{app:validation}.

We produced the formalization of monoid theory with just a minimal
amount of software support.  We used the set of LaTeX macros and
environments for Alonzo given in~\cite{Farmer23a} plus a few macros
created specifically for this paper.  The macros are for presenting
Alonzo types and expressions in both the formal and compact notations.
The environments are for presenting Alonzo mathematical knowledge
modules.  The Alonzo modules are printed in brown~color.

The overarching goal of this paper is to demonstrate that, using the
little theories method and the alternative approach to formal
mathematics, mathematical knowledge can be very effectively formalized
in a version of simple type theory like Alonzo.  Specifically, we
want to show the following:

\be

  \item How the little theories method can be used to organize
    mathematical knowledge so that clarity is maximized and redundancy
    is minimized.

  \item How formal libraries of mathematical knowledge that prioritize
    communication over certification can be built using the
    alternative approach to formal mathematics with tools that are
    much simpler to learn and use than a proof assistant.

  \item How Alonzo is exceptionally well suited for expressing and
    reasoning about mathematical ideas and for specifying mathematical
    structures in a direct and natural manner.

\ee

The paper is organized as follows.  We present in
Section~\ref{sec:alternative-approach} the alternative approach to
formal mathematics and argue that this kind of approach can better
serve the average mathematics practitioner than the standard approach.
Section~\ref{sec:alonzo} gives a brief presentation of the syntax and
semantics of Alonzo.
Sections~\ref{sec:monoids}--\ref{sec:monoids-with-reals} present
developments of theories of monoids, commutative monoids,
transformation monoids, monoid actions, monoid homomorphisms, and
monoids over real number arithmetic plus some supporting developments.
These developments have been constructed to be illustrative; they are
not intended to be complete in any sense.
Sections~\ref{sec:monoids}--\ref{sec:monoids-with-reals} also present
various development morphisms that are used to transport definitions
and theorems from one development to another.
Section~\ref{sec:strings} shows how our formalization of monoid theory
can support a theory of strings.  Related work is discussed in
Section~\ref{sec:related-work}.  The paper concludes in
Section~\ref{sec:conc} with a summary and some final remarks.  The
definitions and theorems of the developments we have constructed are
validated by traditional mathematical proofs presented in
Appendix~\ref{app:validation}.  Appendix~\ref{app:misc-thms} contains
some miscellaneous theorems needed for the proofs in
Appendix~\ref{app:validation}.

\newpage

\section{Alternative Approach to Formal Mathematics}
\label{sec:alternative-approach}

A \emph{formal logic} (\emph{logic} for short) is a \emph{family of
languages} such that:

\be

  \item The languages of the logic have a \emph{common precise syntax}.

  \item The languages of the logic have a \emph{common precise
  semantics with a notion of logical consequence}.

  \item There is a \emph{sound formal proof system} for the logic in
    which proofs can be syntactically constructed.

\ee 
Examples of formal logics for mathematics are the various versions of
first-order logic, set theory, simple type theory, and dependent type
theory.

There are five big benefits of formal mathematics, i.e., doing
mathematics within a formal logic.

First, \emph{mathematics can be done with greater rigor}.  All
mathematical ideas are expressed and reasoned about in a theory $T$ of
a formal logic.  Mathematical concepts and statements are expressed as
expressions and sentences of the language of $T$.  All of these
expressions and sentences have a precise, unambiguous meaning.  The
assumptions underlying the reasoning about the mathematical ideas are
made explicit as axioms of the theory.  The theorems of theory are
precisely defined as the logical consequences of the axioms of the
theory.  And, finally, the theory is constructed so that it contains
only the vocabulary and assumptions that are needed for the task at
hand; irrelevant details are abstracted away.

Second, \emph{conceptual errors can be systematically discovered}.  In
formal mathematics, all concepts and statements must be expressed in a
language of a formal logic that has a precise semantics.  The process
of expressing mathematical ideas in a formal logic naturally leads to
many conceptual errors being caught similarly to how type errors are
caught in a modern programming language by type checking.  Thus
conceptual errors can be discovered systematically in formal
mathematics in a way that is largely not possible in traditional
mathematics.  As a result, formal mathematics often yields a deeper
understanding of the mathematics being explored than traditional
mathematics.

Third, \emph{mathematics can be done with software support}.  Since
the languages of a formal logic have a precise common syntax, the
expressions and sentences of a language can be represented as data
structures.  The expressions and sentences can then be analyzed,
manipulated, and processed via their representations as data
structures.  This, in turn, enables the study, discovery,
communication, and certification of mathematics to be done with the
aid of software.  Since the languages also have a precise common
semantics, there is a precise basis for verifying the correctness of
this software.

Fourth, \emph{results can be mechanically checked}.  Formal proofs can
be represented as data structures, and software can be used to check
that one of these data structures represents an actual proof in the
formal proof system of the logic.  Software can also be used to help
construct the formal proofs.  Since the software needed to check the
correctness of the formal proofs is often very simple and easily
verified itself, it is possible to verify the correctness of the
formal proofs with a very high level of assurance.

Fifth, \emph{we can regard mathematical knowledge as a formal
structure consisting of a network of interconnected theories}.  A
library of mathematical knowledge that represents this formal
structure can be built by creating theories, defining new concepts,
stating and proving theorems, and connecting one theory to another
with theory morphisms that map the theorems of one theory to the
theorems of another theory.  The knowledge embodied in a structured
library of this kind can be studied, managed, searched, and presented
using software.

The benefits of formal mathematics are huge.  Greater rigor and
discovering conceptual errors have been principal goals of
mathematicians for thousands of years.  Software support can greatly
extend the reach and productivity of mathematics practitioners.
Mechanically checked results can drive mathematics forward in areas
where the ideas are poorly understood (often due to their novelty) or
highly complex.  And mathematical knowledge as a formal structure can
enable the techniques and tools of mathematics and computing to be
applied to mathematical knowledge itself.

The standard approach to formal mathematics, in which mathematics is
done with the help of a proof assistant and all details are formally
proved and mechanically checked, has three major strengths:

\be

  \item It achieves all five benefits of formal mathematics mentioned
    above.

  \item All theorems are verified by machine-checked formal proofs.
    Thus there is a very high level of assurance that the results produced
    are correct.

  \item There are several powerful proof assistants available, such as
    HOL~\cite{GordonMelham93}, HOL Light~\cite{Harrison09}, ,
    Isabelle/HOL~\cite{Paulson94}, Lean~\cite{deMouraEtAl15},
    Metamath/ZFC~\cite{MetamathWebSite},
    Mizar~\cite{NaumowiczKornilowicz09}, and Rocq (formerly
    Coq)~\cite{RocqWebSite}, that support the approach.

\ee

\newpage

\noindent
It also has two important weaknesses:

\be

  \item It prioritizes certification over communication.  For the
    average mathematics practitioner, communicating mathematical ideas
    is usually much more important than certifying mathematical
    results when the mathematics is well understood.

  \item It is not accessible to the great majority of mathematics
    practitioners.  Having to learn a strange logic and work with a
    complex proof assistant that utilizes unfamiliar ways of
    expressing and reasoning about mathematics is very often a bridge
    too far for the average mathematics practitioner.

\ee

We strongly believe, as an alternative to the standard approach, an
approach to formal mathematics is needed that focuses on two goals,
communication and accessibility, the weaknesses of the standard
approach.  To achieve these goals the alternative approach should
satisfy the following requirements:

\be

  \item[R1.] \emph{The underlying logic is fully formal and supports
  standard mathematical practice.}  Supporting mathematical practice
    makes the logic easier to learn and use and makes formalization a
    more natural process.

  \item[R2.] \emph{Proofs can be traditional, formal, or a combination
  of the two.}  This flexibility in how proofs are written enables
    proofs to be a vehicle for communication as well as certification.

  \item[R3.] \emph{There is support for organizing mathematical
  knowledge using the little theories method.}  This enables
    mathematical knowledge to be formalized to maximize clarity and
    minimize redundancy.

  \item[R4.] \emph{There are several levels of supporting software.}
    The levels can range from just LaTeX support to a full proof
    assistant.  The user can thus choose the level of software support
    they want to have and the level of investment in learning the
    software they want to make.

\ee

The alternative approach can achieve all five benefits of formal
mathematics mentioned above, but it cannot achieve the same level of
assurance as the standard approach that the results produced are
correct.  This is because the alternative approach prioritizes
communication and accessibility over certification.  Since most
mathematics practitioners are usually more concerned about
communication and accessibility than certification, the alternative
approach is on average a better approach to formal mathematics than
the standard approach.  This is particularly true for applications
that involve well-understood mathematics, the kind of mathematics that
arises in mathematics education and routine applications.  However,
when the certification of results is the most important concern, the
standard approach will often be a better choice than the alternative
approach.

This paper employs an implementation of the alternative approach based
on Alonzo that satisfies the first three requirements and partially
satisfies the fourth requirement.  Alonzo is a form of predicate
logic, which is widely familiar to mathematics practitioners.
Moreover, it supports the reasoning instruments that are most common
in mathematical practice including functions, sets, tuples, and lists;
mathematical structures; higher-order and restricted quantification;
definite description; theories and theory morphisms; definitional and
other kinds of conservative extensions; inductive sets; notational
definitions and conventions, and undefined expressions.  Thus Alonzo
satisfies R1 as well or better than almost any other logic.

R2 is satisfied by our implementation of the alternative approach
since proofs can be traditional or formal.  Thus communication can be
prioritized over certification in proofs when the mathematics is well
understood.  In this paper, all the proofs are traditional, but some
make use of the axioms, rules of inference, and metatheorems of
$\mathfrak{A}$, the proof system of Alonzo.  

R3 is satisfied since Alonzo is equipped with a module system for
organizing mathematical knowledge using the little theories method.

Our implementation of the alternative approach provides only the
simplest level of software support: LaTeX macros for presenting Alonzo
types and expressions and LaTeX environments for presenting Alonzo
modules.  Other levels of software support are possible; see the
discussion in Chapter~16 of~\cite{Farmer25}.  Alonzo has not been
implemented in a proof assistant, but since it is closely related to
LUTINS~\cite{Farmer90,Farmer93b,Farmer94}, the logic of the IMPS proof
assistant~\cite{FarmerEtAl93,FarmerEtAl98b}, it could be implemented
in much the same way that LUTINS is implemented in IMPS.  Thus R4 is
only partially satisfied now, but it could be fully satisfied with the
addition of more levels of software support.

The great majority of mathematics practitioners --- including
mathematicians --- are much more interested in communicating
mathematical ideas than in formally certifying mathematical results.
Hence, the alternative approach --- with support for standard
mathematical practice, traditional proofs, the little theories method,
and several levels of software --- is likely to serve the needs of the
average mathematics practitioner much better than the standard
approach.  This is especially true when the mathematical knowledge
involved is well understood (such as monoid theory) and certification
via traditional proof is adequate for the purpose at~hand.

In summary, we believe that the alternative approach is not a
replacement for the standard approach, but it would be more useful,
accessible, and natural than the standard approach for the vast
majority of mathematics practitioners.

\section{Alonzo}\label{sec:alonzo}

Alonzo is fully presented in~\cite{Farmer25}.  Due to space
limitations, we cannot duplicate the entire presentation of Alonzo in
this paper.  Ideally, the reader should be familiar with the syntax
and semantics of Alonzo presented in Chapters 4--7; the proof system
for Alonzo presented in Chapter 8 and Appendices A--C; the tables of
notational definitions found in Chapters 4, 6, 11, and 13; the
notational conventions presented in Chapters 4 and 6; and the various
kinds of (mathematical knowledge) modules of Alonzo presented in
Chapters 9, 10, 12, and 14.  However, we will give in this section a
brief presentation of the syntax and semantics of Alonzo with most of
the text taken from Chapters 4--6 of~\cite{Farmer25}.

\subsection{Syntax}

The syntax of Alonzo consists of ``types'' that denote nonempty sets
of values and ``expressions'' that either denote values (when they are
defined) or denote nothing at all (when they are undefined).  We
present the syntax of Alonzo types and expressions with the compact
notation, an ``external'' syntax intended for humans.  The reader is
referred to~\cite{Farmer25} for the formal syntax, an ``internal''
syntax intended for machines.  The compact notation for types and
expressions is given below.  Additional compact notation is introduced
using \emph{notational definitions} and \emph{notational conventions}.
A \emph{notational definition} has the form \[A \textrel{stands for}
B,\] where $A$ and $B$ are notations that present types or
expressions; it defines $A$ to be an alternate --- and usually more
compact, convenient, or standard --- notation for presenting the type
or expression that $B$ presents.  The meaning of $A$ is the meaning of
$B$.  The notational definitions are given in tables with boxes
surrounding the definitions, and the notational conventions are
assigned names of the form ``Notational Convention $n$''.

\bsp Let $\sS_{\sf bt}$\index{5s2@$\sS_{\sf bt}$}, $\sS_{\sf
  var}$\index{5s4@$\sS_{\sf var}$}, $\sS_{\sf
  con}$\index{5s3@$\sS_{\sf con}$} be fixed countably infinite sets of
symbols that will serve as names of base types, variables, and
constants, respectively.  We assume that $\sS_{\sf bt}$ contains the
symbols $A,B,C\ldots,X,Y,Z,$ etc., $\sS_{\sf var}$ contains the
symbols $a,b,c\ldots,x,y,z,$ etc., and $\sS_{\sf con}$ contains the
symbols $A,B,C\ldots,X,Y,Z,$ etc., numeric symbols, nonalphanumeric
symbols, and words in lowercase sans sarif font.\footnote{An
expression like ``$u,v,w,$ etc.''  means the set of symbols that
includes $u$, $v$, and $w$, and all possible annotated forms of $u$,
$v$, and $w$ such as $u'$, $v_1$, and~$\widetilde{w}$.}  We will
employ the following syntactic variables for these symbols as well as
types and expressions which are defined just below:
\be

  \item $\mathbf{a}, \mathbf{b}$, etc.\ range over $\sS_{\sf bt}$.

  \item $\mathbf{f}, \mathbf{g}, \mathbf{h}, \mathbf{i}, \mathbf{j},
    \mathbf{k}, \mathbf{m}, \mathbf{n}, \mathbf{u}, \mathbf{v},
    \mathbf{w}, \mathbf{x}, \mathbf{y}, \mathbf{z}$, etc.\ range over
    $\sS_{\sf var}$.

  \item $\mathbf{c}, \mathbf{d}$, etc.\ range over $\sS_{\sf con}$.

  \item $\alpha,\beta,\gamma,\delta$, etc.\ range over types.

  \item $\mathbf{A}_\alpha, \mathbf{B}_\alpha,
    \mathbf{C}_\alpha,\ldots,\mathbf{X}_\alpha, \mathbf{Y}_\alpha,
    \mathbf{Z}_\alpha$, etc.\ range over expressions of type $\alpha$.

\ee
\esp

A \emph{type} of Alonzo is a string of symbols defined inductively by
the following formation rules:
\bi

  \item[T1.] \emph{Type of truth values}: $\cBoolTy$ is a type.

  \item[T2.] \emph{Base type}: $\cBaseTy {\mathbf{a}}$ is a type.

  \item[T3.] \emph{Function type}: $\cFunTy {\alpha} {\beta}$ is a
    type.

  \item[T4.] \emph{Product type}: $\cProdTy {\alpha} {\beta}$ is a
    type.

\ei
Let $\sT$ denote the set of types of Alonzo.  We assume $\cBoolTy
\not\in \sS_{\sf bt}$.  

When there is no loss of meaning, matching pairs of parentheses in the
compact notation for types may be omitted (Notational Convention 1).
We assume that function type formation associates to the right so
that, e.g., a type of the
form \[\cFunTy{\alpha}{\cFunTy{\beta}{\gamma}}\] may be written more
simply as $\cFunTyBX{\alpha}{\beta}{\gamma}$ (Notational Convention
2).

A type $\alpha$ denotes a nonempty set $D_\alpha$ of values.  $\cB$
denotes the set $D_\cB = \bB$ of the Boolean (truth) values $\FALSE$
and $\TRUE$.  $\cFunTy{\alpha}{\beta}$ denotes some set
$D_{\cFunTyX{\alpha}{\beta}}$ of (partial and total) functions from
$D_\alpha$ to $D_\beta$.  $\cProdTy{\alpha}{\beta}$ denotes the
Cartesian product $D_{\cProdTyX{\alpha}{\beta}} = D_\alpha \times
D_\beta$.  We will use base types to denote the base domains of
mathematical structures.

An \emph{expression of type $\alpha$} of Alonzo is a string of symbols
defined inductively by the following formation rules:
\bi

  \item[E1.] \emph{Variable}: $\cVar {\mathbf{x}} {\alpha}$ is an
    expression of type~$\alpha$.

  \item[E2.] \emph{Constant}: $\cCon {\mathbf{c}} {\alpha}$ is an
    expression of type~$\alpha$.

  \item[E3.] \emph{Equality}: $\cEq {\mathbf{A}_\alpha}
    {\mathbf{B}_\alpha}$ is an expression of type~$\cBoolTy$.

  \item[E4.] \emph{Function application}: $\cFunApp
    {\mathbf{F}_{\cFunTy {\alpha} {\beta}}} {\mathbf{A}_\alpha}$ is an
    expression of type~$\beta$.

  \item[E5.] \emph{Function abstraction}: $\cFunAbs {\mathbf{x}}
    {\alpha} {\mathbf{B}_\beta}$ is an expression of type~$\cFunTy
    {\alpha} {\beta}$.

  \item[E6.] \emph{Definite description}: $\cDefDes {\mathbf{x}}
    {\alpha} {\mathbf{A}_{\cBoolTy}}$ is an expression of
    type~$\alpha$ where $\alpha \not= \cBoolTy$.

  \item[E7.] \emph{Ordered pair}: $\cOrdPair {\mathbf{A}_\alpha}
    {\mathbf{B}_\beta}$ is an expression of type~$\cProdTy {\alpha}
    {\beta}$.

\ei
Let $\sE$ denote the set of expressions of Alonzo.  A \emph{formula}
is an expression of type $\cBoolTy$, and a \emph{sentence} is a closed
formula.

When there is no loss of meaning, matching pairs of parentheses in
expressions may be omitted (Notational Convention 3).  We assume that
function application formation associates to the left so that, e.g.,
an expression of the form
$\cFunApp{\cFunApp{\mathbf{G}_{\cFunTyBX{\alpha}{\beta}{\gamma}}}
  {\mathbf{A}_\alpha}}{\mathbf{B}_\beta}$ may be written more simply
as $\cFunAppBX {\mathbf{G}_{\cFunTyBX{\alpha}{\beta}{\gamma}}}
{\mathbf{A}_\alpha} {\mathbf{B}_\beta}$ (Notational Convention 4).
When the type $\alpha$ of a constant $\cCon {\mathbf{c}} {\alpha}$ is
known from the context of the constant, we will very often write the
constant as simply $\mathbf{c}$ (Notational Convention 5).  A variable
$\cVar{\mathbf{x}}{\alpha}$ occurring in the body $\mathbf{B}_\beta$
of $\cFunAbsX{\mathbf{x}}{\alpha}{\mathbf{B}_\beta}$ or in the body
$\mathbf{A}_\cB$ of $\cDefDesX{\mathbf{x}}{\alpha}{\mathbf{A}_\cB}$
may be written as just $\mathbf{x}$ if there is no resulting ambiguity
(Notational Convention 6).  So, for example,
$\cFunAbsX{\mathbf{x}}{\alpha}{\cVar{\mathbf{x}}{\alpha}}$ may be
written more simply as $\cFunAbsX{\mathbf{x}}{\alpha}{\mathbf{x}}$.
We will employ this convention for the other variable binders of
Alonzo introduced later by notational definitions (Notational
Convention~7).  A variable $\cVar{\mathbf{x}}{\alpha}$ occurring in
$\mathbf{B}_\beta$ may be written as just $\mathbf{x}$ if the type
$\alpha$ is known from the context of the occurrence of
$\cVar{\mathbf{x}}{\alpha}$ in $\mathbf{B}_\beta$ (Notational
Convention~8).  For example, $\cEqX {\mathbf{A}_\alpha} {\cVar
  {\textbf{x}} {\alpha}}$ may be written as $\cEqX {\mathbf{A}_\alpha}
{\textbf{x}}$.

An expression of type $\alpha$ is always defined if $\alpha = \cB$ and
may be either defined or undefined if $\alpha \not= \cB$.  If defined,
it denotes a value in $D_\alpha$, the denotation of $\alpha$.  If
undefined, it denotes nothing at all.  We will use constants to denote
the distinguished values of mathematical structures.

As previously defined, a \emph{language} (or \emph{signature}) of
Alonzo is a pair $L = (\sB,\sC)$ where $\sB$ is a finite set of base
types and $\sC$ is a set of constants $\cCon {\mathbf{c}} {\alpha}$
where each base type occurring in $\alpha$ is a member of $\sB$.  A
type $\alpha$ is a \emph{type of $L$} if all the base types occurring
in $\alpha$ are members of $\sB$, and an expression
$\mathbf{A}_\alpha$ is an \emph{expression of $L$} if all the base
types occurring in $\mathbf{A}_\alpha$ are members of $\sB$ and all
the constants occurring in $\mathbf{A}_\alpha$ are members of $\sC$.
Let $\sT(L) \subseteq \sT$ denote the set of types of $L$ and $\sE(L)
\subseteq \sE$ denote the set of expressions of $L$.  Notice that
$\sB$ and $\sC$ may be empty, but $\sT(L)$ and $\sE(L)$ are always
nonempty since $\cB \in \sT(L)$.

\subsection{Semantics}

Let $L = (\sB,\sC)$ be a language of Alonzo.  We will now define the
semantics of $L$.

A \emph{frame} for $L$ is a collection $\sD = \mSet{D_\alpha \mid
  \alpha \in \sT(L)}$ of nonempty domains (sets) of values such that:
\bi

  \item[F1.] \emph{Domain of truth values}: $D_\cB = \bB =
    \mSet{\FALSE,\TRUE}$.

  \item[F2.] \emph{Predicate domain}: $D_{\cFunTyX{\alpha}{\cB}}$ is a
    set of \emph{some} total functions from $D_\alpha$ to $D_\cB$ for
    $\alpha \in \sT(L)$.

  \item[F3.] \emph{Function domain}: $D_{\cFunTyX{\alpha}{\beta}}$ is
    a set of \emph{some} partial and total functions from $D_\alpha$
    to $D_\beta$ for $\alpha, \beta \in \sT(L)$ with $\beta \not=
    \cB$.

  \item[F4.] \emph{Product domain}: $D_{\cProdTyX{\alpha}{\beta}} =
    D_\alpha \times D_\beta$ for $\alpha,\beta \in
    \sT(L)$.

\ei
A predicate domain $D_{\cFunTyX{\alpha}{\cB}}$ is \emph{full} if it is
the set of \emph{all} total functions from $D_\alpha$ to $D_\cB$, and
a function domain $D_{\cFunTyX{\alpha}{\beta}}$ with $\beta \not= \cB$
is \emph{full} if it is the set of \emph{all} partial and total
functions from $D_\alpha$ to $D_\beta$.  The frame is \emph{full} if
$D_{\cFunTyX{\alpha}{\beta}}$ is full for all $\alpha,\beta \in
\sT(L)$.  Notice that the only restriction on a \emph{base domain},
i.e., $D_{\mathbf{a}}$ for some $\mathbf{a} \in \sB$, is that it is
nonempty and that the frame is completely determined by its base
domains when the frame is full.  An \emph{interpretation} of $L$ is a
pair $M = (\sD,I)$ where $\sD = \mSet{D_\alpha \mid \alpha \in
  \sT(L)}$ is a frame for $L$ and $I$~is an \emph{interpretation
function} that maps each constant in $\sC$ of type $\alpha$ to an
element of $D_\alpha$.  Notice that
\[(\mSet{D_{\mathbf{a}} \mid \mathbf{a} \in \sB}, 
\mSet{I(\cCon {\mathbf{c}} {\alpha}) \mid \cCon {\mathbf{c}} {\alpha}
  \in \sC})\] is a mathematical structure.  Hence an interpretation of
a language \emph{defines} (1) a mathematical structure and (2) a
mapping of the base types and constants of the language to the base
domains and distinguished values, respectively, of the mathematical
structure.

Let $\sD = \mSet{D_\alpha \mid \alpha \in \sT(L)}$ be a frame for $L$.
An \emph{assignment into} $\sD$ is a function $\phi$ whose domain is
the set of variables of $L$ such that $\phi(\cVar{\mathbf{x}}{\alpha})
\in D_\alpha$ for each variable $\cVar{\mathbf{x}}{\alpha}$ of $L$.
Given an assignment $\phi$, a variable $\cVar{\mathbf{x}}{\alpha}$ of
$L$, and $d \in D_\alpha$, let $\phi[\cVar{\mathbf{x}}{\alpha} \mapsto
  d]$\index{23a@$\phi[\cVar{\mathbf{x}}{\alpha} \mapsto d]$} be the
assignment $\psi$ in $\sD$ such that $\psi(\cVar{\mathbf{x}}{\alpha})
= d$ and $\psi(\cVar{\mathbf{y}}{\beta}) =
\phi(\cVar{\mathbf{y}}{\beta})$ for all variables
$\cVar{\mathbf{y}}{\beta}$ of $L$ distinct from
$\cVar{\mathbf{x}}{\alpha}$.  Given an interpretation $M$ of $L$, let
$\mName{assign}(M)$ be the set of assignments into the frame of $M$.

Let $\sD = \mSet{D_\alpha \mid \alpha \in \sT(L)}$ be a frame for $L$
and $M = (\sD, I)$ be an interpretation of $L$.  $M$ is a
\emph{general model} of $L$ if there is a partial binary
\emph{valuation function} $V^{M}$ such that, for all assignments $\phi
\in \mName{assign}(M)$ and expressions $\mathbf{C}_\gamma$ of $L$, (1)
either $V^{M}_{\phi}(\mathbf{C}_\gamma) \in D_\gamma$ or
$V^{M}_{\phi}(\mathbf{C}_\gamma)$ is undefined\footnote{We write
$V^{M}_{\phi}(\mathbf{C}_\gamma)$ instead of
$V^{M}(\phi,\mathbf{C}_\gamma)$.} and (2) each of the following
conditions is satisfied:
\bi

  \item[V1.] $V^{M}_{\phi}(\cVar{\mathbf{x}}{\alpha}) =
    \phi(\cVar{\mathbf{x}}{\alpha})$.

  \item[V2.] $V^{M}_{\phi}(\cCon {\mathbf{c}} {\alpha}) = I(\cCon
    {\mathbf{c}} {\alpha})$.

  \item[V3.] $V^{M}_{\phi}(\cEqX {\mathbf{A}_\alpha}
    {\mathbf{B}_\alpha}) = \TRUE$ if $V^{M}_{\phi}(\mathbf{A}_\alpha)$
    is defined, $V^{M}_{\phi}(\mathbf{B}_\alpha)$ is defined, and
    $V^{M}_{\phi}(\mathbf{A}_\alpha) =
    V^{M}_{\phi}(\mathbf{B}_\alpha)$.  Otherwise, $V^{M}_{\phi}(\cEqX
    {\mathbf{A}_\alpha} {\mathbf{B}_\alpha}) = \FALSE$.

  \item[V4.] \bsp $V^{M}_{\phi}(\cFunAppX
    {\mathbf{F}_{\cFunTyX{\alpha}{\beta}}} {\mathbf{A}_\alpha}) =
    V^{M}_{\phi}(\mathbf{F}_{\cFunTyX{\alpha}{\beta}})
    (V^{M}_{\phi}(\mathbf{A}_\alpha))$ if
    $V^{M}_{\phi}(\mathbf{F}_{\cFunTyX{\alpha}{\beta}})$ is defined,
    $V^{M}_{\phi}(\mathbf{A}_\alpha)$ is defined, and
    $V^{M}_{\phi}(\mathbf{F}_{\cFunTyX{\alpha}{\beta}})$ is defined at
    $V^{M}_{\phi}(\mathbf{A}_\alpha)$.  Otherwise,
    $V^{M}_{\phi}(\cFunAppX {\mathbf{F}_{\cFunTyX{\alpha}{\beta}}}
    {\mathbf{A}_\alpha}) = \FALSE$ if $\beta = \cB$ and
    $V^{M}_{\phi}(\cFunAppX {\mathbf{F}_{\cFunTyX{\alpha}{\beta}}}
    {\mathbf{A}_\alpha})$ is undefined if $\beta \not= \cB$. \esp

  \item[V5.] $V^{M}_{\phi}(\cFunAbsX {\mathbf{x}} {\alpha}
    {\mathbf{B}_\beta})$ is the (partial or total) function $f \in
    D_{\cFunTyX{\alpha}{\beta}}$ such that, for each $d \in D_\alpha$,
    $f(d) = V^{M}_{\phi[\cVar{\mathbf{x}}{\alpha} \mapsto  d]}(\mathbf{B}_\beta)$ 
    if $V^{M}_{\phi[\cVar{\mathbf{x}}{\alpha} \mapsto d]}(\mathbf{B}_\beta)$ 
    is defined and $f(d)$ is undefined if
    $V^{M}_{\phi[\cVar{\mathbf{x}}{\alpha} \mapsto d]}
    (\mathbf{B}_\beta)$ is undefined.

  \item[V6.] $V^{M}_{\phi}(\cDefDesX {\mathbf{x}} {\alpha} {\mathbf{A}_\cB})$ 
    is the $d \in D_\alpha$ such that
    $V^{M}_{\phi[\cVar{\mathbf{x}}{\alpha} \mapsto d]}(\mathbf{A}_\cB) =
    \TRUE$ if there is exactly one such $d$.  Otherwise,
    $V^{M}_{\phi}(\cDefDesX {\mathbf{x}} {\alpha} {\mathbf{A}_\cB})$
    is undefined.

  \item[V7.] $V^{M}_{\phi}(\cOrdPair {\mathbf{A}_\alpha} 
    {\mathbf{B}_\beta}) =
    (V^{M}_{\phi}(\mathbf{A}_\alpha),V^{M}_{\phi}(\mathbf{B}_\beta))$
    if $V^{M}_{\phi}(\mathbf{A}_\alpha)$ and
    $V^{M}_{\phi}(\mathbf{B}_\beta)$ are defined.  Otherwise,
    $V^{M}_{\phi}(\cOrdPair {\mathbf{A}_\alpha} 
    {\mathbf{B}_\beta})$ is undefined.

\ei 
$V^{M}$ is unique when it exists.  $V^{M}_{\phi}(\mathbf{C}_\gamma)$
is called the \emph{value of $\mathbf{C}_\gamma$ in $M$ with respect
to $\phi$} when $V^{M}_{\phi}(\mathbf{C}_\gamma)$ is defined.
$\mathbf{C}_\gamma$~is said to have no value in $M$ with respect to
$\phi$ when $V^{M}_{\phi}(\mathbf{C}_\gamma)$ is undefined.

An interpretation $M = (\sD, I)$ of $L$ is a \emph{standard model} of
$L$ if $\sD$ is full.  Every standard model of $L$ is a general model
of $L$.

\subsection{Additional Compact Notation}

The compact notation for Alonzo types and expressions given above is
extended in~\cite{Farmer25} with a variety of operators, binders, and
abbreviations.  Equipped with this additional compact notation, Alonzo
becomes a practical logic in which mathematical ideas can be expressed
naturally and succinctly.  The compact notation that we need in this
paper from Chapter~6 of~\cite{Farmer25} is presented in
Tables~\ref{tab:nd-boolean}--\ref{tab:nd-quasitypes}.  To make the
notational definitions as readable as possible we have omitted
matching parentheses in the right-hand side of the definitions when
there is no loss of meaning and it is obvious where they should occur.

In Table~\ref{tab:nd-boolean}, we present notation for the truth
values and the standard Boolean operators.  The notation $\cAndPC$ is
an example of a \emph{pseudoconstant}.  It is not a real constant of
Alonzo, but it stands for an expression $\mathbf{C}_\gamma$ that can
be used just like a constant $\cCon {\mathbf{c}} {\gamma}$.  Unlike a
normal constant, $\cAndPC$ and most other pseudoconstants can be
employed in any language.  Thus they serve as logical constants.  The
same symbols that are used to write constants are used to write
pseudoconstants and parametric pseudoconstants (which are defined
below) (Notational Convention~9).

\begin{table}
\bc
\begin{tabular}{|lll|}
\hline

  $\cT$
  \index{8ca@$\cT$}
& stands for
& $\cEqX {\cFunAbs {x} {\cB} {x}} {\cFunAbs {x} {\cB} {x}}$.\\

  $\cF$ 
  \index{8cb@$\cF$}
& stands for
& $\cEqX {\cFunAbs {x} {\cB} {\cT}} {\cFunAbs {x} {\cB} {x}}$.\\

  $\cAndPC$
  \index{8cc@$\cAndPC$}
& stands for
& $\cFunAbsX {x} {\cB} {\cFunAbsX {y} {\cB}}$\\
& 
& \hspace*{2ex}${\cEqX {\cFunAbs {g} {\cFunTyBX {\cB} {\cB} {\cB}} 
  {\cFunAppBX {g} {\cT} {\cT}}} {}}$\\
&
& \hspace*{2ex}${\cFunAbs {g} {\cFunTyBX {\cB} {\cB} {\cB}} 
  {\cFunAppBX {g} {x} {y}}}$.\\

  $\cAnd {\mathbf{A}_\cB} {\mathbf{B}_\cB}$
  \index{8cd@$\cAnd {\mathbf{A}_\cB} {\mathbf{B}_\cB}$}
& stands for
& $\cFunAppBX {\cAndPC} {\mathbf{A}_\cB} {\mathbf{B}_\cB}$.\\

  $\cImpliesPC$
  \index{8ce@$\cImpliesPC$}
& stands for
& $\cFunAbsX {x} {\cB} {\cFunAbsX {y} {\cB} {\cEqX {x} {\cAnd {x} {y}}}}.$\\ 

  $\cImplies {\mathbf{A}_\cB} {\mathbf{B}_\cB}$
  \index{8cf@$\cImplies {\mathbf{A}_\cB} {\mathbf{B}_\cB}$}
& stands for
& $\cFunAppBX {\cImpliesPC} {\mathbf{A}_\cB} {\mathbf{B}_\cB}$.\\

  $\cNegPC$
  \index{8cg@$\cNegPC$}
& stands for
& $\cFunAbsX {x} {\cB} {\cEqX {x} {\cF}}$.\\

  $\cNeg {\mathbf{A}_\cB}$ 
  \index{8ch@$\cNeg {\mathbf{A}_\cB}$}
& stands for
& $\cFunAppX {\cNegPC} {\mathbf{A}_\cB}$.\\

  $\cOrPC$
  \index{8ci@$\cOrPC$}
& stands for
& $\cFunAbsX {x} {\cB} {\cFunAbsX {y} {\cB}
  {\cNegX {\cAnd {\cNegX{x}} {\cNegX{y}}}}}$.\\

  $\cOr {\mathbf{A}_\cB} {\mathbf{B}_\cB}$
  \index{8cj@$\cOr {\mathbf{A}_\cB} {\mathbf{B}_\cB}$}
& stands for
& $\cFunAppBX {\cOrPC} {\mathbf{A}_\cB} {\mathbf{B}_\cB}$.\\

\hline
\end{tabular}
\ec
\caption{Notational Definitions for Boolean Operators}
\label{tab:nd-boolean}
\end{table}

In Table~\ref{tab:nd-bin-op}, we present notation for binary
operators.  We will occasionally use implicit notational definitions
analogous to the notational definitions in Table~\ref{tab:nd-bin-op}
for the infix operators $<$, $>$, and $\ge$ corresponding to $\le$ for
other weak order operators such as $\subseteq$ and $\sqsubseteq$
(Notational Convention~10).

\begin{table}[t]
\bc
\begin{tabular}{|lll|}
\hline

  $\cBin {\mathbf{A}_\alpha} {\mathbf{c}} {\mathbf{B}_\alpha}$
  \index{8da@$\cBin {\mathbf{A}_\alpha} {\mathbf{c}} {\mathbf{B}_\alpha}$}
& stands for 
& $\cFunAppBX {\cCon {\mathbf{c}} {\cFunTyBX {\alpha} {\alpha} {\beta}}}
  {\mathbf{A}_\alpha} {\mathbf{B}_\alpha}$ \;or\;
  $\cFunAppX {\cCon {\mathbf{c}} {\cFunTyX {\cProdTy {\alpha} {\alpha}} {\beta}}}
  {\cOrdPair {\mathbf{A}_\alpha} {\mathbf{B}_\alpha}}$.\\

  $\cIff {\mathbf{A}_\cB} {\mathbf{B}_\cB}$
  \index{8db@$\cIff {\mathbf{A}_\cB} {\mathbf{B}_\cB}$}
& stands for   
& $\cEqX {\mathbf{A}_\cB} {\mathbf{B}_\cB}$.\\

  $\cNotEq {\mathbf{A}_\alpha} {\mathbf{B}_\alpha}$
  \index{8dc@$\cNotEq {\mathbf{A}_\alpha} {\mathbf{B}_\alpha}$}
& stands for 
& $\cNegX {\cEq {\mathbf{A}_\alpha} {\mathbf{B}_\alpha}}$.\\

  $\cBin {\mathbf{A}_\alpha} {<} {\mathbf{B}_\alpha}$
  \index{8dd@$\cBin {\mathbf{A}_\alpha} {<} {\mathbf{B}_\alpha}$}
& stands for 
& $\cAndX {\cFunAppB {{\le}_{\cFunTyBX {\alpha} {\alpha} {\cB}}}
  {\mathbf{A}_\alpha} {\mathbf{B}_\alpha}}
  {\cNotEq {\mathbf{A}_\alpha} {\mathbf{B}_\alpha}}$.\\

  $\cBin {\mathbf{A}_\alpha} {>} {\mathbf{B}_\alpha}$
  \index{8de@$\cBin {\mathbf{A}_\alpha} {>} {\mathbf{B}_\alpha}$}
& stands for 
& $\cBinX {\mathbf{B}_\alpha} {<} {\mathbf{A}_\alpha}$.\\

  $\cBin {\mathbf{A}_\alpha} {\ge} {\mathbf{B}_\alpha}$
  \index{8df@$\cBin {\mathbf{A}_\alpha} {\ge} {\mathbf{B}_\alpha}$}
& stands for 
& $\cBinX {\mathbf{B}_\alpha} {\le} {\mathbf{A}_\alpha}$.\\

  $\cBinB {\mathbf{A}_\alpha} {=} {\mathbf{B}_\alpha} {=} {\mathbf{C}_\alpha}$
  \index{8dg@$\cBinB {\mathbf{A}_\alpha} {=} {\mathbf{B}_\alpha} {=} {\mathbf{C}_\alpha}$}
& stands for 
& $\cAndX {\cEq {\mathbf{A}_\alpha} {\mathbf{B}_\alpha}}
  {\cEq {\mathbf{B}_\alpha} {\mathbf{C}_\alpha}}$.\\

  $\cBinB {\mathbf{A}_\alpha} {\mathbf{c}} {\mathbf{B}_\alpha} {\mathbf{d}} {\mathbf{C}_\alpha}$
  \index{8dh@$\cBinB {\mathbf{A}_\alpha} {\mathbf{c}} {\mathbf{B}_\alpha} {\mathbf{d}} {\mathbf{C}_\alpha}$}
& stands for 
& $\cAndX 
  {\cBin {\mathbf{A}_\alpha} {\mathbf{c}} {\mathbf{B}_\alpha}}
  {\cBin {\mathbf{B}_\alpha} {\mathbf{d}} {\mathbf{C}_\alpha}}$.\\

\hline
\end{tabular}
\ec
\caption{Notational Definitions for Binary Operators}
\label{tab:nd-bin-op}
\end{table}

In Table~\ref{tab:nd-quantifiers}, we present notation for the
universal and existential quantifiers.  We will usually write a
sequence of universal quantifiers and a sequence of existential
quantifiers in a more compact form with a single quantifier
(Notational Convention 11).  Thus, for example,
\[\cForallX {\mathbf{x}} {\alpha} {\cForallX {\mathbf{y}} {\alpha} 
{\cForallX {\mathbf{z}} {\beta} {\mathbf{A}_\cB}}}\] 
will be written as
\[\cForallBX {\mathbf{x}, \mathbf{y}} {\alpha} {\mathbf{z}} {\beta} 
{\mathbf{A}_\cB}.\] We will also use this form with quasitypes (which
are introduced below) (Notational Convention 12).

\begin{table}
\bc
\bigskip
\begin{tabular}{|lll|}
\hline

  $\cForall {\mathbf{x}} {\alpha} {\mathbf{A}_\cB}$
  \index{8ea@$\cForall {\mathbf{x}} {\alpha} {\mathbf{A}_\cB}$}
& stands for
& $\cEqX {\cFunAbs {x} {\alpha} {\cT}}
  {\cFunAbs {\mathbf{x}} {\alpha} {\mathbf{A}_\cB}}$.\\

  $\cForsome {\mathbf{x}} {\alpha} {\mathbf{A}_\cB}$
  \index{8eb@$\cForsome {\mathbf{x}} {\alpha} {\mathbf{A}_\cB}$}
& stands for
& $\cNegX {\cForall {\mathbf{x}} {\alpha} {\cNegX {\mathbf{A}_\cB}}}$.\\

\hline
\end{tabular}
\ec
\caption{Notational Definitions for Quantifiers}
\label{tab:nd-quantifiers}
\end{table}

In Table~\ref{tab:nd-definedness}, we present notation for expressions
involving definedness.  $\cBotPC {\cB}$~is a canonical ``undefined''
formula.  $\cBotPC {\alpha}$~is a canonical undefined expression of
type $\alpha \not= \cB$.  $\cEmpFunPC {\alpha} {\beta}$ is the empty
function of type $\cFunTyX {\alpha} {\beta}$ (where $\beta \not=
\cB$).  $\cIsDef {\mathbf{A}_\alpha}$ and $\cIsUndef {\mathbf{A}_\alpha}$ 
assert that the expression $\mathbf{A}_\alpha$ is defined and
undefined, respectively.  $\cQuasiEq {\mathbf{A}_\alpha}
{\mathbf{B}_\alpha}$ asserts that the expressions $\mathbf{A}_\alpha$
and $\mathbf{B}_\alpha$ are \emph{quasi-equal}, i.e., they are both
defined and equal or both undefined.  And $\cIf {\mathbf{A}_\cB}
{\mathbf{B}_\alpha} {\mathbf{C}_\alpha}$ is a conditional expression
that denotes the value of $\mathbf{B}_\alpha$ if $\mathbf{A}_\cB$
holds and otherwise denotes the value of~$\mathbf{C}_\alpha$.

\begin{table}[t]
\bc
\begin{tabular}{|lll|}
\hline

  $\cBotPC {\cB}$
  \index{8fa@$\cBotPC {\cB}$}
& stands for
& $\cF$.\\

  $\cBotPC {\alpha}$
  \index{8fb@$\cBotPC {\alpha}$}
& stands for
& $\cDefDesX {x} {\alpha} {\cNotEqX {x} {x}}${\dblsp}where $\alpha \not= \cB$.\\

  $\cEmpFunPC {\alpha} {\beta}$
  \index{8fc@$\cEmpFunPC {\alpha} {\beta}$}
& stands for
& $\cFunAbsX {x} {\alpha} {\cBotPC {\beta}}${\dblsp}where $\beta \not= \cB$.\\

  $\cIsDef {\mathbf{A}_\alpha}$
  \index{8fd@$\cIsDef {\mathbf{A}_\alpha}$}
& stands for
& $\cEqX {\mathbf{A}_\alpha} {\mathbf{A}_\alpha}$.\\

  $\cIsUndef {\mathbf{A}_\alpha}$
  \index{8fe@$\cIsUndef {\mathbf{A}_\alpha}$}
& stands for
& $\cNegX {\cIsDef {\mathbf{A}_\alpha}}$.\\

  $\cQuasiEq {\mathbf{A}_\alpha} {\mathbf{B}_\alpha}$
  \index{8ff@$\cQuasiEq {\mathbf{A}_\alpha} {\mathbf{B}_\alpha}$}
& stands for
& $\cImpliesX {\cOr {\cIsDefX {\mathbf{A}_\alpha}} 
  {\cIsDefX {\mathbf{B}_\alpha}}}
  {\cEqX {\mathbf{A}_\alpha} {\mathbf{B}_\alpha}}$.\\

  $\cNotQuasiEq {\mathbf{A}_\alpha} {\mathbf{B}_\alpha}$
  \index{8fg@$\cNotQuasiEq {\mathbf{A}_\alpha} {\mathbf{B}_\alpha}$}
& stands for
& $\cNegX {\cQuasiEq {\mathbf{A}_\alpha} {\mathbf{B}_\alpha}}$.\\

  $\cIfThenElse {\mathbf{A}_\cB} {\mathbf{B}_\cB} {\mathbf{C}_\cB}$
  \index{8fk@$\cIfThenElse {\mathbf{A}_\cB} {\mathbf{B}_\cB} {\mathbf{C}_\cB}$}
& stands for
& $\cAndX 
     {\cImplies {\mathbf{A}_\cB} {\mathbf{B}_\cB}} 
     {\cImplies {\cNegX {\mathbf{A}_\cB}} {\mathbf{C}_\cB}}$.\\

  $\cIfThenElse {\mathbf{A}_\cB} {\mathbf{B}_\alpha} {\mathbf{C}_\alpha}$
  \index{8fl@$\cIfThenElse {\mathbf{A}_\cB} {\mathbf{B}_\alpha} {\mathbf{C}_\alpha}$}
& stands for
& $\cDefDesX {x} {\alpha} {}$\\
&
& \hspace*{2ex}
  $\cAndX 
     {\cImplies {\mathbf{A}_\cB} {\cEqX {x} {\mathbf{B}_\alpha}}} 
     {\cImplies {\cNegX {\mathbf{A}_\cB}} {\cEqX {x} {\mathbf{C}_\alpha}}}$\\
&
& where $\alpha \not= \cB$.\\

  $\cIf {\mathbf{A}_\cB} {\mathbf{B}_\alpha} {\mathbf{C}_\alpha}$
  \index{8fm@$\cIf {\mathbf{A}_\cB} {\mathbf{B}_\alpha} {\mathbf{C}_\alpha}$}
& stands for
& $\cIfThenElse {\mathbf{A}_\cB} {\mathbf{B}_\alpha} {\mathbf{C}_\alpha}$.\\

\hline
\end{tabular}
\ec
\caption{Notational Definitions for Definedness}
\label{tab:nd-definedness}
\end{table}

The notation $\cBotPC {\alpha}$ is an example of a \emph{parametric
pseudoconstant}.  It stands for an expression $\mathbf{C}_\alpha$
where $\alpha$ is a \emph{parametric type} with the syntactic variable
$\alpha$ serving as a parameter that can be freely replaced with any
type.  Thus $\cBotPC {\alpha}$ is polymorphic in the sense that it can
be used with expressions of different types by simply replacing the
syntactic variable $\alpha$ with the type that is needed.  $\cEmpFunPC
{\alpha} {\beta}$ is similarly a parametric pseudoconstant.

The notational definitions of $\cIfThenElse {\mathbf{A}_\cB}
{\mathbf{B}_\cB} {\mathbf{C}_\cB}$ and $\cIfThenElse {\mathbf{A}_\cB}
{\mathbf{B}_\alpha} {\mathbf{C}_\alpha}$ (where $\alpha \not= \cB$)
are \emph{(parameterized) abbreviations} of the form
\[A(\mathbf{B}^{1}_{\alpha_1},\ldots,\mathbf{B}^{n}_{\alpha_n})\textrel{stands for} C\] 
where $A$ is a name, $n \ge 0$, and the syntactic variables
$\mathbf{B}^{1}_{\alpha_1},\ldots,\mathbf{B}^{1}_{\alpha_1}$ appear in
the expression $C$.  $A$ is written in uppercase sans sarif font to
distinguish it from the name of a constant or pseudoconstant
(Notational Convention~13).  We will always assume that the bound
variables introduced in $C$ are chosen so that they are not free in
$\mathbf{B}^{1}_{\alpha_1},\ldots,\mathbf{B}^{1}_{\alpha_1}$(Notational
Convention~14).  For example, the bound variable $\cVar {x} {\alpha}$
in the RHS of the notational definition of $\cIfThenElse
{\mathbf{A}_\cB} {\mathbf{B}_\alpha} {\mathbf{C}_\alpha}$ (where
$\alpha \not= \cB$) in Table~\ref{tab:nd-definedness} is chosen so
that it is not free in ${\mathbf{A}_\cB}$, ${\mathbf{B}_\alpha}$,
or~${\mathbf{C}_\alpha}$.

Since we can identify a set $S \subseteq U$ with the predicate $p_S :
U \tarrow \bB$ such that $a \in S$ iff $p_S(a)$, we will introduce a
power set type of $\alpha$, i.e., a type of the subsets of $\alpha$,
as the type $\cFunTyX {\alpha} {\cB}$ of predicates on $\alpha$.  The
compact notation for $\cFunTyX {\alpha} {\cB}$ is $\cSetTy {\alpha}$.
We introduce this notation and compact notation for the common set
operators in Table~\ref{tab:nd-sets}.  $\cEmpSetPC {\alpha}$ and
$\cUnivSetPC {\alpha}$ are parametric pseudoconstants that denote the
empty set and the universal set, respectively, of the members in the
domain of $\alpha$.

\begin{table}[t]
\bc
\begin{tabular}{|lll|}
\hline

  $\cSetTy {\alpha}$
  \index{8ga@$\cSetTy {\alpha}$}
& stands for
& $\cFunTyX {\alpha} {\cB}$.\\

  $\cIn {\mathbf{A}_\alpha} {\mathbf{B}_{\cSetTy {\alpha}}}$
  \index{8gb@$\cIn {\mathbf{A}_\alpha} {\mathbf{B}_{\cSetTy {\alpha}}}$}
& stands for
& $\cFunAppX {\mathbf{B}_{\cSetTy {\alpha}}} {\mathbf{A}_\alpha}$.\\

  $\cNotIn {\mathbf{A}_\alpha} {\mathbf{B}_{\cSetTy {\alpha}}}$
  \index{8gc@$\cNotIn {\mathbf{A}_\alpha} {\mathbf{B}_{\cSetTy {\alpha}}}$}
& stands for
& $\cNegX {\cIn {\mathbf{A}_\alpha} {\mathbf{B}_{\cSetTy {\alpha}}}}$.\\

  $\cSet {\mathbf{x}} {\alpha} {\mathbf{A}_\cB}$
  \index{8gd@$\cSet {\mathbf{x}} {\alpha} {\mathbf{A}_\cB}$}
& stands for
& $\cFunAbsX {\mathbf{x}} {\alpha} {\mathbf{A}_\cB}$.\\

  $\cEmpSetPC {\alpha}$
  \index{8ge@$\cEmpSetPC {\alpha}$}
& stands for
& $\cFunAbsX {x} {\alpha} {\cF}$.\\

  $\cEmpSetAltPC {\alpha}$
  \index{8gf@$\cEmpSetAltPC {\alpha}$}
& stands for
& $\cEmpSetPC {\alpha}$.\\

  $\cUnivSetPC {\alpha}$
  \index{8gg@$\cUnivSetPC {\alpha}$}
& stands for
& $\cFunAbsX {x} {\alpha} {\cT}$.\\

  $\cFinSet {n} {\alpha}$
  \index{8gh@$\cFinSet {n} {\alpha}$}
& stands for
& $\LambdaApp x_1 : \alpha \mDot \cdots \mDot \LambdaApp x_n : \alpha \mDot 
  \LambdaApp x : \alpha \mDot$\\
&
& \hspace*{2ex}$x = x_1 \Or \cdots \Or x = x_n${\dblsp}where $n \ge 1$.\\

  $\cFinSetL{\mathbf{A}^{1}_{\alpha}, \ldots, \mathbf{A}^{n}_{\alpha}}$
  \index{8gi@$\cFinSetL{\mathbf{A}^{1}_{\alpha}, \ldots, \mathbf{A}^{n}_{\alpha}}$}
& stands for
& $\cFinSet {n} {\alpha}\,\mathbf{A}^{1}_{\alpha}\,\cdots\,\mathbf{A}^{n}_{\alpha}${\dblsp}where $n \ge 1$.\\

  $\cSubseteqPC {\alpha}$
  \index{8gj@$\cSubseteqPC {\alpha}$}
& stands for
& $\cFunAbsX {a} {\cSetTy {\alpha}} {\cFunAbsX {b} {\cSetTy {\alpha}}}$\\
& 
& \hspace*{2ex}$\cForallX {x} {\alpha} 
  {\cImpliesX {\cInX {x} {a}} {\cInX {x} {b}}}$.\\

  $\cUnionPC {\alpha}$
  \index{8gk@$\cUnionPC {\alpha}$}
& stands for
& $\cFunAbsX {a} {\cSetTy {\alpha}} {\cFunAbsX {b} {\cSetTy {\alpha}}}$\\
& 
& \hspace*{2ex}$\cSet {x} {\alpha} {\cOrX {\cInX {x} {a}} {\cInX {x} {b}}}$.\\

  $\cIntersPC {\alpha}$
  \index{8gl@$\cIntersPC {\alpha}$}
& stands for
& $\cFunAbsX {a} {\cSetTy {\alpha}} {\cFunAbsX {b} {\cSetTy {\alpha}}}$\\
& 
& \hspace*{2ex}$\cSet {x} {\alpha} {\cAndX {\cInX {x} {a}} {\cInX {x} {b}}}$.\\

  $\cComplPC {\alpha}$
  \index{8gm@$\cComplPC {\alpha}$}
& stands for
& $\cFunAbsX {a} {\cSetTy \alpha} {\cSet {x} {\alpha} {\cNotInX {x} {a}}}$.\\

  $\cComplX {\mathbf{A}_{\cSetTy {\alpha}}}$
  \index{8gn@$\cCompl {\mathbf{A}_{\cSetTy {\alpha}}}$}
& stands for
& $\cFunAppX {\compl{\,\cdot\,}_{\cFunTyX {\cSetTy {\alpha}} {\cSetTy {\alpha}}}}
  {\mathbf{A}_{\cSetTy {\alpha}}}$.\\

  $\cSetDiffPC {\alpha}$
  \index{8go@$\cSetDiffPC {\alpha}$}
& stands for
& $\cFunAbsX {a} {\cSetTy {\alpha}} {\cFunAbsX {b} {\cSetTy {\alpha}}
     {\cIntersX {a} {\cComplX {b}}}}$.\\

\hline
\end{tabular}
\ec
\caption{Notational Definitions for Sets}
\label{tab:nd-sets}
\end{table}

We introduce notation for product types, tuples, and the accessors for
ordered pairs in Table~\ref{tab:nd-tuples}.

\begin{table}[t]
\bc
\begin{tabular}{|lll|}
\hline

  $\cTupleTyL {\alpha}$
  \index{8ha@$\cTupleTyL {\alpha}$}
& stands for
& $\alpha$.\\

  $\cTupleTyL {\alpha_1 \times \cdots \times \alpha_n}$
  \index{8hb@$\cTupleTyL {\alpha_1 \times \cdots \times \alpha_n}$}
& stands for
& $\cProdTy {\alpha_1} {\cTupleTyL {\alpha_2 \times \cdots \times \alpha_n}}$
  {\dblsp}where $n \ge 2$.\\

  $\cTupleL {\mathbf{A}_\alpha}$
  \index{8hc@$\cTupleL {\mathbf{A}_\alpha}$}
& stands for
& $\mathbf{A}_\alpha$.\\

  $\cTupleL {\mathbf{A}^{1}_{\alpha_1}, \ldots, \mathbf{A}^{n}_{\alpha_n}}$
  \index{8hd@$\cTupleL {\mathbf{A}^{1}_{\alpha_1}, \ldots, \mathbf{A}^{n}_{\alpha_n}}$}
& stands for
& $\cOrdPair {\mathbf{A}^{1}_{\alpha_1}} 
  {\cTupleL {\mathbf{A}^{2}_{\alpha_1}, \ldots, \mathbf{A}^{n}_{\alpha_n}}}$
  {\dblsp}where $n \ge 2$.\\

  $\cFstPC {\alpha} {\beta}$
  \index{8he@$\cFstPC {\alpha} {\beta}$}
& stands for
& $\cFunAbsX {p} {\cProdTyX {\alpha} {\beta}} {\cDefDesX {x} {\alpha} 
  {\cForsomeX {y} {\beta} {\cEqX {p} {\cOrdPair {x} {y}}}}}$.\\

  $\cSndPC {\alpha} {\beta}$
  \index{8hf@$\cSndPC {\alpha} {\beta}$}
& stands for
& $\cFunAbsX {p} {\cProdTyX {\alpha} {\beta}} {\cDefDesX {y} {\beta} 
  {\cForsomeX {x} {\alpha} {\cEqX {p} {\cOrdPair {x} {y}}}}}$.\\

\hline
\end{tabular}
\ec
\caption{Notational Definitions for Tuples}
\label{tab:nd-tuples}
\end{table}

Some convenient notation for functions is found in
Table~\ref{tab:nd-functions}.

\begin{table}[t]
\bc
\begin{tabular}{|lll|}
\hline

  $\cIdFunPC {\alpha}$
  \index{8ia@$\cIdFunPC {\alpha}$}
& stands for
& $\cFunAbsX {x} {\alpha} {x}$.\\

  $\cDomPC {\alpha} {\beta}$
  \index{8ib@$\cDomPC {\alpha} {\beta}$}
& stands for
& $\cFunAbsX {f} {\cFunTyX {\alpha} {\beta}} {}$\\
&
& \hspace*{2ex}${\cSet {x} {\alpha} {\cIsDefX {\cFunApp {f} {x}}}}$.\\

  $\cRanPC {\alpha} {\beta}$
  \index{8ic@$\cRanPC {\alpha} {\beta}$}
& stands for
& $\cFunAbsX {f} {\cFunTyX {\alpha} {\beta}} {}$\\
&
& \hspace*{2ex}${\cSet {y} {\beta} 
  {\cForsomeX {x} {\alpha} {\cEqX {\cFunAppX {f} {x}} {y}}}}$.\\

  $\cTotal {\mathbf{F}_{\cFunTyX {\alpha} {\beta}}}$
  \index{8ja@$\cTotal {\cdots}$}
& stands for
& $\cForallX {x} {\alpha} {\cIsDefX 
  {\cFunApp {\mathbf{F}_{\cFunTyX {\alpha} {\beta}}} {x}}}$.\\

  $\cRestrictPC {\alpha} {\beta}$
  \index{8ig@$\cRestrictPC {\alpha} {\beta}$}
& stands for
& $\cFunAbsX {f} {\cFunTyX {\alpha} {\beta}} 
     {\cFunAbsX {s} {\cSetTy {\alpha}} {}}$\\
&
& \hspace*{2ex}${\cFunAbsX {x} {\alpha} 
  {\cIfX {\cInX {x} {s}} {\cFunAppX {f} {x}} {\cBotPC {\beta}}}}$.\\

  $\cRestrict {\mathbf{F}_{\cFunTyX {\alpha} {\beta}}} {\mathbf{A}_{\cSetTy {\alpha}}}$
  \index{8ih@$\cRestrict {\mathbf{F}_{\cFunTyX {\alpha} {\beta}}} {\mathbf{A}_{\cSetTy {\alpha}}}$}
& stands for
& $\cFunAppBX {\cRestrictPC {\alpha} {\beta}} 
     {\mathbf{F}_{\cFunTyX {\alpha} {\beta}}} {\mathbf{A}_{\cSetTy {\alpha}}}$.\\

\hline
\end{tabular}
\ec
\caption{Notational Definitions for Functions}
\label{tab:nd-functions}
\end{table}

A \emph{quasitype within type $\alpha \in \sT$} is any expression of
type $\cSetTy {\alpha} = \cFunTyX {\alpha} {\cB}$.  A quasitype
$\mathbf{Q}_{\cSetTy {\alpha}}$ denotes a subset of the domain denoted
by $\alpha$.  Thus quasitypes represent subtypes\index{Type!subtype} and
are useful for specifying subdomains of a domain.  Unlike a type, a
quasitype may denote an empty domain.  Notice that an expression
$\mathbf{A}_{\cFunTyX {\alpha} {\cB}}$ is simultaneously an expression
of type $\cFunTyX {\alpha} {\cB}$, an expression of type of $\cSetTy
{\alpha}$, and a quasitype within type $\alpha$.  So
$\mathbf{A}_{\cFunTyX {\alpha} {\cB}}$ (or $\mathbf{A}_{\cSetTy
  {\alpha}}$) can be used as a function, as a set, and like a type as
shown below.

In Table~\ref{tab:nd-quasitypes}, we introduce various notations for
using quasitypes in place of types.  Quasitypes can be used to
restrict the range of a variable bound by a binder.  For example,
$\cFunAbsQTy {x} {\mathbf{Q}_{\cSetTy {\alpha}}} {\mathbf{B}_\beta}$
denotes the function denoted by $\cFunAbsX {x} {\alpha}
{\mathbf{B}_\beta}$ weakly restricted to the domain denoted by
$\mathbf{Q}_{\cSetTy {\alpha}}$.  Quasitypes can also be used to
sharpen definedness statements.  For example, $\cIsDefInQTy
{\mathbf{A}_\alpha} {\mathbf{Q}_{\cSetTy {\alpha}}}$, read as
\emph{$\mathbf{A}_\alpha$ is defined in $\mathbf{Q}_{\cSetTy
  {\alpha}}$}, asserts that the value of $\mathbf{A}_\alpha$ is
defined and is a member of the set denoted by $\mathbf{Q}_{\cSetTy
  {\alpha}}$.  $\cFunQTy {\mathbf{Q}_{\cSetTy {\alpha}}}
{\mathbf{R}_{\cSetTy {\beta}}}$ is a quasitype within $\cFunTyX
{\alpha} {\beta}$ that denotes the function space from the denotation
of $\mathbf{Q}_{\cSetTy {\alpha}}$ to the denotation of
${\mathbf{R}_{\cSetTy {\beta}}}$, and $\cProdQTy {\mathbf{Q}_{\cSetTy
    {\alpha}}} {\mathbf{R}_{\cSetTy {\beta}}}$ is a quasitype within
$\cProdTyX {\alpha} {\beta}$ that denotes the Cartesian product of the
denotation of $\mathbf{Q}_{\cSetTy {\alpha}}$ and the denotation of
${\mathbf{R}_{\cSetTy {\beta}}}$.  

\begin{table}[t]\footnotesize
\bc
\begin{tabular}{|lll|}
\hline

  $\cFunAbsQTy {\mathbf{x}} {\mathbf{Q}_{\cSetTy {\alpha}}} {\mathbf{B}_\beta}$
  \index{8ka@$\cFunAbsQTy {\mathbf{x}} {\mathbf{Q}_{\cSetTy {\alpha}}} {\mathbf{B}_\beta}$}
& stands for
& $\cFunAbsX {\mathbf{x}} {\alpha} {\cIf
  {\cInX {\mathbf{x}} {\mathbf{Q}_{\cSetTy {\alpha}}}} 
  {\mathbf{B}_\beta} {\cBotPC {\beta}}}$.\\

  $\cForallQTy {\mathbf{x}} {\mathbf{Q}_{\cSetTy {\alpha}}} {\mathbf{B}_\cB}$
  \index{8kb@$\cForallQTy {\mathbf{x}} {\mathbf{Q}_{\cSetTy {\alpha}}} {\mathbf{B}_\cB}$}
& stands for
& $\cForallX {\mathbf{x}} {\alpha} {\cImplies
  {\cInX {\mathbf{x}} {\mathbf{Q}_{\cSetTy {\alpha}}}} {\mathbf{B}_\cB}}$.\\

  $\cForsomeQTy {\mathbf{x}} {\mathbf{Q}_{\cSetTy {\alpha}}} {\mathbf{B}_\cB}$
  \index{8kc@$\cForsomeQTy {\mathbf{x}} {\mathbf{Q}_{\cSetTy {\alpha}}} {\mathbf{B}_\cB}$}
& stands for
& $\cForsomeX {\mathbf{x}} {\alpha} {\cAnd
  {\cInX {\mathbf{x}} {\mathbf{Q}_{\cSetTy {\alpha}}}} {\mathbf{B}_\cB}}$.\\

  $\cDefDesQTy {\mathbf{x}} {\mathbf{Q}_{\cSetTy {\alpha}}} {\mathbf{B}_\cB}$
  \index{8kd@$\cDefDesQTy {\mathbf{x}} {\mathbf{Q}_{\cSetTy {\alpha}}} {\mathbf{B}_\cB}$}
& stands for
& $\cDefDesX {\mathbf{x}} {\alpha} {\cAnd
  {\cInX {\mathbf{x}} {\mathbf{Q}_{\cSetTy {\alpha}}}} {\mathbf{B}_\cB}}$.\\

  $\cIsDefInQTy {\mathbf{A}_\alpha} {\mathbf{Q}_{\cSetTy {\alpha}}}$
  \index{8ke@$\cIsDefInQTy {\mathbf{A}_\alpha} {\mathbf{Q}_{\cSetTy {\alpha}}}$}
& stands for
& $\cAndX {\cIsDefX {\mathbf{A}_\alpha}}
  {\cInX {\mathbf{A}_\alpha} {\mathbf{Q}_{\cSetTy {\alpha}}}}$.\\

  $\cIsUndefInQTy {\mathbf{A}_\alpha} {\mathbf{Q}_{\cSetTy {\alpha}}}$
  \index{8kf@$\cIsUndefInQTy {\mathbf{A}_\alpha} {\mathbf{Q}_{\cSetTy {\alpha}}}$}
& stands for
& $\cNegX {\cIsDefInQTy {\mathbf{A}_\alpha} {\mathbf{Q}_{\cSetTy {\alpha}}}}$.\\

  $\cFunQTyPC {\alpha} {\beta}$
  \index{8kg@$\cFunQTyPC {\alpha} {\beta}$}
& stands for
& $\cFunAbsX {s} {\cSetTy {\alpha}} {\cFunAbsX {t} {\cSetTy {\beta}}}$\\
&
& \hspace*{2ex}$\{f : {\cFunTyX {\alpha} {\beta}} \mid 
  \cForallX {x} {\alpha}$\\
&
& \hspace*{4ex}${\cImpliesX {\cIsDefX {\cFunApp {f} {x}}}
  {\cAnd {\cInX {x} {s}} {\cInX {\cFunAppX {f} {x}} {t}}}}\}$\\
&
& where $\beta \not= \cB$.\\

  $\cProdQTyPC {\alpha} {\beta}$
  \index{8kh@$\cProdQTyPC {\alpha} {\beta}$}
& stands for
& $\cFunAbsX {s} {\cSetTy {\alpha}} {\cFunAbsX {t} {\cSetTy {\beta}}}$\\
&
& \hspace*{2ex}$\{p : {\cProdTyX {\alpha} {\beta}} \mid $\\
&
& \hspace*{4ex}${\cInX {\cFunAppX 
  {\cFstPC {\alpha} {\beta}} {p}} {s}} \And {}$\\
&
& \hspace*{4ex}${\cInX {\cFunAppX
  {\cSndPC {\alpha} {\beta}} {p}} {t}}\}$\\

  $\cFunQTy {\mathbf{Q}_{\cSetTy {\alpha}}} {\cB}$
  \index{8ko@$\cFunQTy {\mathbf{Q}_{\cSetTy {\alpha}}} {\cB}$}
& stands for  
& $\cSet 
     {s} 
     {\cSetTy {\alpha}}
     {\cSubseteqX {s} {\mathbf{Q}_{\cSetTy {\alpha}}}}$.\\

  $\cSetQTy {\mathbf{Q}_{\cSetTy {\alpha}}}$
  \index{8kp@$\cSetQTy {\mathbf{Q}_{\cSetTy {\alpha}}}$}
& stands for  
& $\cFunQTyX {\mathbf{Q}_{\cSetTy {\alpha}}} {\cB}$.\\

  $\cFunQTy {\mathbf{Q}_{\cSetTy {\alpha}}} {\mathbf{R}_{\cSetTy {\beta}}}$
  \index{8ki@$\cFunQTy {\mathbf{Q}_{\cSetTy {\alpha}}} {\mathbf{R}_{\cSetTy {\beta}}}$}
& stands for
& $\cFunAppBX {\cFunQTyPC {\alpha} {\beta}}
  {\mathbf{Q}_{\cSetTy {\alpha}}} {\mathbf{R}_{\cSetTy {\beta}}}$\\
&
& where $\beta \not= \cB$.\\

  $\cFunQTy {\alpha} {\mathbf{R}_{\cSetTy {\beta}}}$
  \index{8kj@$\cFunQTy {\alpha} {\mathbf{R}_{\cSetTy {\beta}}}$}
& stands for  
& $\cFunQTyX {\cUnivSetPC {\alpha}} {\mathbf{R}_{\cSetTy {\beta}}}$ 
  {\dblsp}where $\beta \not= \cB$.\\

  $\cFunQTy {\mathbf{Q}_{\cSetTy {\alpha}}} {\beta}$
  \index{8kk@$\cFunQTy {\mathbf{Q}_{\cSetTy {\alpha}}} {\beta}$}
& stands for  
& $\cFunQTyX {\mathbf{Q}_{\cSetTy {\alpha}}} {\cUnivSetPC {\beta}}$
  {\dblsp}where $\beta \not= \cB$.\\

  $\cProdQTy {\mathbf{Q}_{\cSetTy {\alpha}}} {\mathbf{R}_{\cSetTy {\beta}}}$
  \index{8kl@$\cProdQTy {\mathbf{Q}_{\cSetTy {\alpha}}} {\mathbf{R}_{\cSetTy {\beta}}}$}
& stands for
& $\cFunAppBX {\cProdQTyPC {\alpha} {\beta}}
  {\mathbf{Q}_{\cSetTy {\alpha}}} {\mathbf{R}_{\cSetTy {\beta}}}.$\\

  $\cProdQTy {\alpha} {\mathbf{R}_{\cSetTy {\beta}}}$
  \index{8km@$\cProdQTy {\alpha} {\mathbf{R}_{\cSetTy {\beta}}}$}
& stands for  
& $\cProdQTyX {\cUnivSetPC {\alpha}} {\mathbf{R}_{\cSetTy {\beta}}}$.\\

  $\cProdQTy {\mathbf{Q}_{\cSetTy {\alpha}}} {\beta}$
  \index{8kn@$\cProdQTy {\mathbf{Q}_{\cSetTy {\alpha}}} {\beta}$}
& stands for  
& $\cProdQTyX {\mathbf{Q}_{\cSetTy {\alpha}}} {\cUnivSetPC {\beta}}$.\\

  $\cTotalOn {\mathbf{F}_{\cFunTyX {\alpha} {\beta}}}
  {\mathbf{Q}_{\cSetTy {\alpha}}} {\mathbf{R}_{\cSetTy {\beta}}}$
  \index{8kq@$\textsf{TOTAL-ON}(\cdots)$}
& stands for
& $\cForallX {x} {\mathbf{Q}_{\cSetTy {\alpha}}} {\cIsDefInQTyX
  {\cFunApp {\mathbf{F}_{\cFunTyX {\alpha} {\beta}}} {x}} 
  {\mathbf{R}_{\cSetTy {\beta}}}}$.\\

\hline
\end{tabular}
\ec
\caption{Notational Definitions for Quasitypes}\label{tab:nd-quasitypes}
\end{table}

\section{Monoids}\label{sec:monoids}

A \emph{monoid} is a mathematical structure $(m,\cdot,e)$ where $m$ is
a nonempty set of values, $\cdot : (m \times m) \tarrow m$ is an
associative function, and $e \in m$ is an identity element with
respect to $\cdot$.  Mathematics and computing are replete with
examples of monoids such as $(\bN,+,0)$, $(\bN,*,1)$, and
$(\Sigma^*,\mathrel{++},\epsilon)$ where $\Sigma^*$ is the set of
strings over an alphabet $\Sigma$, $\mathrel{++}$ is string
concatenation, and $\epsilon$ is the empty string.

Table~\ref{tab:monoids-pc} defines some parametric pseudoconstants
that we will need for monoids, and Table~\ref{tab:monoids-abbr} defines
several useful abbreviations for monoids.

\begin{table}
\bc
\bigskip
\begin{tabular}{|l|}
\hline

${\cSetProdPC {\textsf{set-op}} {\alpha} {\beta} {\gamma}}$\\[1ex]
\hspace{2ex} stands for\\[1ex]
${\cFunAbsX 
    {f} 
    {\cFunTyX {\cProdTy {\alpha} {\beta}} {\gamma}}
    {\cFunAbsX
       {p}
       {\cProdTyX {\cSetTy {\alpha}} {\cSetTy {\beta}}} {}}}$\\
       \hspace*{2ex}
      ${\cSet
          {z}
          {\gamma}
          {\cForsomeBX
             {x}
             {\cFunAppX {\mName {fst}} {p}}
             {y}
             {\cFunAppX {\mName {snd}} {p}}
             {\cEqX
                {z}
                {\cFunAppX {f} {\cOrdPair {x} {y}}}}}}$.\\\hline

${\cFunCompPairPC {\alpha} {\beta} {\gamma}}$\\[1ex]
\hspace{2ex} stands for\\[1ex]
${\cFunAbsX
    {p}
    {\cProdTyX {\cFunTy {\alpha} {\beta}} {\cFunTy {\beta} {\gamma}}}
    {\cFunAbsX
       {x}
       {\alpha}
       {\cFunAppX
          {\cFunApp {\mName{snd}} {p}}
          {\cFunApp 
             {\cFunApp {\mName{fst}} {p}}
             {x}}}}}$.\\\hline

${\cFunComp 
    {\mathbf{F}_{\cFunTyX {\alpha} {\beta}}} 
    {\mathbf{G}_{\cFunTyX {\beta} {\gamma}}}}$\\[1ex]
\hspace{2ex} stands for\\[1ex]
${\cFunAppX 
    {\cFunCompPairPC {\alpha} {\beta} {\gamma}}
    {\cOrdPair
       {\mathbf{F}_{\cFunTyX {\alpha} {\beta}}} 
       {\mathbf{G}_{\cFunTyX {\beta} {\gamma}}}}}$.\\\hline

${\cFunAppPairPC {\alpha} {\beta}}$\\[1ex]
\hspace{2ex} stands for\\[1ex]
${\cFunAbsX
    {p}
    {\cProdTyX {\cFunTy {\alpha} {\beta}} {\alpha}}
    {\cFunAppX
       {\cFunApp {\mName{fst}} {p}}
       {\cFunApp {\mName{snd}} {p}}}}$.\\\hline

\hline
\end{tabular}
\ec
\caption{Notational Definitions for Monoids: Pseudoconstants}
\label{tab:monoids-pc}
\end{table}

\begin{table}
\bc
\begin{tabular}{|l|}
\hline

${\cMonoid 
   {\mathbf{M}_{\cSetTy {\alpha}}}
   {\mathbf{F}_{\cFunTyX {\cProdTy {\alpha} {\alpha}} {\alpha}}}
   {\mathbf{E}_{\alpha}}}$\\[1ex]
\hspace{2ex} stands for\\[1ex]
${\cIsDefX {\mathbf{M}_{\cSetTy {\alpha}}}} \And {}$\\
${\cNotEqX {\mathbf{M}_{\cSetTy {\alpha}}} {\cEmpSetPC {\alpha}}} \And {}$\\
${\cIsDefInQTyX
    {\mathbf{F}_{\cFunTyX {\cProdTy {\alpha} {\alpha}} {\alpha}}}
    {\cFunQTyX
       {\cProdQTy {\mathbf{M}_{\cSetTy {\alpha}}} {\mathbf{M}_{\cSetTy {\alpha}}}}
       {\mathbf{M}_{\cSetTy {\alpha}}}}} \And {}$\\
${\cIsDefInQTyX {\mathbf{E}_{\alpha}} {\mathbf{M}_{\cSetTy {\alpha}}}} \And {}$\\
${\cForallX 
    {x, y, z} 
    {\mathbf{M}_{\cSetTy {\alpha}}} {}}$\\
    \hspace*{2ex}
   ${\cEqX 
        {\cFunAppX
           {\mathbf{F}_{\cFunTyX {\cProdTy {\alpha} {\alpha}} {\alpha}}}
           {\cOrdPair
              {x}
              {\cFunAppX 
                 {\mathbf{F}_{\cFunTyX {\cProdTy {\alpha} {\alpha}} {\alpha}}}
                 {\cOrdPair {y} {z}}}}}
       {\cFunAppX
          {\mathbf{F}_{\cFunTyX {\cProdTy {\alpha} {\alpha}} {\alpha}}}
          {\cOrdPair          
             {\cFunAppX 
                {\mathbf{F}_{\cFunTyX {\cProdTy {\alpha} {\alpha}} {\alpha}}}
                {\cOrdPair {x} {y}}}
              {z}}}} \And {}$\\
${\cForallX
    {x}
    {\mathbf{M}_{\cSetTy {\alpha}}}
    {\cBinBX
      {\cFunAppX 
         {\mathbf{F}_{\cFunTyX {\cProdTy {\alpha} {\alpha}} {\alpha}}}
         {\cOrdPair {\mathbf{E}_{\alpha}} {x}}}
      {=}
      {\cFunAppX
         {\mathbf{F}_{\cFunTyX {\cProdTy {\alpha} {\alpha}} {\alpha}}}
         {\cOrdPair {x} {\mathbf{E}_{\alpha}}}}
      {=}
      {x}}}$.\\\hline

${\cComMonoid 
   {\mathbf{M}_{\cSetTy {\alpha}}}
   {\mathbf{F}_{\cFunTyX {\cProdTy {\alpha} {\alpha}} {\alpha}}}
   {\mathbf{E}_{\alpha}}}$\\[1ex]
\hspace{2ex} stands for\\[1ex]
${\cMonoid 
    {\mathbf{M}_{\cSetTy {\alpha}}}
    {\mathbf{F}_{\cFunTyX {\cProdTy {\alpha} {\alpha}} {\alpha}}}
    {\mathbf{E}_{\alpha}}} \And {}$\\
${\cForallX 
    {x, y} 
    {\mathbf{M}_{\cSetTy {\alpha}}}
    {\cEqX
       {\cFunAppX
            {\mathbf{F}_{\cFunTyX {\cProdTy {\alpha} {\alpha}} {\alpha}}}
            {\cOrdPair {x} {y}}}
       {\cFunAppX
            {\mathbf{F}_{\cFunTyX {\cProdTy {\alpha} {\alpha}} {\alpha}}}
            {\cOrdPair {y} {x}}}}}$\\\hline

${\cMonAction
   {\mathbf{M}_{\cSetTy {\alpha}}}
   {\mathbf{S}_{\cSetTy {\beta}}}
   {\mathbf{F}_{\cFunTyX {\cProdTy {\alpha} {\alpha}} {\alpha}}}
   {\mathbf{E}_{\alpha}}
   {\mathbf{G}_{\cFunTyX {\cProdTy {\alpha} {\beta}} {\beta}}}}$\\[1ex]
\hspace{2ex} stands for\\[1ex]
${\cMonoid 
    {\mathbf{M}_{\cSetTy {\alpha}}}
    {\mathbf{F}_{\cFunTyX {\cProdTy {\alpha} {\alpha}} {\alpha}}}
    {\mathbf{E}_{\alpha}}} \And {}$\\
${\cIsDefX {\mathbf{S}_{\cSetTy {\beta}}}} \And {}$\\
${\cNotEqX {\mathbf{S}_{\cSetTy {\beta}}} {\cEmpSetPC {\beta}}} \And {}$\\
${\cIsDefInQTyX
    {\mathbf{G}_{\cFunTyX {\cProdTy {\alpha} {\beta}} {\beta}}}
    {\cFunQTyX
       {\cProdQTy {\mathbf{M}_{\cSetTy {\alpha}}} {\mathbf{S}_{\cSetTy {\beta}}}}
       {\mathbf{S}_{\cSetTy {\beta}}}}} \And {}$\\
${\cForallBX 
    {x, y} 
    {\mathbf{M}_{\cSetTy {\alpha}}}
    {s}
    {\mathbf{S}_{\cSetTy {\beta}}} {}}$\\
      \hspace*{2ex}
      ${\cEqX
          {\cFunAppX
             {\mathbf{G}_{\cFunTyX {\cProdTy {\alpha} {\beta}} {\beta}}}
             {\cOrdPair
                {x}
                {\cFunAppX 
                   {\mathbf{G}_{\cFunTyX {\cProdTy {\alpha} {\beta}} {\beta}}}
                   {\cOrdPair
                     {y}
                     {s}}}}} 
          {\cFunAppX
            {\mathbf{G}_{\cFunTyX {\cProdTy {\alpha} {\beta}} {\beta}}}
            {\cOrdPair          
               {\cFunAppX 
                  {\mathbf{F}_{\cFunTyX {\cProdTy {\alpha} {\alpha}} {\alpha}}}
                  {\cOrdPair
                     {x}
                     {y}}}
               {s}}}} \And {}$\\
${\cForallX 
    {s} 
    {\mathbf{S}_{\cSetTy {\beta}}}
    {\cEqX 
       {\cFunAppX
          {\mathbf{G}_{\cFunTyX {\cProdTy {\alpha} {\beta}} {\beta}}}
          {\cOrdPair
             {\mathbf{E}_{\alpha}}
             {s}}}
       {s}}}$.\\\hline

${\cMonHomom 
   {\mathbf{M}^{1}_{\cSetTy {\alpha}}}
   {\mathbf{M}^{2}_{\cSetTy {\beta}}}
   {\mathbf{F}^{1}_{\cFunTyX {\cProdTy {\alpha} {\alpha}} {\alpha}}}
   {\mathbf{E}^{1}_{\alpha}} 
   {\mathbf{F}^{2}_{\cFunTyX {\cProdTy {\beta} {\beta}} {\beta}}}
   {\mathbf{E}^{2}_{\beta}}
   {\mathbf{H}_{\cFunTyX {\alpha} {\beta}}}}$\\[1ex]
\hspace{2ex} stands for\\[1ex]
${\cMonoid 
    {\mathbf{M}^{1}_{\cSetTy {\alpha}}}
    {\mathbf{F}^{1}_{\cFunTyX {\cProdTy {\alpha} {\alpha}} {\alpha}}}
    {\mathbf{E}^{1}_{\alpha}}} \And {}$\\
${\cMonoid 
    {\mathbf{M}^{2}_{\cSetTy {\beta}}}
    {\mathbf{F}^{2}_{\cFunTyX {\cProdTy {\beta} {\beta}} {\beta}}}
    {\mathbf{E}^{2}_{\beta}}} \And {}$\\
${\cIsDefInQTyX
    {\mathbf{H}_{\cFunTyX {\alpha} {\beta}}}
    {\cFunQTyX
       {\mathbf{M}^{1}_{\cSetTy {\alpha}}}
       {\mathbf{M}^{2}_{\cSetTy {\beta}}}}} \And {}$\\
${\cForallX 
    {x,y} 
    {\mathbf{M}^{1}_{\cSetTy {\alpha}}}
    {\cEqX
       {\cFunAppX 
          {\mathbf{H}_{\cFunTyX {\alpha} {\beta}}} 
          {\cFunApp  
             {\mathbf{F}^{1}_{\cFunTyX {\cProdTy {\alpha} {\alpha}} {\alpha}}}
             {\cOrdPair {x} {y}}}}
       {\cFunAppX
          {\mathbf{F}^{2}_{\cFunTyX {\cProdTy {\beta} {\beta}} {\beta}}}
          {\cOrdPair
             {\cFunAppX {\mathbf{H}_{\cFunTyX {\alpha} {\beta}}} {x}} 
             {\cFunAppX {\mathbf{H}_{\cFunTyX {\alpha} {\beta}}} {y}}}}}} \And {}$\\
${\cEqX 
    {\cFunAppX {\mathbf{H}_{\cFunTyX {\alpha} {\beta}}} {\mathbf{E}^{1}_{\alpha}}}
    {\mathbf{E}^{2}_{\beta}}}$\\

\hline
\end{tabular}
\ec
\caption{Notational Definitions for Monoids: Abbreviations}
\label{tab:monoids-abbr}
\end{table}

Let $T = (L,\Gamma)$ be a theory\footnote{A \emph{theory} of Alonzo
and related notions are presented in Chapter 9 of~\cite{Farmer25}.} of
Alonzo.  Consider a tuple
\[(\zeta_\alpha, \mathbf{F}_{\cFunTyX {\cProdTy {\alpha} {\alpha}}
  {\alpha}}, \mathbf{E}_{\alpha})\] where (1) $\zeta_\alpha$ is either
a type $\alpha$ of $L$ or a closed quasitype $\mathbf{Q}_{\cSetTy
  {\alpha}}$ of $L$ and (2)~$\mathbf{F}_{\cFunTyX {\cProdTy {\alpha}
    {\alpha}} {\alpha}}$ and $\mathbf{E}_{\alpha}$ are closed
expressions of $L$.  Let $\mathbf{X}_\cB$ be the sentence \[\cMonoid
{\mathbf{M}_{\cSetTy {\alpha}}} {\mathbf{F}_{\cFunTyX {\cProdTy
      {\alpha} {\alpha}} {\alpha}}} {\mathbf{E}_{\alpha}},\] where
\textsf{MONOID} is the abbreviation introduced by the notational
definition given in Table~\ref{tab:monoids-abbr} and
$\mathbf{M}_{\cSetTy {\alpha}}$ is $\cUnivSetPC {\alpha}$ if
$\zeta_\alpha$ is $\alpha$ and is $\mathbf{Q}_{\cSetTy {\alpha}}$
otherwise.  If $T \vDash \mathbf{X}_\cB$, then $(\zeta_\alpha,
\mathbf{F}_{\cFunTyX {\cProdTy {\alpha} {\alpha}} {\alpha}},
\mathbf{E}_{\alpha})$ denotes a monoid $(m,\cdot,e)$ in $T$.  Stated more precisely,
if $T \vDash \mathbf{X}_\cB$, then, for all general models $M$ of $T$
and all assignments $\phi \in \mName{assign}(M)$, $(\zeta_\alpha,
\mathbf{F}_{\cFunTyX {\cProdTy {\alpha} {\alpha}} {\alpha}},
\mathbf{E}_{\alpha})$ denotes the
monoid \[(V^{M}_{\phi}(\mathbf{M}_{\cSetTy {\alpha}}),
V^{M}_{\phi}(\mathbf{F}_{\cFunTyX {\cProdTy {\alpha} {\alpha}}
  {\alpha}}), V^{M}_{\phi}(\mathbf{E}_{\alpha})).\]

Thus we can show that $(\zeta_\alpha, \mathbf{F}_{\cFunTyX {\cProdTy
    {\alpha} {\alpha}} {\alpha}}, \mathbf{E}_{\alpha})$ denotes a
monoid in $T$ by proving $T \vDash \mathbf{X}_\cB$.  However, we may
need general definitions and theorems about monoids to prove
properties in~$T$ about the monoid denoted by $(\zeta_\alpha,
\mathbf{F}_{\cFunTyX {\cProdTy {\alpha} {\alpha}} {\alpha}},
\mathbf{E}_{\alpha})$.  It would be extremely inefficient to state
these definitions and prove these theorems in $T$ since instances of
these same definitions and theorems could easily be needed for other
triples in $T$, as well as in other theories, that denote monoids.

Instead of developing part of a monoid theory in $T$, we should apply
the little theories method and develop a ``little theory'' $T_{\rm
  mon}$ of monoids, separate from~$T$, that has the most convenient
level of abstraction and the most convenient vocabulary for talking
about monoids.  The general definitions and theorems of monoids can
then be introduced in a development\footnote{A \emph{development} of
Alonzo and related notions are presented in Chapter 12
of~\cite{Farmer25}.}  $D_{\rm mon}$ of $T_{\rm mon}$ in a universal
abstract form.  When these definitions and theorems are needed in a
development $D$, a development morphism\footnote{A \emph{theory
morphism} and a \emph{development morphism} of Alonzo are presented in
Sections~14.3 and 14.4, respectively, of~\cite{Farmer25}.} from
$D_{\rm mon}$ to $D$ can be created and then used to transport the
abstract definitions and theorems in $D_{\rm mon}$ to concrete
instances of them in $D$.  The validity of these concrete definitions
and theorems in $D$ is guaranteed by the fact that the abstract
definitions and theorems are valid in the top theory of $D_{\rm mon}$
and the development morphism used to transport them preserves
validity.

We can verify that $(\zeta_\alpha, \mathbf{F}_{\cFunTyX {\cProdTy
    {\alpha} {\alpha}} {\alpha}}, \mathbf{E}_{\alpha})$ denotes a
monoid in $T$ by simply constructing an appropriate theory morphism
$\Phi$ from $T_{\rm mon}$ to $T$.  As a bonus, we can use $\Phi$ to
transport the abstract definitions and theorems in $D_{\rm mon}$ to
concrete instances of them in a development of $T$ whenever they are
needed.  Moreover, we do not have to explicitly prove that a
particular property of $(\zeta_\alpha, \mathbf{F}_{\cFunTyX {\cProdTy
    {\alpha} {\alpha}} {\alpha}}, \mathbf{E}_{\alpha})$, such as
$\mathbf{X}_\cB$, that holds by virtue of $(\zeta_\alpha,
\mathbf{F}_{\cFunTyX {\cProdTy {\alpha} {\alpha}} {\alpha}},
\mathbf{E}_{\alpha})$ denoting a monoid is valid in~$T$; instead, we
only need to show that there is an abstract theorem of $T_{\rm mon}$
that $\Phi$ transports to this property.

The following theory definition module defines a suitably abstract
theory of monoids named \textsf{MON}:

\begin{theory-def}
{Monoids}
{\textsf{MON}.}
{$M$.}
{$\cdot_{\cFunTyX {\cProdTy {M} {M}} {M}},\;  \mathsf{e}_M$.}
{
\be

  \item ${\cForallX {x,y,z} {M} {\cEqX {\cBinX {x} {\cdot} {\cBin {y}
    {\cdot} {z}}} {\cBinX {\cBin {x} {\cdot} {y}} {\cdot} {z}}}}$
    \hfill ($\cdot$ is associative).

  \item ${\cForallX {x} {M} {\cBinBX {\cBinX {\mathsf{e}} {\cdot} {x}}
      {=} {\cBinX {x} {\cdot} {\mathsf{e}}} {=} {x}}}$\hfill
    ($\mathsf{e}$ is an identity element with respect to $\cdot$).

\ee
}
\end{theory-def}

\noindent
Notice that we have employed several notational definitions and
conventions in the axioms --- including dropping the types of the
constants --- for the sake of brevity.  This theory specifies the set
of monoids exactly: The base type $M$, like all types, denotes a
nonempty set $m$; the constant $\cdot_{\cFunTyX {\cProdTy {M} {M}}
  {M}}$ denotes a function $\cdot : (m \times m) \tarrow m$ that is
associative; and the constant $\mathsf{e}_M$ denotes a member $e$ of
$m$ that is an identity element with respect to $\cdot$.

The following development definition module defines a development,
named \textsf{MON-1}, of the theory \textsf{MON}:

\begin{dev-def}
{Monoids 1} 
{\textsf{MON-1}.}  
{\textsf{MON}.}  
{
\bi

  \item[] $\mName{Thm1}$: ${\cMonoid {\cUnivSetPC {M}}
    {\cdot_{\cFunTyX {\cProdTy {M} {M}} {M}}} {\mathsf{e}_M}}$\\
  \phantom{x} \hfill (models of \textsf{MON} define monoids).

  \item[] $\mName{Thm2}$: ${\cTotal {\cdot_{\cFunTyX {\cProdTy {M}
          {M}} {M}}}}$ \hfill ($\cdot$ is total).

  \item[] $\mName{Thm3}$: ${\cForallX {x} {M} {\cImpliesX {\cForall
        {y} {M} {\cBinBX {\cBinX {x} {\cdot} {y}} {=} {\cBinX {y}
            {\cdot} {x}} {=} {y}}} {\cEqX {x}
        {\mathsf{e}}}}}$\\ \phantom{x} \hfill (uniqueness of identity
    element).

  \item[] $\mName{Def1}$: ${\cEqX {\mathsf{submonoid}_{\cFunTyX
        {\cSetTy {M}} {\cB}}} {}}$\\ \hspace*{2ex} ${\cFunAbsX {s}
    {\cSetTy {M}} {\cAndX {\cAndX {\cNotEqX {s} {\cEmpSetPC {M}}}
        {\cIsDefInQTy {\cRestrictX {\cdot} {\cProdTyX {s} {s}}}
          {\cFunTyX {\cProdTy {s} {s}} {s}}}} {\cInX {\mathsf{e}}
        {s}}}}$ \hfill (submonoid).

  \item[] $\mName{Thm4}$: ${\cForallX {s} {\cSetTy {M}} {\cImpliesX
      {\cFunAppX {\mathsf{submonoid}} {s}} {\cMonoid {s}
        {\cdot\wrestricted_{\cProdTyX {s} {s}}}
        {\mathsf{e}}}}}$\\ \phantom{x} \hfill (submonoids are
    monoids).

  \item[] $\mName{Thm5}$: ${\cFunAppX {\mathsf{submonoid}} {\cFinSetL
      {\mathsf{e}}}}$ \hfill (minimum submonoid).

  \item[] $\mName{Thm6}$: ${\cFunAppX {\mathsf{submonoid}}
    {\cUnivSetPC {M}}}$ \hfill (maximum submonoid).

  \item[] $\mName{Def2}$: ${\cEqX {\cdot^{\rm op}_{\cFunTyX {\cProdTy
          {M} {M}} {M}}} {\cFunAbsX {p} {\cProdTyX {M} {M}} {\cBinX
        {\cFunApp {\mathsf{snd}} {p}} {\cdot} {\cFunApp {\mathsf{fst}}
          {p}}}}}$ \hfill (opposite of $\cdot$).

  \item[] $\mName{Thm7}$: ${\cForallX {x,y,z} {M} {\cEqX {\cBinX {x}
        {\cdot^{\rm op}} {\cBin {y} {\cdot^{\rm op}} {z}}} {\cBinX
        {\cBin {x} {\cdot^{\rm op}} {y}} {\cdot^{\rm op}} {z}}}}$\\
    \phantom{x} \hfill ($\cdot^{\rm op}$ is associative).

  \item[] $\mName{Thm8}$: ${\cForallX {x} {M} {\cBinBX {\cBinX
        {\mathsf{e}} {\cdot^{\rm op}} {x}} {=} {\cBinX {x} {\cdot^{\rm
            op}} {\mathsf{e}}} {=} {x}}}$\\ 
   \phantom{x} \hfill ($\mathsf{e}$ is an identity element with
   respect to $\cdot^{\rm op}$).

  \item[] $\mName{Def3}$: ${\cEqX {\odot_{\cFunTyX {\cProdTy {\cSetTy
            {M}} {\cSetTy {M}}} {\cSetTy {M}}}} {\cFunAppX
      {\cSetProdPC {\textsf{set-op}} {M} {M} {M}} {\cdot}}}$\\ 
  \phantom{x} \hfill (set product).

  \item[] $\mName{Def4}$: ${\cEqX {\mathsf{E}_{\cSetTy {M}}}
    {\cFinSetL {\mathsf{e}_M}}}$ \hfill (set identity element).

  \item[] $\mName{Thm9}$: ${\cForallX {x,y,z} {\cSetTy {M}} {\cEqX
      {\cBinX {x} {\odot} {\cBin {y} {\odot} {z}}} {\cBinX {\cBin {x}
          {\odot} {y}} {\odot} {z}}}}$
  \hfill ($\odot$ is  associative).

  \item[] $\mName{Thm10}$: ${\cForallX {x} {\cSetTy {M}} {\cBinBX
      {\cBinX {\mathsf{E}} {\odot} {x}} {=} {\cBinX {x}
        {\odot} {\mathsf{E}}} {=} {x}}}$\\ \phantom{x}
    \hfill ($\mathsf{E}$ is an identity element with
    respect to $\odot$).

\ei 
}
\end{dev-def}

\noindent
${\cSetProdPC {\textsf{set-op}} {M} {M} {M}}$ is an instance of the
parametric pseudoconstant ${\cSetProdPC {\textsf{set-op}} {\alpha}
  {\beta} {\gamma}}$ defined in Table~\ref{tab:monoids-pc}.

\bsp $\mName{Thm1}$ states that each model of \textsf{MON} defines a
monoid.  $\mName{Thm2}$ states that the monoid's binary function is
total (which is implied by the first axiom of \textsf{MON}).
$\mName{Thm3}$ states that a monoid's identity element is
unique. $\mName{Def1}$ defines the notion of a \emph{submonoid} and
\textsf{Thm4--Thm6} are three theorems about submonoids.  Notice that
$\cdot\wrestricted_{\cProdTyX {s} {s}}$, the restriction of $\cdot$ to
$\cProdTyX {s} {s}$, denotes a partial function.  Notice also that
\[{\cIsDefInQTyX {\cRestrictX {\cdot} {\cProdTyX {s} {s}}}
 {\cFunTyX {\cProdTy {s} {s}} {s}}}\] in $\mName{Def1}$ asserts that
$s$ is closed under ${\cRestrictX {\cdot} {\cProdTyX {s} {s}}}$ since
$\cdot$ is total by $\mName{Thm2}$.  $\mName{Def2}$
defines~$\cdot^{\rm op}_{\cFunTyX {\cProdTy {M} {M}} {M}}$, the
\emph{opposite} of $\cdot$, and \textsf{Thm7--Thm8} are key theorems
about~$\cdot^{\rm op}$.  $\mName{Def3}$ defines~$\odot_{\cFunTyX
  {\cProdTy {\cSetTy {M}} {\cSetTy {M}}} {\cSetTy {M}}} $, the
\emph{set product} on ${\cSetTy {M}}$; $\mName{Def4}$
defines~$\mathsf{E}_{\cSetTy {M}}$, the identity element with respect
to $\odot$; and \textsf{Thm9--Thm10} are key theorems about $\odot$.
These four definitions and ten theorems require proofs that show the
RHS of each definition (i.e., the definition's definiens) is defined
and each theorem is valid.  The proofs are given in
Appendix~\ref{app:validation}.  \esp

\section{Transportation of Definitions and Theorems}\label{sec:transportation}

Let $T$ be a theory such that $T \vDash \mathbf{X}_\cB$ where
$\mathbf{X}_\cB$ is the sentence \[{\cMonoid {\mathbf{M}_{\cSetTy
      {\alpha}}} {\mathbf{F}_{\cFunTyX {\cProdTy {\alpha} {\alpha}}
      {\alpha}}} {\mathbf{E}_{\alpha}}},\] and assume that $D$ is some
development of $T$ (which could be $T$ itself).  We would like to show
how the definitions and theorems of the development \textsf{MON-1} can
be transported to $D$.\footnote{A \emph{transportation} is presented
in Subsection 14.4.2 of~\cite{Farmer25}.}

Before considering the general case, we will consider the special case
when $\mathbf{M}_{\cSetTy {\alpha}}$ is $\cUnivSetPC {\alpha}$, which
denotes the entire domain for the type $\alpha$, and
$\mathbf{F}_{\cFunTyX {\cProdTy {\alpha} {\alpha}} {\alpha}}$ and
$\mathbf{E}_{\alpha}$ are constants $\mathbf{c}_{\cFunTyX {\cProdTy
    {\alpha} {\alpha}} {\alpha}}$ and~$\mathbf{d}_{\alpha}$.  We start
by defining a theory morphism from \textsf{MON} to $T$ using a theory
translation definition module:

\begin{theory-trans-def}
{Special \textnormal{\textsf{MON}} to $T$}
{\textsf{special-MON-to-$T$}.}
{\textsf{MON}.}
{$T$.}
{
\be

  \item $M \mapsto \alpha$.

\ee
}
{
\be

  \item ${\cdot_{\cFunTyX {\cProdTy {M} {M}} {M}}} \mapsto
    {\mathbf{c}_{\cFunTyX {\cProdTy {\alpha} {\alpha}} {\alpha}}}$.

  \item ${\mathsf{e}_M} \mapsto {\mathbf{d}_{\alpha}}$.

\ee
}
\end{theory-trans-def}

Since \textsf{special-MON-to-$T$} is a normal translation\footnote{A
\emph{theory translation} and a \emph{development translation} of
Alonzo are presented in Subsections 14.3.1 and 14.4.1, respectively,
of~\cite{Farmer25}.}, it has no obligations of the first kind
by~\cite[Lemma 14.10]{Farmer25} and two obligations of the second kind
which are valid in $T$ by~\cite[Lemma 14.11]{Farmer25}.  It has two
obligations of the third kind corresponding to the two axioms of
\textsf{MON}.  $T \vDash \mathbf{X}_\cB$ implies that each of these
two obligations is valid in $T$.  Therefore,
\textsf{special-MON-to-$T$} is a theory morphism from \textsf{MON} to
$T$ by the Morphism Theorem \cite[Theorem
  14.16]{Farmer25}.\footnote{An \emph{obligation} of a theory
translation and the Morphism Theorem are presented in Subsection
14.3.2 of~\cite{Farmer25}.}

Now we can transport the definitions and theorems of \textsf{MON-1} to
$D$ via \textsf{special-MON-to-$T$} using definition and theorems
transportation modules.  For example, $\mName{Thm3}$ and
$\mName{Def1}$ can be transported using the following two modules:

\begin{thm-transport}
{Transport of \textnormal{\textsf{Thm3}} to $D$}
{\textsf{uniqueness-of-identity-element-via-special-MON-to-$D$}.}
{\textsf{MON-1}.}
{\textsf{$D$}.}
{\textsf{special-MON-to-$T$}.}
{
\bi

  \item[] $\mName{Thm3}$: ${\cForallX {x} {M} {\cImpliesX {\cForall
        {y} {M} {\cBinBX {\cBinX {x} {\cdot} {y}} {=} {\cBinX {y}
            {\cdot} {x}} {=} {y}}} {\cEqX {x}
        {\mathsf{e}}}}}$\\ \phantom{x} \hfill (uniqueness of identity
    element).

\ei
}
{
\bi

  \item[] \textsf{Thm3-via-special-MON-to-$T$}:\\
  \hspace*{2ex}
  ${\cForallX {x} {\alpha} {\cImpliesX {\cForall
        {y} {\alpha} {\cBinBX {\cBinX {x} {\mathbf{c}} {y}} {=} {\cBinX {y}
            {\mathbf{c}} {x}} {=} {y}}} {\cEqX {x}
        {\mathbf{d}}}}}$\\
  \phantom{x} \hfill (uniqueness of identity element).

\ei
}
{\textsf{$D'$}.}
\end{thm-transport}

\begin{def-transport}
{Transport of \textnormal{\textsf{Def1}} to $D'$}
{\textsf{submonoid-via-special-MON-to-$D'$}.}
{\textsf{MON-1}.}
{$D'$.}
{\textsf{special-MON-to-$T$}.}
{
\bi

  \item[] $\mName{Def1}$: 
  ${\cEqX 
      {\mathsf{submonoid}_{\cFunTyX {\cSetTy {M}} {\cB}}} {}}$\\ 
      \hspace*{2ex} 
      ${\cFunAbsX 
          {s}
          {\cSetTy {M}} 
          {\cAndX 
             {\cAndX 
                {\cNotEqX {s} {\cEmpSetPC {M}}}
                {\cIsDefInQTy 
                   {\cRestrictX {\cdot} {\cProdTyX {s} {s}}}
                   {\cFunTyX {\cProdTy {s} {s}} {s}}}} 
             {\cInX {\mathsf{e}} {s}}}}$ \hfill (submonoid).

\ei
}
{
\bi

  \item[] $\textsf{Def1-via-special-MON-to-$T$}$: 
  ${\cEqX 
      {\mathsf{submonoid}_{\cFunTyX {\cSetTy {\alpha}} {\cB}}} {}}$\\ 
      \hspace*{2ex} 
      ${\cFunAbsX 
          {s}
          {\cSetTy {\alpha}} 
          {\cAndX 
             {\cAndX 
                {\cNotEqX {s} {\cEmpSetPC {\alpha}}}
                {\cIsDefInQTy 
                   {\cRestrictX {\mathbf{c}} {\cProdTyX {s} {s}}}
                   {\cFunTyX {\cProdTy {s} {s}} {s}}}} 
             {\cInX {\mathbf{d}} {s}}}}$ \hfill (submonoid).

\ei
}
{\textsf{$D''$}.}
{\textsf{special-MON-1-to-$D'$}.}
\end{def-transport}

We will next consider the general case when $\mathbf{M}_{\cSetTy
  {\alpha}}$ may be different from $\cUnivSetPC {\alpha}$ and
$\mathbf{F}_{\cFunTyX {\cProdTy {\alpha} {\alpha}} {\alpha}}$ and
$\mathbf{E}_{\alpha}$ may not be constants.  The general case is
usually more complicated and less succinct than the special case.
We start again by defining a theory morphism from \textsf{MON} to $T$
using a theory translation definition module:

\begin{theory-trans-def}
{General \textnormal{\textsf{MON}} to $T$}
{\textsf{general-MON-to-$T$}.}
{\textsf{MON}.}
{$T$.}
{
\be

  \item $M \mapsto {\mathbf{M}_{\cSetTy {\alpha}}}$.

\ee
}
{
\be

  \item ${\cdot_{\cFunTyX {\cProdTy {M} {M}} {M}}} \mapsto
    {\mathbf{F}_{\cFunTyX {\cProdTy {\alpha} {\alpha}} {\alpha}}}$.

  \item ${\mathsf{e}_M} \mapsto {\mathbf{E}_{\alpha}}$.

\ee
}
\end{theory-trans-def}

Let $\textsf{general-MON-to-$T$} = (\mu,\nu)$.  Then
\textsf{general-MON-to-$T$} has the following five obligations (one of
the first, two of the second, and two of the third kind):

\be

  \item $\overline{\nu}(\cNotEqX {\cUnivSetPC {M}} {\cEmpSetPC {M}})
    \SynEqual {\cNotEqX {\cFunAbs {x} {\mathbf{M}_{\cSetTy {\alpha}}}
        {\cT}} {\cFunAbs {x} {\mathbf{M}_{\cSetTy {\alpha}}} {\cF}}}$.

  \item $\overline{\nu}(\cIsDefInQTyX {\cdot_{\cFunTyX {\cProdTy {M}
        {M}} {M}}} {\cUnivSetPC {\cFunTyX {\cProdTy {M} {M}} {M}}})
    \SynEqual {}$\\ \hspace*{2ex} ${\cIsDefInQTyX
      {\mathbf{F}_{\cFunTyX {\cProdTy {\alpha} {\alpha}} {\alpha}}}
      {\cFunAbs {x} {\cFunQTyX {\cProdQTy {\mathbf{M}_{\cSetTy
                {\alpha}}} {\mathbf{M}_{\cSetTy {\alpha}}}}
          {\mathbf{M}_{\cSetTy {\alpha}}}} {\cT}}}$.

  \item $\overline{\nu}(\cIsDefInQTyX {\mathsf{e}_M} {\cUnivSetPC
    {M}})\SynEqual {\cIsDefInQTyX {\mathbf{E}_{\alpha}} {\cFunAbs {x}
      {\mathbf{M}_{\cSetTy {\alpha}}} {\cT}}}$.

  \item $\overline{\nu}(\cForallX {x,y,z} {M} {\cEqX {\cBinX {x}
      {\cdot} {\cBin {y} {\cdot} {z}}} {\cBinX {\cBin {x} {\cdot} {y}}
      {\cdot} {z}}}) \SynEqual {}$\\
  \hspace*{2ex}
   ${\cForallX 
      {x, y, z} 
      {\mathbf{M}_{\cSetTy {\alpha}}} {}}$\\
      \hspace*{4ex}
      ${\cEqX 
          {\cFunAppX
             {\mathbf{F}_{\cFunTyX {\cProdTy {\alpha} {\alpha}} {\alpha}}}
             {\cOrdPair
                {x}
                {\cFunAppX 
                   {\mathbf{F}_{\cFunTyX {\cProdTy {\alpha} {\alpha}} {\alpha}}}
                   {\cOrdPair {y} {z}}}}}
          {\cFunAppX
             {\mathbf{F}_{\cFunTyX {\cProdTy {\alpha} {\alpha}} {\alpha}}}
             {\cOrdPair          
                {\cFunAppX 
                   {\mathbf{F}_{\cFunTyX {\cProdTy {\alpha} {\alpha}} {\alpha}}}
                   {\cOrdPair {x} {y}}}
                {z}}}}$.

  \item $\overline{\nu}(\cForallX {x} {M} {\cBinBX {\cBinX
      {\mathsf{e}} {\cdot} {x}} {=} {\cBinX {x} {\cdot} {\mathsf{e}}}
    {=} {x}}) \SynEqual {}$\\
  \hspace*{2ex}
  ${\cForallX
      {x}
      {\mathbf{M}_{\cSetTy {\alpha}}}
      {\cBinBX
         {\cFunAppX 
            {\mathbf{F}_{\cFunTyX {\cProdTy {\alpha} {\alpha}} {\alpha}}}
            {\cOrdPair {\mathbf{E}_{\alpha}} {x}}}
         {=}
         {\cFunAppX
            {\mathbf{F}_{\cFunTyX {\cProdTy {\alpha} {\alpha}} {\alpha}}}
            {\cOrdPair {x} {\mathbf{E}_{\alpha}}}}
         {=}
         {x}}}$.

\ee

\noindent
$\mathbf{A}_\alpha \SynEqual \mathbf{B}_\alpha$ means the expressions
denoted by $\mathbf{A}_\alpha$ and $\mathbf{B}_\alpha$ are identical.

$T \vDash \mathbf{X}_\cB$ implies that each of these obligations is
valid in $T$ as follows.  The first and second conjuncts of
$\mathbf{X}_\cB$ imply that the first obligation is valid in $T$ by
part 3 of \cite[Lemma 14.9]{Farmer25}.  The first and third conjuncts
imply that the second obligation is valid in $T$ by part 5 of
\cite[Lemma 14.9]{Farmer25}.  The first and fourth conjuncts imply
that the third obligation is valid in $T$ by part 5 of \cite[Lemma
  14.9]{Farmer25}.  And the fifth and sixth conjuncts imply,
respectively, that the fourth and fifth obligations are valid in~$T$.
Therefore, \textsf{general-MON-to-$T$} is a theory morphism by the
Morphism Theorem \cite[Theorem 14.16]{Farmer25}.

We can now transport, as before, the definitions and theorems of
\mbox{\textsf{MON-1}} to $D$ via \textsf{general-MON-to-$T$} using
definition and theorem transportation modules, but we can also
transport them using a group transportation module\footnote{This kind
of module transports a set of definitions and theorems as a group in
which order does not matter.  A group transportation has nothing to do
with the algebraic structure called a group.}. For example,
$\mName{Thm3}$ and $\mName{Def1}$ can be transported as a group using
the following group transportation module:

\begin{group-transport}
{Transport of \textnormal{\textsf{Thm3}} and \textnormal{\textsf{Def1}} to $D$}
{\textsf{uniqueness-of-identity-element-and-submonoid-to-$D$}.}
{\textsf{MON-1}.}
{$D$.}
{\textsf{general-MON-to-$T$}.}
{
\bi

  \item[] $\mName{Thm3}$: ${\cForallX {x} {M} {\cImpliesX {\cForall
        {y} {M} {\cBinBX {\cBinX {x} {\cdot} {y}} {=} {\cBinX {y}
            {\cdot} {x}} {=} {y}}} {\cEqX {x}
        {\mathsf{e}}}}}$\\ \phantom{x} \hfill (uniqueness of identity
    element).

  \item[] $\mName{Def1}$: 
  ${\cEqX 
      {\mathsf{submonoid}_{\cFunTyX {\cSetTy {M}} {\cB}}} {}}$\\ 
      \hspace*{2ex} 
      ${\cFunAbsX 
          {s}
          {\cSetTy {M}} 
          {\cAndX 
             {\cAndX 
                {\cNotEqX {s} {\cEmpSetPC {M}}}
                {\cIsDefInQTy 
                   {\cRestrictX {\cdot} {\cProdTyX {s} {s}}}
                   {\cFunTyX {\cProdTy {s} {s}} {s}}}} 
             {\cInX {\mathsf{e}} {s}}}}$ \hfill (submonoid).

\ei
}
{
\bi

  \item[] \textsf{Thm3-via-general-MON-to-$T$}:\\
  \hspace*{2ex}
  ${\cForallX
      {x}
      {\mathbf{M}_{\cSetTy {\alpha}}} {}}$\\
      \hspace*{4ex}
      ${\cImpliesX
          {\cForall
             {y}
             {\mathbf{M}_{\cSetTy {\alpha}}}
             {\cBinBX
                {\cFunAppX 
                   {\mathbf{F}_{\cFunTyX {\cProdTy {\alpha} {\alpha}} {\alpha}}}
                   {\cOrdPair {x} {y}}}
                {=}
                {\cFunAppX
                   {\mathbf{F}_{\cFunTyX {\cProdTy {\alpha} {\alpha}} {\alpha}}}
                   {\cOrdPair {y} {x}}}
                {=}
                {y}}}
          {\cEqX {x} {\mathbf{E}_{\alpha}}}}$\\ 
  \phantom{x} \hfill (uniqueness of identity element).

  \item[] $\textsf{Def1-via-general-MON-to-$T$}$: 
  ${\cEqX 
      {\mathsf{submonoid}_{\cFunTyX {\cSetTy {\alpha}} {\cB}}} {}}$\\ 
      \hspace*{2ex} 
      ${\cFunAbsX 
          {s}
          {\cSetQTy {\mathbf{M}_{\cSetTy {\alpha}}}} {}}$\\
          \hspace*{4ex}
         ${\cNotEqX 
            {s} 
            {\cFunAbs {x} {\mathbf{M}_{\cSetTy {\alpha}}} {\cF}}} \And {}$\\
          \hspace*{4ex}
         ${\cIsDefInQTy 
             {\cRestrictX
                {\mathbf{F}_{\cFunTyX {\cProdTy {\alpha} {\alpha}} {\alpha}}}
                {\cProdTyX {s} {s}}}
             {\cFunTyX {\cProdTy {s} {s}} {s}}} \And {}$\\
          \hspace*{4ex} 
         ${\cInX {\mathbf{E}_{\alpha}} {s}}$ \hfill (submonoid).

\ei
}
{$D'$.}
{\textsf{general-MON-1-to-$D'$}.}
\end{group-transport} 

\noindent
The abbreviation ${\cSetQTy {\mathbf{M}_{\cSetTy {\alpha}}}}$, which
denotes the power set of $\mathbf{M}_{\cSetTy {\alpha}}$, is defined in
Table~\ref{tab:nd-quasitypes}.

\section{Opposite and Set Monoids}\label{sec:op-and-set-monoids}

For every monoid $(m,\cdot,e)$, there is (1) an associated monoid
$(m,\cdot^{\rm op},e)$, where $\cdot^{\rm op}$ is the opposite of
$\cdot$, called the \emph{opposite monoid} of $(m,\cdot,e)$ and (2)~a
monoid $(\sP(m),\odot,\mSet{e})$, where $\sP(m)$ is the power set of
$m$ and $\odot$ is the set product on $\sP(m)$, called the \emph{set
monoid} of $(m,\cdot,e)$.

We will construct a development morphism named
\textsf{MON-to-opposite-monoid} from the theory \textsf{MON} to its
development \textsf{MON-1} that maps \[(M,\cdot_{\cFunTyX {\cProdTy
    {M} {M}} {M}},\mathsf{e})\] to \[(M, {\cdot^{\rm op}_{\cFunTyX
    {\cProdTy {M} {M}} {M}}}, {\mathsf{e}}).\] Then we will be able to
use this morphism to transport abstract definitions and theorems about
monoids to more concrete definitions and theorems about opposite
monoids.  Here is the definition of \textsf{MON-to-opposite-monoid}:

\newpage

\begin{dev-trans-def}
{\textnormal{\textsf{MON}} to Op.~Monoid}
{\textsf{MON-to-opposite-monoid}.}
{\textsf{MON}.}
{\textsf{MON-1}.}
{
\be

  \item $M \mapsto M$.

\ee
}
{
\be

  \item ${\cdot_{\cFunTyX {\cProdTy {M} {M}} {M}}} \mapsto {\cdot^{\rm
      op}_{\cFunTyX {\cProdTy {M} {M}} {M}}}$.

  \item ${\mathsf{e}_M} \mapsto {\mathsf{e}_M}$.

\ee
}
\end{dev-trans-def}

Since \textsf{MON-to-opposite-monoid} is a normal translation, it has
no obligations of the first kind by~\cite[Lemma 14.10]{Farmer25} and
two obligations of the second kind which are valid in the top theory
of \textsf{MON-1} by~\cite[Lemma 14.11]{Farmer25}.  It has two
obligations of the third kind corresponding to the two axioms of
\textsf{MON}.  These two obligations are logically equivalent to
\textsf{Thm7} and \textsf{Thm8}, respectively, in
\mbox{\textsf{MON-1}}, and so these two theorems are obviously valid
in the top theory of \mbox{\textsf{MON-1}}.  Therefore,
\textsf{MON-to-opposite-monoid} is a development morphism from
\textsf{MON} to \textsf{MON-1} by the Morphism Theorem
\cite[Theorem~14.16]{Farmer25}.

We can now transport \textsf{Thm1} via \textsf{MON-to-opposite-monoid}
to show that opposite monoids are indeed monoids:

\begin{thm-transport}
{Transport of \textnormal{\textsf{Thm1}} to \textnormal{\textsf{MON-1}}}
{\textsf{monoid-via-MON-to-opposite-monoid}.}
{\textsf{MON}.}
{\textsf{MON-1}.}
{\textsf{MON-to-opposite-monoid}.}
{
\bi

  \item[] $\mName{Thm1}$: ${\cMonoid {\cUnivSetPC {M}}
    {\cdot_{\cFunTyX {\cProdTy {M} {M}} {M}}} {\mathsf{e}_M}}$\\
  \phantom{x} \hfill (models of \textsf{MON} define monoids).

\ei
}
{
\bi

  \item[] \textsf{Thm11 (Thm1-via-MON-to-opposite-monoid)}:\\
  \hspace*{2ex} ${\cMonoid {\cUnivSetPC {M}} {\cdot^{\rm op}_{\cFunTyX
        {\cProdTy {M} {M}} {M}}} {\mathsf{e}_M}}$ 
  \hfill (opposite monoids are monoids).

\ei
}
{\textsf{MON-2}.}
\end{thm-transport}

\bsp Similarly, we will construct a development morphism named
\textsf{MON-to-set-monoid} from the theory \textsf{MON} to its
development \textsf{MON-2} that maps \[(M,{\cdot_{\cFunTyX {\cProdTy
      {M} {M}} {M}}}, {\mathsf{e}_M})\] to \[({\cSetTy {M}},
{\odot_{\cFunTyX {\cProdTy {\cSetTy {M}} {\cSetTy {M}}} {\cSetTy
      {M}}}}, {\mathsf{E}_{\cSetTy {M}}}).\] Then we will be able to
use this morphism to transport abstract definitions and theorems about
monoids to more concrete definitions and theorems about set monoids.
Here is the definition of \textsf{MON-to-set-monoid}: \esp

\begin{dev-trans-def}
{\textnormal{\textsf{MON}} to Set Monoid}
{\textsf{MON-to-set-monoid}.}
{\textsf{MON}.}
{\textsf{MON-2}.}
{
\be
 
  \item $M \mapsto {\cSetTy {M}}$.

\ee
}
{
\be

  \item $\cdot_{\cFunTyX {\cProdTy {M} {M}} {M}} \mapsto
    {\odot_{\cFunTyX {\cProdTy {\cSetTy {M}} {\cSetTy {M}}} {\cSetTy
          {M}}}}$.

  \item $\mathsf{e}_M \mapsto {\mathsf{E}_{\cSetTy {M}}}$.

\ee
}
\end{dev-trans-def}

Since \textsf{MON-to-set-monoid} is a normal translation, it has no
obligations of the first kind by~\cite[Lemma 14.10]{Farmer25}.  It has
two obligations of the second kind.  The first one is valid in the top
theory of \mbox{\textsf{MON-2}} by part 4 of \cite[Lemma
  14.9]{Farmer25} since ${\odot_{\cFunTyX {\cProdTy {\cSetTy {M}}
      {\cSetTy {M}}} {\cSetTy {M}}}}$ beta-reduces by~\cite[Axiom
  A4]{Farmer25} to a function abstraction which is defined
by~\cite[Axiom A5.11]{Farmer25}.  The second one is valid in the top
theory of \mbox{\textsf{MON-2}} by part 4 of \cite[Lemma
  14.9]{Farmer25} since ${\mathsf{E}_{\cSetTy {M}}}$ is a function
abstraction which is defined by~\cite[Axiom A5.11]{Farmer25}.  It has
two obligations of the third kind corresponding to the two axioms of
\textsf{MON}.  These two obligations are \textsf{Thm9} and
\textsf{Thm10}, respectively, in \mbox{\textsf{MON-2}}, and so these
two theorems are obviously valid in the top theory of
\mbox{\textsf{MON-2}}.  Therefore, \textsf{MON-to-set-monoid} is a
development morphism from \textsf{MON} to \textsf{MON-2} by the
Morphism Theorem \cite[Theorem~14.16]{Farmer25}.

We can now transport \textsf{Thm1} via \textsf{MON-to-set-monoid} to
show that set monoids are indeed monoids:

\begin{thm-transport}
{Transport of \textnormal{\textsf{Thm1}} to \textnormal{\textsf{MON-2}}}
{\textsf{monoid-via-MON-to-set-monoid}.}
{\textsf{MON}.}
{\textsf{MON-2}.}
{\textsf{MON-to-set-monoid}.}
{
\bi

  \item[] $\mName{Thm1}$: ${\cMonoid {\cUnivSetPC {M}}
    {\cdot_{\cFunTyX {\cProdTy {M} {M}} {M}}} {\mathsf{e}_M}}$\\
  \phantom{x} \hfill (models of \textsf{MON} define monoids).

\ei
}
{
\bi

  \item[] \textsf{Thm12 (Thm1-via-MON-to-set-monoid)}:\\
  \hspace*{2ex}
  ${\cMonoid
     {\cUnivSetPC {\cSetTy {M}}}
     {\odot_{\cFunTyX {\cProdTy {\cSetTy {M}} {\cSetTy {M}}} {\cSetTy {M}}}} 
     {\mathsf{E}_{\cSetTy {M}}}}$\\
  \phantom{x} \hfill (set  monoids are monoids).

\ei
}
{\textsf{MON-3}.}
\end{thm-transport}

\section{Commutative Monoids}\label{sec:com-monoids}

A monoid $(m,\cdot,e)$ is \emph{commutative} if $\cdot$ is
commutative.  

Let $\mathbf{Y}_\cB$ be the formula \[\cComMonoid {\mathbf{M}_{\cSetTy
    {\alpha}}} {\mathbf{F}_{\cFunTyX {\cProdTy {\alpha} {\alpha}}
    {\alpha}}} {\mathbf{E}_{\alpha}},\] where \textsf{COM-MONOID} is
the abbreviation introduced by the notational definition given in
Table~\ref{tab:monoids-abbr}.  $\mathbf{Y}_\cB$ asserts that the tuple
\[({\mathbf{M}_{\cSetTy {\alpha}}}, {\mathbf{F}_{\cFunTyX {\cProdTy
      {\alpha} {\alpha}} {\alpha}}}, {\mathbf{E}_{\alpha}})\] denotes
a commutative monoid $(m,\cdot,e)$.

We can define a theory of commutative monoids, named \textsf{COM-MON},
by adding an axiom that says $\cdot$ is commutative to the theory
\textsf{MON} using a theory extension module:

\begin{theory-ext}
{Commutative Monoids}
{\textsf{COM-MON}.}
{\textsf{MON}.}
{}
{}
{
\be \setcounter{enumi}{2}

  \item ${\cForallX {x,y} {M} {\cEqX {\cBinX {x} {\cdot} {y}} {\cBinX
        {y} {\cdot} {x}}}}$ \hfill ($\cdot$ is commutative).

\ee
}
\end{theory-ext}

Then we can develop the theory \textsf{COM-MON} using the following
development definition module:

\begin{dev-def}
{Commutative Monoids 1}
{\textsf{COM-MON-1}.}
{\textsf{COM-MON}.}
{
\bi

  \item[] $\mName{Thm13}$: ${\cComMonoid {\cUnivSetPC {M}}
    {\cdot_{\cFunTyX {\cProdTy {M} {M}} {M}}} {\mathsf{e}_M}}$\\
  \phantom{x} \hfill (models of \textsf{COM-MON} define commutative monoids).

  \item[] $\mName{Def5}$: ${\cEqX {\mathsf{\le}_{\cFunTyBX {M} {M}
        {\cB}}} {\cFunAbsX {x,y} {M} {\cForsomeX {z} {M} {\cEqX
          {\cBinX {x} {\cdot} {z}} {y}}}}}$ \hfill (weak order).

  \item[] $\mName{Thm14}$: ${\cForallX {x} {M} {\cBinX {x} {\le} {x}}}$\hfill
    (reflexivity).

  \item[] $\mName{Thm15}$: ${\cForallX {x,y,z} {M} {\cImpliesX {\cAnd
        {\cBinX {x} {\le} {y}} {\cBinX {y} {\le} {z}}} {\cBinX {x}
        {\le} {z}}}}$ \hfill (transitivity).

\ei
}
\end{dev-def}

\noindent
\textsf{Thm13} states that each model of \textsf{COM-MON} defines a
commutative monoid.  \textsf{Def5} defines a weak (nonstrict) order
that is a pre-order by \textsf{Thm14} and \textsf{Thm15}.  We could
have put \textsf{Def5}, \textsf{Thm14}, and \textsf{Thm15} in a
development of \textsf{MON} since \textsf{Thm14} and \textsf{Thm15} do
not require that $\cdot$ is commutative, but we have put these in
\textsf{COM-MON} instead since ${\mathsf{\le}_{\cFunTyBX {M} {M}
    {\cB}}}$ is more natural for commutative monoids than for
noncommutative monoids.

Since \textsf{COM-MON} is an extension of \textsf{MON}, there is an
inclusion (i.e.,~a theory morphism whose mapping is the identity
function) from \textsf{MON} to \textsf{COM-MON}.  This inclusion is
defined by the following theory translation definition module:

\newpage

\begin{theory-trans-def}
{\textnormal{\textsf{MON}} to \textnormal{\textsf{COM-MON}}}
{\textsf{MON-to-COM-MON}.}
{\textsf{MON}.}
{\textsf{COM-MON}.}
{
\be

  \item $M \mapsto M$.

\ee
}
{
\be

  \item ${\cdot_{\cFunTyX {\cProdTy {M} {M}} {M}}} \mapsto
    {\cdot_{\cFunTyX {\cProdTy {M} {M}} {M}}}$.

  \item ${\mathsf{e}_M} \mapsto {\mathsf{e}_M}$.

\ee
}
\end{theory-trans-def}

\noindent
We will assume that, whenever we define a theory extension $T'$ of a
theory $T$, we also simultaneously define the inclusion from $T$ to
$T'$.

Since \textsf{MON-to-COM-MON} is an inclusion from \textsf{MON} to
\textsf{COM-MON}, it is also a development morphism from
\textsf{MON-3} to \textsf{COM-MON-1} and the definitions and theorems
of \textsf{MON-3} can be freely transported verbatim to
\textsf{COM-MON-1}.  In the rest of the paper, when a theory $T'$ is
an extension of a theory $T$ and $D$ is a development of $T$, we will
assume that the definitions and theorems of $D$ are also definitions
and theorems of any trivial or nontrivial development of $T'$ without
explicitly transporting them via the inclusion from $T$ to $T'$ as
long as there are no name clashes.  This assumption is given the name
\emph{inclusion transportation convention} in~\cite[Subsection
  14.4.3]{Farmer25}.

\section{Transformation Monoids}\label{sec:trans-monoids}

A very important type of monoid is a monoid composed of
transformations of a set.  Let $s$ be a nonempty set.  Then
$(f,\circ,\mathsf{id})$, where $f$ is a set of (partial or total)
functions from $s$ to $s$, \[\circ: ((s \tarrow s) \times (s \tarrow
s)) \tarrow (s \tarrow s)\] is function composition, and $\mathsf{id}
: s \tarrow s$ is the identity function, is a \emph{transformation
monoid on $s$} if $f$ is closed under $\circ$ and $\mathsf{id} \in f$.
It is easy to verify that every transformation monoid is a monoid.  If
$f$ contains every function in the function space $s \tarrow s$, then
$(f,\circ,\mathsf{id})$ is clearly a transformation monoid which is
called the \emph{full transformation monoid on $s$}.  Let us say that
a transformation monoid $(f,\circ,\mathsf{id})$ is \emph{standard} if
$f$ contains only total functions.  In many developments, nonstandard
transformation monoids are ignored, but there is no reason to do that
here since Alonzo admits undefined expressions and partial functions.

Consider the following theory \textsf{ONE-BT} of one base type:

\begin{theory-def}
{One Base Type}
{\textsf{ONE-BT}.}
{$S$.}
{}
{}
\end{theory-def}

\noindent
We can define the notion of a transformation monoid in a development
of this theory, but we must first introduce some general facts about
function composition.  To do that, we need a theory \textsf{FUN-COMP}
with four base types in order to state the associativity theorem for
function composition in full generality:

\begin{theory-def}
{Function Composition}
{\textsf{FUN-COMP}.}
{$A,B,C,D$.}
{}
{}
\end{theory-def}

We introduce two theorems for function composition in a development of
\textsf{FUN-COMP}:

\begin{dev-def}
{Function Composition 1}
{\textsf{FUN-COMP-1}.}
{\textsf{FUN-COMP}.}
{
\bi

  \item[] $\mName{Thm16}$: 
  ${\cForallCX 
     {f} {\cFunTyX {\cBaseTy A} {\cBaseTy B}} 
     {g} {\cFunTyX {\cBaseTy B} {\cBaseTy C}} 
     {h} {\cFunTyX {\cBaseTy C} {\cBaseTy D}} 
     {\cEqX 
        {\cFunCompX {f} {\cFunComp {g} {h}}}
        {\cFunCompX {\cFunComp {f} {g}} {h}}}}$\\
  \phantom{x} \hfill ($\circ$ is associative).

  \item[] $\mName{Thm17}$:
  ${\cForallX 
     {f} {\cFunTyX {\cBaseTy A} {\cBaseTy B}} 
     {\cBinBX 
        {\cFunCompX {\cIdFunPC {A}} {f}}
        {=}
        {\cFunCompX {f} {\cIdFunPC {B}}}
        {=}
        {f}}}$\\
  \phantom{x} \hfill (identity functions are left and right identity elements).

\ei
}
\end{dev-def}

\noindent
The parametric pseudoconstants ${\cFunCompPairPC {\alpha} {\beta}
  {\gamma}}$ and ${\cIdFunPC {\alpha}}$ are defined in
Tables~\ref{tab:monoids-pc} and~\ref{tab:nd-functions}, respectively.
The infix notation for the application of
\[{\cFunCompPairPC {\alpha} {\beta} {\gamma}}\] is also defined in
Table~\ref{tab:monoids-pc}.

Next we define a theory morphism from \textsf{FUN-COMP} to
\textsf{ONE-BT}:

\begin{theory-trans-def}
{\textnormal{\textsf{FUN-COMP}} to \textnormal{\textsf{ONE-BT}}}
{\textsf{FUN-COMP-to-ONE-BT}.}
{\textsf{FUN-COMP}.}
{\textsf{ONE-BT}.}
{
\be
 
  \item $A \mapsto S$.

  \item $B \mapsto S$.

  \item $C \mapsto S$.

  \item $D \mapsto S$.

\ee
}
{}
\end{theory-trans-def}

The translation \textsf{FUN-COMP-to-ONE-BT} is clearly a theory
morphism by the Morphism Theorem \cite[Theorem 14.16]{Farmer25} since
it is a normal translation and \textsf{FUN-COMP} contains no constants
or axioms.  So we can transport the theorems of \textsf{FUN-COMP-1} to
\textsf{ONE-BT} via \textsf{FUN-COMP-to-ONE-BT}:

\begin{group-transport}
{\small Transport of \textnormal{\textsf{Thm16--Thm17}} to
  \textnormal{\textsf{ONE-BT}}}
{\textsf{function-composition-theorems-via-FUN-COMP-to-ONE-BT}.}
{\textsf{FUN-COMP-1}.}  {\textsf{ONE-BT}.}
{\textsf{FUN-COMP-to-ONE-BT}.}  { \bi

  \item[] $\mName{Thm16}$:
  ${\cForallCX 
     {f} {\cFunTyX {\cBaseTy A} {\cBaseTy B}} 
     {g} {\cFunTyX {\cBaseTy B} {\cBaseTy C}} 
     {h} {\cFunTyX {\cBaseTy C} {\cBaseTy D}} 
     {\cEqX 
        {\cFunCompX {f} {\cFunComp {g} {h}}}
        {\cFunCompX {\cFunComp {f} {g}} {h}}}}$\\
  \phantom{x} \hfill ($\circ$ is associative). 

  \item[] $\mName{Thm17}$:
  ${\cForallX 
     {f} {\cFunTyX {\cBaseTy A} {\cBaseTy B}} 
     {\cBinBX 
        {\cFunCompX {\cIdFunPC {A}} {f}}
        {=}
        {\cFunCompX {f} {\cIdFunPC {B}}}
        {=}
        {f}}}$\\
  \phantom{x} \hfill (identity functions are left and right identity elements).

\ei
}
{
\bi

  \item[] \textsf{Thm18 (Thm16-via-FUN-COMP-to-ONE-BT)}:\\
  \hspace*{2ex}
  ${\cForallX 
     {f,g,h} 
     {\cFunTyX {\cBaseTy S} {\cBaseTy S}} 
     {\cEqX 
        {\cFunCompX {f} {\cFunComp {g} {h}}}
        {\cFunCompX {\cFunComp {f} {g}} {h}}}}$
  \hfill ($\circ$ is associative).

  \item[] \textsf{Thm19 (Thm17-via-FUN-COMP-to-ONE-BT)}:\\
  \hspace*{2ex}
  ${\cForallX 
     {f} {\cFunTyX {\cBaseTy S} {\cBaseTy S}} 
     {\cBinBX 
        {\cFunCompX {\cIdFunPC {S}} {f}}
        {=}
        {\cFunCompX {f} {\cIdFunPC {S}}}
        {=}
        {f}}}$\\
  \phantom{x} \hfill (${\cIdFunPC {S}}$ is an identity element with respect to $\circ$).

\ei
}
{\textsf{ONE-BT-1}.}
{\textsf{FUN-COMP-1-to-ONE-BT-1}.}
\end{group-transport}

We can obtain the theorem that all transformation monoids are monoids
almost for free by transporting results from \textsf{MON-1} to
\textsf{ONE-BT-1}.  We start by creating the theory morphism from
\textsf{MON} to \textsf{ONE-BT} that maps \[(M,\cdot_{\cFunTyX {\cProdTy
    {M} {M}} {M}},\mathbf{e}_M)\] to
\[(\cFunTyX {S} {S}, \cFunCompPairPC {S} {S} {S}, \cIdFunPC {S}):\]

\begin{theory-trans-def}
{\textnormal{\textsf{MON} to \textnormal{\textsf{ONE-BT}}}}
{\textsf{MON-to-ONE-BT}.}  {\textsf{MON}.}  {\textsf{ONE-BT}.}  { \be
 
  \item $M \mapsto {\cFunTyX {S} {S}}$.

\ee
}
{
\be

  \item $\cdot_{\cFunTyX {\cProdTy {M} {M}} {M}} \mapsto
    {\cFunCompPairPC {S} {S} {S}}$.

  \item $\mathsf{e}_M \mapsto \cIdFunPC {S}$.

\ee
}
\end{theory-trans-def}

\bsp The theory translation \textsf{MON-to-ONE-BT} is normal so that
it has no obligations of the first kind by~\cite[Lemma
  14.10]{Farmer25}.  It has two obligations of the second kind.  These
are valid in \textsf{ONE-BT} by part 4 of \cite[Lemma 14.9]{Farmer25}
since $\cFunCompPairPC {S} {S} {S}$ and $\cIdFunPC {S}$ are function
abstractions which are defined by~\cite[Axiom A5.11]{Farmer25}.  It
has two obligations of the third kind corresponding to the two axioms
of \textsf{MON}.  The two obligations are \textsf{Thm18} and
\textsf{Thm19}, respectively, in \textsf{ONE-BT-1}, and so these two
theorems are obviously valid in the top theory of \textsf{ONE-BT-1}.
Therefore, \textsf{MON-to-ONE-BT} is a theory morphism from
\textsf{MON} to \textsf{ONE-BT} by the Morphism Theorem
\cite[Theorem~14.16]{Farmer25}.
 
We can transport \textsf{Def1}, the definition of
${\mathsf{submonoid}_{\cFunTyX {\cSetTy {M}} {\cB}}}$, and
\textsf{Thm4}, the theorem that says all submonoids are monoids, to
\textsf{ONE-BT-1} via \mbox{\textsf{MON-to-ONE-BT}} by a group
transportation module:

\esp
\begin{group-transport}
{\small Transport of \textnormal{\textsf{Def1}} \& \textnormal{\textsf{Thm2}} 
to \textnormal{\textsf{ONE-BT-1}}}
{\textsf{submonoids-via-MON-to-ONE-BT}.}
{\textsf{MON-1}.}
{\textsf{ONE-BT-1}.}
{\textsf{MON-to-ONE-BT}.}
{
\bi

  \item[] $\mName{Def1}$: 
  ${\cEqX 
      {\mathsf{submonoid}_{\cFunTyX {\cSetTy {M}} {\cB}}} {}}$\\ 
      \hspace*{2ex} 
      ${\cFunAbsX 
          {s}
          {\cSetTy {M}} 
          {\cAndX 
             {\cAndX 
                {\cNotEqX {s} {\cEmpSetPC {M}}}
                {\cIsDefInQTy 
                   {\cRestrictX {\cdot} {\cProdTyX {s} {s}}}
                   {\cFunTyX {\cProdTy {s} {s}} {s}}}} 
             {\cInX {\mathsf{e}} {s}}}}$ \hfill (submonoid).

  \item[] $\mName{Thm4}$: 
  ${\cForallX 
      {s} 
      {\cSetTy {M}}    
      {\cImpliesX
         {\cFunAppX {\mathsf{submonoid}} {s}} 
         {\cMonoid 
            {s}
            {\cdot\wrestricted_{\cProdTyX {s} {s}}} 
            {\mathsf{e}}}}}$\\ 
  \phantom{x} \hfill (submonoids are monoids).

\ei
}
{
\bi

  \item[] \textsf{Def6 (Def1-via-MON-to-ONE-BT)}:
  ${\cEqX 
      {\textsf{trans-monoid}_{\cFunTyX {\cSetTy {\cFunTyX {S} {S}}} {\cB}}} {}}$\\
      \hspace*{2ex}
     ${\cFunAbsX 
         {s} 
         {\cSetTy {\cFunTyX {S} {S}}} {}}$\\
         \hspace*{4ex}
        ${\cNotEqX {s} {\cEmpSetPC {\cFunTyX {S} {S}}}} \And {}$\\
         \hspace*{4ex}
        ${\cIsDefInQTy 
            {\cRestrictX 
               {\cFunCompPairPC {S} {S} {S}}
               {\cProdTyX {s} {s}}}
            {\cFunTyX {\cProdTy {s} {s}} {s}}} \And {}$\\
         \hspace*{4ex}
        ${\cInX {\cIdFunPC {S}} {s}}$
  \hfill (transformation monoid).

  \item[] \textsf{Thm20 (Thm4-via-MON-to-ONE-BT)}:\\
  \hspace*{2ex}
  ${\cForallX 
      {s} 
      {\cSetTy {\cFunTyX {S} {S}}} {}}$\\
      \hspace*{4ex}
     ${\cImpliesX
         {\cFunAppX {\textsf{trans-monoid}} {s}} 
         {\cMonoid 
            {s}
            {{\cFunCompPairPC {S} {S} {S}}\wrestricted_{\cProdTyX {s} {s}}}
            {\cIdFunPC {S}}}}$\\ 
  \phantom{x} \hfill (transformation monoids are monoids).

\ei
}
{\textsf{ONE-BT-2}.}
{\textsf{MON-1-to-ONE-BT-2}.}
\end{group-transport} 

\noindent
\textsf{trans-monoid} is a predicate that is true when it is applied
to a set of functions of ${\cFunTyX {S} {S}}$ that forms a transformation
monoid.  \textsf{Thm20} says that every transformation monoid ---
including the full transformation monoid --- is a monoid.

\section{Monoid Actions}\label{sec:mon-actions}

A \emph{(left) monoid action} is a mathematical structure
$(m,s,\cdot,e,\mName{act})$ where $(m,\cdot,e)$ is a monoid and
$\mName{act} : (m \times s) \tarrow s$ is a function such that
\[(1)\sglsp x \mathrel{\mName{act}} (y \mathrel{\mName{act}} z) = (x \cdot y)
\mathrel{\mName{act}} z\] for all $x,y \in m$ and $z \in s$
and \[(2)\sglsp e \mathrel{\mName{act}} z = z\] for all $z \in s$.  We
say in this case that the monoid $(m,\cdot,e)$ \emph{acts on} the set
$s$ \emph{by} the function~$\mName{act}$.

Let $\mathbf{Z}_\cB$ be the formula \[\cMonAction {\mathbf{M}_{\cSetTy
    {\alpha}}} {\mathbf{S}_{\cSetTy {\beta}}} {\mathbf{F}_{\cFunTyX
    {\cProdTy {\alpha} {\alpha}} {\alpha}}} {\mathbf{E}_{\alpha}}
{\mathbf{G}_{\cFunTyX {\cProdTy {\alpha} {\beta}} {\beta}}},\] where
\textsf{MON-ACTION} is the abbreviation introduced by the notational
definition given in Table~\ref{tab:monoids-abbr}.  $\mathbf{Z}_\cB$
asserts that the tuple \[({\mathbf{M}_{\cSetTy {\alpha}}},
{\mathbf{S}_{\cSetTy {\beta}}}, {\mathbf{F}_{\cFunTyX {\cProdTy
      {\alpha} {\alpha}} {\alpha}}}, {\mathbf{E}_{\alpha}},
{\mathbf{G}_{\cFunTyX {\cProdTy {\alpha} {\beta}} {\beta}}})\] denotes
a monoid action $(m,s,\cdot,e,\mName{act})$.

A theory of monoid actions is defined as an extension of the theory of
monoids:

\begin{theory-ext}
{Monoid Actions} 
{\textsf{MON-ACT}.}  
{\textsf{MON}.}  
{$S$.}
{$\mName{act}_{\cFunTyX {\cProdTy {M} {S}} {S}}$.}  
{ 
\be \setcounter{enumi}{2}

  \item ${\cForallBX {x,y} {M} {s} {S} {\cEqX {\cBinX {x}
        {\mName{act}} {\cBin {y} {\mName{act}} {s}}} {\cBinX {\cBin
          {x} {\cdot} {y}} {\mName{act}} {s}}}}$\\ \phantom{x} \hfill
    ($\mName{act}$ is compatible with $\cdot$).

  \item ${\cForallX {s} {S} {\cEqX {\cBinX {\mName{e}} {\mName{act}}
        {s}} {s}}}$ \hfill ($\mName{act}$ is compatible with
    $\mName{e}$).

\ee
}
\end{theory-ext}

We begin a development of \textsf{MON-ACT} by adding the definitions
and theorems below:

\begin{dev-def}
{Monoid Actions 1}
{\textsf{MON-ACT-1}.}
{\textsf{MON-ACT}.}
{
\bi

  \item[] $\mName{Thm21}$: ${\cMonAction {\cUnivSetPC {M}}
    {\cUnivSetPC {S}} {\cdot_{\cFunTyX {\cProdTy {M} {M}} {M}}}
    {\mathsf{e}_M} {\mName{act}_{\cFunTyX {\cProdTy {M} {S}}
        {S}}}}$\\ 
  \phantom{x} \hfill (models of \textsf{MON-ACT} define monoid actions).

  \item[] $\mName{Thm22}$: ${\cTotal {\mName{act}_{\cFunTyX {\cProdTy
          {M} {S}} {S}}}}$ \hfill ($\mName{act}$ is total).

  \item[] $\mName{Def7}$: ${\cEqX {\mName{orbit}_{\cFunTyX {S}
        {\cSetTy {S}}}} {\cFunAbsX {s} {S} {\cSet {t} {S} {\cForsomeX
          {x} {M} {\cEqX {\cBinX {x} {\mName{act}} {s}}
            {t}}}}}}$\hfill (orbit).

  \item[] $\mName{Def8}$: ${\cEqX {\mName{stabilizer}_{\cFunTyX {S}
        {\cSetTy {M}}}} {\cFunAbsX {s} {S} {\cSet {x} {M} {\cEqX
          {\cBinX {x} {\mName{act}} {s}} {s}}}}}$\hfill (stabilizer).

  \item[] $\mName{Thm23}$: ${\cForallX {s} {S} {\cFunAppX
      {\mName{submonoid}} {\cFunApp {\mName {stabilizer}} {s}}}}$
    \hfill (stabilizers are submonoids).

\ei
}
\end{dev-def}

\noindent
$\mName{Thm21}$ states that each model of \textsf{MON-ACTION} defines
a monoid action.  $\mName{Thm22}$ says that $\mName{act}_{\cFunTyX
  {\cProdTy {M} {S}} {S}}$ is total (which is implied by the third
axiom of \textsf{MON-ACTION}).  $\mName{Def7}$ and $\mName{Def8}$
introduce the concepts of an orbit and a stabilizer.  And
$\mName{Thm23}$ states that a stabilizer of a monoid action
$(m,s,\cdot,e,\mName{act})$ is a submonoid of the monoid
$(m,\cdot,e)$.  The power of this machinery --- monoid actions with
orbits and stabilizers --- is low with arbitrary monoids but very high
with groups, i.e., monoids in which every element has an inverse.

Monoid actions are common in monoid theory.  We will present two
important examples of monoid actions.  The first is the monoid action
$(m,m,\cdot,e,\cdot)$ such that the monoid $(m,\cdot,e)$ acts on the
set $m$ of its elements by its function $\cdot$.  We formalize this by
creating the theory morphism from \textsf{MON-ACT} to \textsf{MON}
that maps \[(M,S,{\cdot_{\cFunTyX {\cProdTy {M} {M}} {M}}},
{\mathbf{e}_M}, {\mName{act}_{\cFunTyX {\cProdTy {M} {S}} {S}}})\]
to \[(M, M, {\cdot_{\cFunTyX {\cProdTy {M} {M}} {M}}}, {\mathbf{e}_M},
{\cdot_{\cFunTyX {\cProdTy {M} {M}} {M}}}):\]

\begin{theory-trans-def}
{\textnormal{\textsf{MON-ACT}} to \textnormal{\textsf{MON}}}
{\textsf{MON-ACT-to-MON}.}
{\textsf{MON-ACT}.}
{\textsf{MON}.}
{
\be

  \item $M \mapsto M$.

  \item $S \mapsto M$.

\ee
}
{
\be

  \item ${\cdot_{\cFunTyX {\cProdTy {M} {M}} {M}}} \mapsto
    {\cdot_{\cFunTyX {\cProdTy {M} {M}} {M}}}$.

  \item ${\mathsf{e}_M} \mapsto {\mathsf{e}_M}$.

  \item ${\mName{act}_{\cFunTyX {\cProdTy {M} {S}} {S}}} \mapsto
    {\cdot_{\cFunTyX {\cProdTy {M} {M}} {M}}}$.

\ee
}
\end{theory-trans-def}

\noindent
It is an easy exercise to verify, arguing as we have above, that
\textsf{MON-ACT-to-MON} is a theory morphism.

We can now transport \textsf{Thm21} from \textsf{MON-ACT} to
\textsf{MON-3} via \textsf{MON-ACT-to-MON} to show that the action of
a monoid $(m,\cdot,e)$ on $m$ by $\cdot$ is a monoid action:

\begin{thm-transport}
{Transport of \textnormal{\textsf{Thm21}} to \textnormal{\textsf{MON-3}}}
{\textsf{monoid-action-via-MON-ACT-to-MON}.}
{\textsf{MON-ACT}.}
{\textsf{MON-3}.}
{\textsf{MON-ACT-to-MON}.}
{
\bi

  \item[] $\mName{Thm21}$: ${\cMonAction {\cUnivSetPC {M}}
    {\cUnivSetPC {S}} {\cdot_{\cFunTyX {\cProdTy {M} {M}} {M}}}
    {\mathsf{e}_M} {\mName{act}_{\cFunTyX {\cProdTy {M} {S}}
        {S}}}}$\\ 
  \phantom{x} \hfill (models of \textsf{MON-ACT} define monoid actions).

\ei
}
{
\bi

  \item[] \textsf{Thm24 (Thm21-via-MON-ACT-to-MON)}:\\
  \hspace*{2ex} ${\cMonAction {\cUnivSetPC {M}} {\cUnivSetPC {M}}
    {\cdot_{\cFunTyX {\cProdTy {M} {M}} {M}}} {\mathsf{e}_M}
    {\cdot_{\cFunTyX {\cProdTy {M} {M}} {M}}}}$\\ 
  \phantom{x} \hfill (first example is a monoid action).

\ei
}
{\textsf{MON-4}.}
\end{thm-transport}

The second example is a standard transformation monoid
$(f,\circ,\mathsf{id})$ on $s$ acting on $s$ by the function that
applies a transformation to a member of $s$.  (Note that all the
functions in $f$ are total by virtue of the transformation monoid
being standard.)  We formalize this example as a theory morphism from
\textsf{MON-ACT} to \textsf{ONE-BT} extended with a set constant that
denotes a standard transformation monoid.  Here is the extension with
a set constant $\mathsf{F}_{\cSetTy {\cFunTyX {S} {S}}}$ and two
axioms:

\begin{theory-ext}
{One Base Type with a Set Constant}
{\textsf{ONE-BT-with-SC}.}
{\textsf{ONE-BT}.}
{}
{$\mathsf{F}_{\cSetTy {\cFunTyX {S} {S}}}$.}
{
\be 

  \item ${\cFunAppX {\textsf{trans-monoid}} {\mathsf{F}}}$ \hfill
    ($\mathsf{F}$ forms a transformation monoid).

  \item ${\cForallX {f} {\mathsf{F}} {\cTotal {f}}}$ \hfill (the
    members of $\mathsf{F}$ are total functions).

\ee
}
\end{theory-ext}

\noindent
And here is the theory morphism from \textsf{MON-ACT} to
\textsf{ONE-BT-with-SC} that maps \[(M,S,{\cdot_{\cFunTyX
    {\cProdTy {M} {M}} {M}}}, {\mathbf{e}_M}, {\mName{act}_{\cFunTyX
    {\cProdTy {M} {S}} {S}}})\] to \[(\mathsf{F}_{\cSetTy {\cFunTyX
    {S} {S}}}, S, {\circ_{\cFunTyX {\cProdTy {\cFunTy {S} {S}}
      {\cFunTy {S} {S}}} {\cFunTy {S} {S}}}}{\wrestricted_{\cProdQTyX
    {\mathsf{F}} {\mathsf{F}}}}, {\cIdFunPC {S}}, {\cFunAppPairPC {S}
  {S}}{\wrestricted_{\cProdQTyX {\mathsf{F}} {S}}}):\]

\begin{theory-trans-def}
{\textnormal{\textsf{MON-ACT}} to \textnormal{\textsf{ONE-BT-with-SC}}}
{\textsf{MON-ACT-to-ONE-BT-with-SC}.}
{\textsf{MON-ACT}.}
{\textsf{ONE-BT-with-SC}.}
{
\be

  \item $M \mapsto {\mathsf{F}_{\cSetTy {\cFunTyX {S} {S}}}}$.

  \item $S \mapsto S$.

\ee
}
{
\be

  \item ${\cdot_{\cFunTyX {\cProdTy {M} {M}} {M}}} \mapsto
    {\circ_{\cFunTyX {\cProdTy {\cFunTy {S} {S}} {\cFunTy {S} {S}}}
        {\cFunTy {S} {S}}}}{\wrestricted_{\cProdQTyX {\mathsf{F}}
        {\mathsf{F}}}}$.

  \item ${\mathsf{e}_M} \mapsto {\cIdFunPC {S}}$.

  \item ${\mName{act}_{\cFunTyX {\cProdTy {M} {S}} {S}}} \mapsto
    {\cFunAppPairPC {S} {S}}{\wrestricted_{\cProdQTyX {\mathsf{F}}
        {S}}}$.

\ee
}
\end{theory-trans-def}

\noindent
The parametric pseudoconstant ${\cFunAppPairPC {S}
  {S}}{\wrestricted_{\cProdQTyX {\mathsf{F}} {S}}}$ is defined in
Table~\ref{tab:monoids-pc}.  It is a straightforward exercise to
verify, arguing as we have above, that
\textsf{MON-ACT-to-ONE-BT-with-SC} is a theory morphism.

\bsp
We can now transport \textsf{Thm21} from \textsf{MON-ACT} to
\textsf{ONE-BT-with-S} via \textsf{MON-ACT-to-ONE-BT-with-SC} to show
that a standard transformation monoid $(f,\circ,\mathsf{id})$ on $s$
acting on $s$ by the function that applies a (total) transformation to
a member of $s$ is a monoid action:
\esp

\begin{thm-transport}
{\small Trans.~of \textnormal{\textsf{Thm21}} to \textnormal{\textsf{ONE-BT-with-SC}}}
{\textsf{monoid-action-via-MON-ACT-to-ONE-BT-with-SC}.}
{\textsf{MON-ACT}.}
{\textsf{ONE-BT-with-SC}.}
{\textsf{MON-ACT-to-ONE-BT-with-SC}.}
{
\bi

  \item[] $\mName{Thm21}$: ${\cMonAction {\cUnivSetPC {M}}
    {\cUnivSetPC {S}} {\cdot_{\cFunTyX {\cProdTy {M} {M}} {M}}}
    {\mathsf{e}_M} {\mName{act}_{\cFunTyX {\cProdTy {M} {S}}
        {S}}}}$\\ 
  \phantom{x} \hfill (models of \textsf{MON-ACT} define monoid actions).

\ei
}
{
\bi

  \item[] \textsf{Thm25 (Thm21-via-MON-ACT-to-ONE-BT-with-SC)}:\\
  \hspace*{2ex} 
  $\textsf{MON-ACTION}(
  {\mathsf{F}_{\cSetTy {\cFunTyX {S} {S}}}},\\
  \hspace*{18.2ex}
  {\cUnivSetPC {S}},\\ 
  \hspace*{18.2ex}
  {\circ_{\cFunTyX {\cProdTy {\cFunTy {S} {S}} {\cFunTy {S} {S}}}
        {\cFunTy {S} {S}}}}{\wrestricted_{\cProdQTyX {\mathsf{F}}
        {\mathsf{F}}}},\\
  \hspace*{18.2ex}
  {\cIdFunPC {S}},\\
  \hspace*{18.2ex}
  {\cFunAppPairPC {S} {S}}{\wrestricted_{\cProdQTyX {\mathsf{F}} {S}}})$\\
  \phantom{x} \hfill (second example is a monoid action).

\ei
}
{\textsf{ONE-BT-with-SC-1}.}
\end{thm-transport}

\section{Monoid Homomorphisms}\label{sec:mon-hom}

Roughly speaking, a \emph{monoid homomorphism} is a
structure-preserving mapping from one monoid to another.  

Let $\mathbf{W}_\cB$ be the formula \[{\cMonHomom 
   {\mathbf{M}^{1}_{\cSetTy {\alpha}}}
   {\mathbf{M}^{2}_{\cSetTy {\beta}}}
   {\mathbf{F}^{1}_{\cFunTyX {\cProdTy {\alpha} {\alpha}} {\alpha}}}
   {\mathbf{E}^{1}_{\alpha}} 
   {\mathbf{F}^{2}_{\cFunTyX {\cProdTy {\beta} {\beta}} {\beta}}}
   {\mathbf{E}^{2}_{\beta}}
   {\mathbf{H}_{\cFunTyX {\alpha} {\beta}}}},\] 
where \textsf{MON-HOMOM} is the abbreviation introduced by the
notational definition given in Table~\ref{tab:monoids-abbr}.
$\mathbf{W}_\cB$ asserts that the tuple \[({\mathbf{M}^{1}_{\cSetTy
    {\alpha}}}, {\mathbf{M}^{2}_{\cSetTy {\beta}}},
{\mathbf{F}^{1}_{\cFunTyX {\cProdTy {\alpha} {\alpha}} {\alpha}}},
{\mathbf{E}^{1}_{\alpha}}, {\mathbf{F}^{2}_{\cFunTyX {\cProdTy {\beta}
      {\beta}} {\beta}}}, {\mathbf{E}^{2}_{\beta}},
{\mathbf{H}_{\cFunTyX {\alpha} {\beta}}})\] denotes a mathematical
structure $(m_1,m_2,\cdot_1,e_1,\cdot_2,e_2,h)$ where
$(m_1,\cdot_1,e_1)$ is a monoid, $(m_2,\cdot_2,e_2)$ is a monoid, and
$h : m_1 \tarrow m_2$ is a monoid homomorphism from
$(m_1,\cdot_1,e_1)$ to $(m_2,\cdot_2,e_2)$.

The notion of a monoid homomorphism is captured in the theory
\textsf{MON-HOM}:

\begin{theory-def}
{{\bf Monoid Homomorphisms}} 
{\textsf{MON-HOM}.}  
{$M_1,M_2$.}
{$\cdot_{\cFunTyX {\cProdTy {M_1} {M_1}} {M_1}},\; \mathsf{e}_{M_1},\;
  \cdot_{\cFunTyX {\cProdTy {M_2} {M_2}} {M_2}},\; \mathsf{e}_{M_2},\;
  \mathsf{h}_{\cFunTyX {M_1} {M_2}}$.}
{
\be

  \item ${\cForallX {x,y,z} {M_1} {\cEqX {\cBinX {x} {\cdot} {\cBin {y}
    {\cdot} {z}}} {\cBinX {\cBin {x} {\cdot} {y}} {\cdot} {z}}}}$
    \hfill ($\cdot_{\cFunTyX {\cProdTy {M_1} {M_1}} {M_1}}$ is associative).

  \item ${\cForallX {x} {M_1} {\cBinBX {\cBinX {\mathsf{e}} {\cdot}
        {x}} {=} {\cBinX {x} {\cdot} {\mathsf{e}}} {=} {x}}}$ \hfill
    ($\mathsf{e}_{M_1}$ is an identity element).

  \item ${\cForallX {x,y,z} {M_2} {\cEqX {\cBinX {x} {\cdot} {\cBin {y}
    {\cdot} {z}}} {\cBinX {\cBin {x} {\cdot} {y}} {\cdot} {z}}}}$
    \hfill ($\cdot_{\cFunTyX {\cProdTy {M_2} {M_2}} {M_2}}$ is associative).

  \item ${\cForallX {x} {M_2} {\cBinBX {\cBinX {\mathsf{e}} {\cdot}
        {x}} {=} {\cBinX {x} {\cdot} {\mathsf{e}}} {=} {x}}}$ \hfill
    ($\mathsf{e}_{M_2}$ is an identity element).

  \item ${\cForallX 
            {x,y} 
            {M_1}
            {\cEqX
               {\cFunAppX {\mathsf{h}} {\cBin {x} {\cdot} {y}}}
               {\cBinX 
                  {\cFunApp {\mathsf{h}} {x}} 
                  {\cdot} 
                  {\cFunApp {\mathsf{h}} {y}}}}}$
            \hfill (first homomorphism property).

  \item ${\cEqX 
            {\cFunAppX {\mathsf{h}} {\mathsf{e}_{M_1}}}
            {\mathsf{e}_{M_2}}}$ 
            \hfill (second homomorphism property).

\ee
}
\end{theory-def}

\noindent
$\mathsf{h}_{\cFunTyX {M_1} {M_2}}$ denotes a monoid homomorphism from
the monoid denoted by
\[(M_1, \cdot_{\cFunTyX {\cProdTy {M_1} {M_1}} {M_1}}, \mathsf{e}_{M_1})\] 
to the monoid denoted by
\[(M_2, \cdot_{\cFunTyX {\cProdTy {M_2} {M_2}} {M_2}}, \mathsf{e}_{M_2}).\]

\noindent
Here is a simple development of \textsf{MON-HOM}:

\begin{dev-def}
{Monoid Homomorphisms 1}
{\textsf{MON-HOM-1}.}
{\textsf{MON-HOM}.}
{
\bi

  \item[] $\mName{Thm26}$:\\
  \hspace*{2ex} 
  $\textsf{MON-HOM}(
  {\cUnivSetPC {M_1}},\\
  \hspace*{15.2ex}
  {\cUnivSetPC {M_2}},\\
  \hspace*{15.2ex}
  {\cdot_{\cFunTyX {\cProdTy {M_1} {M_1}} {M_1}}},\\
  \hspace*{15.2ex}
  {\mathsf{e}_{M_1}},\\
  \hspace*{15.2ex}
  {\cdot_{\cFunTyX {\cProdTy {M_2} {M_2}} {M_2}}},\\
  \hspace*{15.2ex}
  {\mathsf{e}_{M_2}},\\
  \hspace*{15.2ex}
  {\mathsf{h}_{\cFunTyX {M_1} {M_2}}})$\\ 
  \phantom{x} \hfill (models of \textsf{MON-HOM} define monoid homomorphisms).

  \item[] $\mName{Thm27}$: ${\cTotal {\mathsf{h}_{\cFunTyX {M_1}
        {M_2}}}}$ \hfill (${\mathsf{h}_{\cFunTyX {M_1} {M_2}}}$ is
    total).

\ei
}
\end{dev-def}

There are embeddings (i.e., theory morphisms whose mappings are
injective) from \textsf{MON} to the two copies of \textsf{MON} within
\textsf{MON-HOM} defined by the following two theory translation
definitions:

\begin{theory-trans-def}
{First \textnormal{\textsf{MON}} to \textnormal{\textsf{MON-HOM}}}
{\textsf{first-MON-to-MON-HOM}.}
{\textsf{MON}.}
{\textsf{MON-HOM}.}
{
\be

  \item $M \mapsto M_1$.

\ee
}
{
\be

  \item ${\cdot_{\cFunTyX {\cProdTy {M} {M}} {M}}} \mapsto
    {\cdot_{\cFunTyX {\cProdTy {M_1} {M_1}} {M_1}}}$.

  \item ${\mathsf{e}_M} \mapsto {\mathsf{e}_{M_1}}$.

\ee
}
\end{theory-trans-def}

\begin{theory-trans-def}
{Second \textnormal{\textsf{MON}} to \textnormal{\textsf{MON-HOM}}}
{\textsf{second-MON-to-MON-HOM}.}
{\textsf{MON}.}
{\textsf{MON-HOM}.}
{
\be

  \item $M \mapsto M_2$.

\ee
}
{
\be

  \item ${\cdot_{\cFunTyX {\cProdTy {M} {M}} {M}}} \mapsto
    {\cdot_{\cFunTyX {\cProdTy {M_2} {M_2}} {M_2}}}$.

  \item ${\mathsf{e}_M} \mapsto {\mathsf{e}_{M_2}}$.

\ee
}
\end{theory-trans-def}

An example of a monoid homomorphism from the monoid denoted by
\[(M,\cdot_{\cFunTyX {\cProdTy {M} {M}} {M}}, \mathsf{e}_M)\] to the
monoid denoted by \[(\cSetTy {M}, \odot_{\cFunTyX {\cProdTy {\cSetTy
      {M}} {\cSetTy {M}}} {\cSetTy {M}}}, \mathsf{E}_{\cSetTy {M}})\]
is the function that maps a member $x$ of the denotation of $M$ to the
singleton~$\mSet{x}$.  This monoid homomorphism is formalized by the
following development morphism: 

\begin{dev-trans-def}
{\textnormal{\textsf{MON-HOM}} to \textnormal{\textsf{MON}}}
{\textsf{MON-HOM-to-MON-4}.}
{\textsf{MON-HOM}.}
{\textsf{MON-4}.}
{
\be

  \item $M_1 \mapsto M$.

  \item $M_2 \mapsto \cSetTy {M}$.

\ee
}
{
\be

  \item $\cdot_{\cFunTyX {\cProdTy {M_1} {M_1}} {M_1}} \mapsto
    \cdot_{\cFunTyX {\cProdTy {M} {M}} {M}}$.

  \item $\mathsf{e}_{M_1} \mapsto \mathsf{e}_{M}$.

  \item $\cdot_{\cFunTyX {\cProdTy {M_2} {M_2}} {M_2}} \mapsto
    \odot_{\cFunTyX {\cProdTy {\cSetTy {M}} {\cSetTy {M}}} {\cSetTy {M}}}$.

  \item $\mathsf{e}_{M_2} \mapsto \mathsf{E}_{\cSetTy {M}}$.

  \item $\mathsf{h}_{\cFunTyX {M_1} {M_2}} \mapsto \cFunAbsX {x} {M}
    {\cFinSetL {x}}$.

\ee
}
\end{dev-trans-def}

\noindent
It is a straightforward exercise to verify that
\textsf{HOM-MON-to-MON-4} is a theory morphism by the arguments we
employed above.

We can now transport \textsf{Thm26} from \textsf{MON-HOM} to
\textsf{MON-4} via \textsf{MON-HOM-to-MON-4} to show the example is a
monoid homomorphism:

\begin{thm-transport}
{Transport of \textnormal{\textsf{Thm26}} to \textnormal{\textsf{MON-4}}}
{\textsf{monoid-action-via-MON-HOM-to-MON-4}.}
{\textsf{MON-HOM}.}
{\textsf{MON-4}.}
{\textsf{MON-HOM-to-MON-4}.}
{
\bi

  \item[] $\mName{Thm26}$: \\
  \hspace*{2ex} 
  $\textsf{MON-HOM}(
  {\cUnivSetPC {M_1}},\\
  \hspace*{15.2ex}
  {\cUnivSetPC {M_2}},\\
  \hspace*{15.2ex}
  {\cdot_{\cFunTyX {\cProdTy {M_1} {M_1}} {M_1}}},\\
  \hspace*{15.2ex}
  {\mathsf{e}_{M_1}},\\
  \hspace*{15.2ex}
  {\cdot_{\cFunTyX {\cProdTy {M_2} {M_2}} {M_2}}},\\
  \hspace*{15.2ex}
  {\mathsf{e}_{M_2}},\\
  \hspace*{15.2ex}
  {\mathsf{h}_{\cFunTyX {M_1} {M_2}}})$\\ 
  \phantom{x} \hfill (models of \textsf{MON-HOM} define monoid homomorphisms).

\ei
}
{
\bi

  \item[] \textsf{Thm28 (Thm26-via-MON-HOM-to-MON-4)}\\
  \hspace*{2ex} 
  $\textsf{MON-HOM}(
  {\cUnivSetPC {M}},\\
  \hspace*{15.2ex}
  {\cUnivSetPC {\cSetTy {M}}},\\
  \hspace*{15.2ex}
  {\cdot_{\cFunTyX {\cProdTy {M} {M}} {M}}},\\
  \hspace*{15.2ex}
  {\mathsf{e}_{M}},\\
  \hspace*{15.2ex}
  {\odot_{\cFunTyX {\cProdTy {\cSetTy {M}} {\cSetTy {M}}} {\cSetTy {M}}}},\\
  \hspace*{15.2ex}
  {\mathsf{E}_{\cSetTy {M}}},\\
  \hspace*{15.2ex}
  {\cFunAbsX {x} {M} {\cFinSetL {x}}})$
  \hfill (example is a monoid homomorphism).

\ei
}
{\textsf{MON-5}.}
\end{thm-transport}

\section{Monoids over Real Number Arithmetic}\label{sec:monoids-with-reals}

We need machinery concerning real number arithmetic to express some
concepts about monoids.  For instance, an iterated product operator
for monoids involves integers.  To formalize these kinds of concepts,
we need a theory of monoids that includes real number arithmetic.
Chapter 13 of~\cite{Farmer25} presents~\textsf{COF}, a theory of
complete ordered fields.  \textsf{COF} is categorical in the standard
sense (see~\cite{Farmer25}).  That is, it has a single standard model
up to isomorphism that defines the structure of real number
arithmetic.

We define a theory of monoids over \textsf{COF} by extending
\textsf{COF} with the language and axioms of \textsf{MON}:

\begin{theory-ext}
{Monoids over \textnormal{\textsf{COF}}}
{\textsf{MON-over-COF}.}
{\textsf{COF}.}
{$M$.}
{$\cdot_{\cFunTyX {\cProdTy {M} {M}} {M}},\;  \mathsf{e}_M$.}
{
\be \setcounter{enumi}{18}

  \item ${\cForallX {x,y,z} {M} {\cEqX {\cBinX {x} {\cdot} {\cBin {y}
    {\cdot} {z}}} {\cBinX {\cBin {x} {\cdot} {y}} {\cdot} {z}}}}$
    \hfill ($\cdot$ is associative).

  \item ${\cForallX {x} {M} {\cBinBX {\cBinX {\mathsf{e}} {\cdot} {x}}
      {=} {\cBinX {x} {\cdot} {\mathsf{e}}} {=} {x}}}$\hfill
    ($\mathsf{e}$ is an identity element).

\ee
}
\end{theory-ext}

We can now define an iterated product operator for monoids in a
development of \textsf{MON-over-COF-1}:

\begin{dev-def}
{Monoids over \textnormal{\textsf{COF}} 1}
{\textsf{MON-over-COF-1}.}
{\textsf{MON-over-COF}.}
{
\bi

  \item[] $\mName{Def9}$: ${\cEqX {\mName{prod}_{\cFunTyCX {R} {R} {\cFunTy {R} {M}} {M}}} {}}$\\
    \hspace*{2ex}
   ${\cDefDesX {f} 
    {\cFunTyCX {Z_{\cSetTy {R}}} {Z_{\cSetTy {R}}} {\cFunTy {Z_{\cSetTy {R}}} {M}} {M}} {}}$\\
    \hspace*{4ex}${\cForallBX {m,n} {Z_{\cSetTy {R}}} {g} {\cFunTyX {Z_{\cSetTy {R}}} {M}}
    {\cQuasiEqX {\cFunAppCX {f} {m} {n} {g}} {}}}$\\
    \hspace*{6ex}${\cIf {\cBinX {m} {>} {n}} {\mathsf{e}} 
    {\cBinX {\cFunAppC {f} {m} {\cBin {n} {-} {1}} {g}} {\cdot} 
    {\cFunApp {g} {n}}}}$ \hfill (iterated product).

  \item[] $\mName{Thm29}$: ${\cForallBX {m} {Z_{\cSetTy {R}}} {g}
    {\cFunTyX {Z_{\cSetTy {R}}} {M}} {\cQuasiEqX {\cProd {i} {m} {m}
        {\cFunAppX {g} {i}}}{\cFunAppX {g} {m}}}}$\\
    \phantom{x} \hfill (trivial product).

  \item[] $\mName{Thm30}$: 
  ${\cForallBX 
      {m,k,n} 
      {Z_{\cSetTy {R}}} 
      {g}
      {\cFunTyX {Z_{\cSetTy {R}}} {M}} {}}$\\
      \hspace*{2ex}
     ${\cImpliesX
         {\cBinBX {m} {<} {k} {<} {n}}
         {\cQuasiEqX 
           {\cBinX
              {\cProd {i} {m} {k} {\cFunAppX {g} {i}}}
              {\cdot}
              {\cProd {i} {k+1} {n} {\cFunAppX {g} {i}}}}
           {\cProdX {i} {m} {n} {\cFunAppX {g} {i}}}}}$\\
     \phantom{x} \hfill (extended iterated product).

\ei
}
\end{dev-def}

\noindent
We are utilizing the notation for the iterated product operator
defined in Table~\ref{tab:monoids-product}.  $Z_{\cSetTy {R}}$ is a
quasitype defined in the development \textsf{COF-dev-2} of
\textsf{COF} found in~\cite{Farmer25} that denotes the set of
integers.  ($Z_{\cSetTy {R}}$ is automatically available in
\textsf{MON-over-COF} by the inclusion transportation convention
presented in Section~\ref{sec:com-monoids}.)  $\mName{Def9}$ defines
the iterated product operator, and $\mName{Thm29}$ and $\mName{Thm30}$
are two theorems about the operator.

\begin{table}
\bc
\begin{tabular}{|lll|}
\hline

  $\cProd {\mathbf{i}} {\mathbf{M}_R} {\mathbf{N}_R} {\mathbf{A}_M}$
& stands for
& $\cFunAppCX 
     {\mName{prod}_{\cFunTyCX {R} {R} {\cFunTy {R} {M}} {M}}} {} {} {}$\vspace{-1.5ex}\\
&
&    \hspace*{2ex}
    ${\mathbf{M}_R}\,
     {\mathbf{N}_R}\, 
     {\cFunAbs {\mathbf{i}} {R} {\mathbf{A}_M}}$.\\

\hline
\end{tabular}
\ec
\caption{Notational Definition for Monoids: Iterated Product Operator}
\label{tab:monoids-product}
\end{table}

We can similarly define extensions of \textsf{MON} over \textsf{COF}.
For example, here is a theory of commutative monoids over \textsf{COF}
and a development of it:

\begin{theory-ext}
{Commutative Monoids over \textnormal{\textsf{COF}}}
{\textsf{COM-MON-over-COF}.}
{\textsf{MON-over-COF}.}
{}
{}
{
\be \setcounter{enumi}{20}

  \item ${\cForallX {x,y} {M} {\cEqX {\cBinX {x} {\cdot} {y}} {\cBinX
    {y} {\cdot} {x}}}}$ \hfill ($\cdot$ is commutative).

\ee
}
\end{theory-ext}

\begin{dev-def}
{Com.\ Monoids over \textnormal{\textsf{COF}} 1}
{\textsf{COM-MON-over-COF-1}.}
{\textsf{COM-MON-over-COF}.}
{
\bi

  \item[] $\mName{Thm31}$: 
  ${\cForallBX 
      {m,n} 
      {Z_{\cSetTy {R}}} 
      {g,h}
      {\cFunTyX {Z_{\cSetTy {R}}} {M}} {}}$\\
      \hspace*{2ex}
     ${\cQuasiEqX 
        {\cBinX
           {\cProd {i} {m} {n} {\cFunAppX {g} {i}}}
           {\cdot}
           {\cProd {i} {m} {n} {\cFunAppX {h} {i}}}}
        {\cProdX
           {i} 
           {m} 
           {n} 
           {\cBinX
              {\cFunApp {g} {i}}
              {\cdot}
              {\cFunApp {h} {i}}}}}$\\
     \phantom{x} \hfill (product of iterated products).

\ei
}
\end{dev-def}

\noindent
Notice that this theorem holds only if $\cdot$ is commutative.

For another example, here is a theory of commutative monoid actions
over \textsf{COF} and a development of it:

\begin{theory-ext}
{\small Commutative Monoid Actions over \textnormal{\textsf{COF}}}
{\textsf{COM-MON-ACT-over-COF}.}
{\textsf{COM-MON-over-COF}.}
{$S$.}
{$\mName{act}_{\cFunTyX {\cProdTy {M} {S}} {S}}$.} 
{
\be \setcounter{enumi}{21}

  \item ${\cForallBX {x,y} {M} {s} {S} {\cEqX {\cBinX {x}
        {\mName{act}} {\cBin {y} {\mName{act}} {s}}} {\cBinX {\cBin
          {x} {\cdot} {y}} {\mName{act}} {s}}}}$\\ \phantom{x} \hfill
    ($\mName{act}$ is compatible with $\cdot$).

  \item ${\cForallX {s} {S} {\cEqX {\cBinX {\mName{e}} {\mName{act}}
        {s}} {s}}}$ \hfill ($\mName{act}$ is compatible with
    $\mName{e}$).

\ee
}
\end{theory-ext}

\begin{dev-def}
{\small Com.~Monoid Actions over \textnormal{\textsf{COF}} 1}
{\textsf{COM-MON-ACT-over-COF-1}.}
{\textsf{COM-MON-ACT-over-COF}.}
{
\bi

  \item[] $\mName{Thm32}$: 
  ${\cForallBX      
     {x,y} 
     {M} 
     {s} 
     {S} 
     {\cEqX    
        {\cBinX {x} {\mName{act}} {\cBin {y} {\mName{act}} {s}}} 
        {\cBinX {y} {\mName{act}} {\cBin {x} {\mName{act}} {s}}}}}$\\
   \phantom{x} \hfill ($\mName{act}$ has commutative-like property).

\ei
}
\end{dev-def}

\section{Monoid Theory Applied to Strings}\label{sec:strings}

In this section we will show how the machinery of our monoid theory
formalization can be applied to a theory of strings over an abstract
alphabet.  A string over an alphabet $A$ is a finite sequence of
values from $A$.  The finite sequence $s$ can be represented as a
partial function $s : \bN \tarrow A$ such that, for some $n \in \bN$,
$s(m)$ is defined iff $m < n$.

In Table~\ref{tab:nd-sequences} we introduce compact notation for
finite (and infinite) sequences represented in this manner.  The
notation requires a system of natural numbers as defined in Chapter 11
of~\cite{Farmer25}.  We also introduce some special
notation for strings in Table~\ref{tab:monoids-special}.

\begin{table}[t]\footnotesize
\bc
\begin{tabular}{|lll|}
\hline

  $\cSequencesPC {\alpha} {\beta}$
  \index{8ma@$\cSequencesPC {\alpha} {\beta}$}
& stands for
& $\cFunQTyX {\mathbf{C}^{N}_{\cSetTy {\alpha}}} {\beta}$.\\

  $\cSeqQTy {\beta}$
  \index{8mb@$\cSeqQTy {\beta}$}
& stands for
& $\cSequencesPC {\alpha} {\beta}$.\\

  $\cStreamsPC {\alpha} {\beta}$
  \index{8mc@$\cStreamsPC {\alpha} {\beta}$}
& stands for
& $\cSet {s} {\cSeqQTy {\beta}} {\cTotal {s}}$.\\

  $\cSeqInfQTy {\beta}$
  \index{8md@$\cSeqInfQTy {\beta}$}
& stands for
& $\cStreamsPC {\alpha} {\beta}$.\\

  $\cListsPC {\alpha} {\beta}$
  \index{8me@$\cListsPC {\alpha} {\beta}$}
& stands for
& $\{s : {\cSeqQTy {\beta}} \mid 
  {\cForsomeQTyX {n} {\mathbf{C}^{N}_{\cSetTy {\alpha}}}
  {\cForallQTyX {m} {\mathbf{C}^{N}_{\cSetTy {\alpha}}} {}}}$\\
&
& \hspace*{2ex}${\cIffX {\cIsDefX {\cFunApp {s} {m}}} 
  {\cFunAppBX {\mathbf{C}^{\le}_{\cFunTyBX {\alpha} {\alpha} {\cB}}} {m} 
  {\cFunApp {\mathbf{C}^{P}_{\cFunTyX {\alpha} {\alpha}}} {n}}}}\}$.\\

  $\cSeqFinQTy {\beta}$
  \index{8mf@$\cSeqFinQTy {\beta}$}
& stands for
& $\cListsPC {\alpha} {\beta}$.\\

  $\cConsPC {\alpha} {\beta}$
  \index{8mg@$\cConsPC {\alpha} {\beta}$}
& stands for
& $\cFunAbsX {x} {\beta} {\cFunAbsQTyX {s} {\cSeqQTy {\beta}} 
  {\cFunAbsQTyX {n} {\mathbf{C}^{N}_{\cSetTy {\alpha}}}} {}}$\\
&
& \hspace*{2ex}$\cIfX {\cEqX {n} {\mathbf{C}^{0}_\alpha}} {x} 
   {\cFunAppX {s} {\cFunApp 
   {\mathbf{C}^{P}_{\cFunTyX {\alpha} {\alpha}}} {n}}}$.\\

  $\cCons {\mathbf{A}_\beta} {\mathbf{B}_{\cFunTyX {\alpha} {\beta}}}$
  \index{8mh@$\cCons {\mathbf{A}_\beta} {\mathbf{B}_{\cFunTyX {\alpha} {\beta}}}$}
& stands for
& $\cFunAppBX {\cConsPC {\alpha} {\beta}}
  {\mathbf{A}_\beta} {\mathbf{B}_{\cFunTyX {\alpha} {\beta}}}$.\\

  $\cNilPC {\alpha} {\beta}$
  \index{8mk@$\cNilPC {\alpha} {\beta}$}
& stands for
& $\cEmpFunPC {\alpha} {\beta}$.\\

  $\cEmpListPC {\alpha} {\beta}$
  \index{8ml@$\cEmpListPC {\alpha} {\beta}$}
& stands for
& $\cNilPC {\alpha} {\beta}$.\\

  $\cListL {\mathbf{A}_\beta}$
  \index{8mm@$\cListL {\mathbf{A}_\beta}$}
& stands for
& $\cCons {\mathbf{A}_\beta} {\cEmpListPC {\alpha} {\beta}}$.\\

  $\cListL {\mathbf{A}^{1}_{\beta}, \ldots, \mathbf{A}^{n}_{\beta}}$
  \index{8mn@$\cListL {\mathbf{A}^{1}_{\beta}, \ldots, \mathbf{A}^{n}_{\beta}}$}
& stands for
& $\cCons {\mathbf{A}^{1}_{\beta}} 
  {\cListL {\mathbf{A}^{2}_{\beta}, \ldots, \mathbf{A}^{n}_{\beta}}}$
  {\sglsp}where $n \ge 2$.\\

  $\cLenPC {\alpha} {\beta}$
  \index{8mo@$\cLenPC {\alpha} {\beta}$}
& stands for
& $\cDefDesQTyX {f} {\cFunQTyX {\cSeqFinQTy {\beta}} 
  {\mathbf{C}^{N}_{\cSetTy {\alpha}}}} {}$\\
&
& \hspace*{2ex}${\cAndX
  {\cEqX {\cFunAppX {f} {\cEmpListPC {\alpha} {\beta}}}
  {\mathbf{C}^{0}_\alpha}} {}}$\\
&
& \hspace*{2ex}${\cForallQTyBX {x} {\beta} {s} {\cSeqFinQTy {\beta}} {}}$\\
&
& \hspace*{4ex}${\cEqX 
  {\cFunAppX {f} {\cCons {x} {s}}}
  {\cFunAppBX 
  {\mathbf{C}^{+}_{\cFunTyBX {\alpha} {\alpha} {\alpha}}}
  {\cFunApp {f} {s}}
  {\cFunApp {\mathbf{C}^{S}_{\cFunTyX {\alpha} {\alpha}}}
  {\mathbf{C}^{0}_\alpha}}}}$.\\

  $\cLen {\mathbf{A}_{\cFunTyX {\alpha} {\beta}}}$
  \index{8mp@$\cLen {\mathbf{A}_{\cFunTyX {\alpha} {\beta}}}$}
& stands for
& $\cFunAppX {\cLenPC {\alpha} {\beta}}
  {\mathbf{A}_{\cFunTyX {\alpha} {\beta}}}$.\\

  $\cAppendPC {\alpha} {\beta}$
  \index{8mq@$\cAppendPC {\alpha} {\beta}$}
& stands for
& $\cDefDesQTyX {f} 
     {\cFunTyBX {\cSeqFinQTy {\beta}} {\cSeqFinQTy {\beta}} {\cSeqFinQTy {\beta}}} {}$\\
&
& \hspace*{2ex}
    $\cAndX 
       {\cForallX 
          {t} 
          {\cSeqFinQTy {\beta}} 
          {\cEqX 
             {\cFunAppBX {f} {\cEmpListPC {\alpha} {\beta}} {t}} 
             {t}}}
       {}$\\
&
& \hspace*{2ex}
    $\cForallBX
       {x}
       {\beta}
       {s,t} 
       {\cSeqFinQTy {\beta}} 
       {\cEqX 
          {\cFunAppBX {f} {\cCons {x} {s}} {t}}
          {\cCons {x} {\cFunAppBX {f} {s} {t}}}}$.\\

\hline
\end{tabular}
\ec
\caption{Notational Definitions for Sequences}
\label{tab:nd-sequences}
\end{table}

\begin{table}[t]
\bc
\begin{tabular}{|lll|}
\hline

  $\cCatApp {\mathbf{X}_{\cFunTyX {R} {A}}} {\mathbf{Y}_{\cFunTyX {R} {A}}}$
& stands for
& $\cBinX 
     {\mathbf{X}_{\cFunTyX {R} {A}}} 
     {\textsf{cat}}
     {\mathbf{Y}_{\cFunTyX {R} {A}}}$.\\

  $\cSetCatApp 
     {\mathbf{S}_{\cSetTy {\cFunTyX {R} {A}}}} 
     {\mathbf{T}_{\cSetTy {\cFunTyX {R} {A}}}}$
& stands for
& $\cBinX
     {\mathbf{S}_{\cSetTy {\cFunTyX {R} {A}}}}
     {\textsf{set-cat}}
     {\mathbf{T}_{\cSetTy {\cFunTyX {R} {A}}}}$.\\

  $\cIterCat {\mathbf{i}} {\mathbf{M}_R} {\mathbf{N}_R} {\mathbf{A}_{\cFunTyX {R} {A}}}$
& stands for
& $\cFunAppCX 
     {\textsf{iter-cat}_{\cFunTyCX {R} {R} {\cFunTy {R} {\cFunTy {R} {A}}} {\cFunTy {R} {A}}}}
     {} {} {}$\vspace{-1ex}\\
&
&    \hspace*{2ex}
    ${\mathbf{M}_R}\,
     {\mathbf{N}_R}\,
     {\cFunAbs {\mathbf{i}} {R} {\mathbf{A}_{\cFunTyX {R} {A}}}}$.\\

\hline
\end{tabular}
\ec
\caption{Notational Definitions for Monoids: Special Notation}
\label{tab:monoids-special}
\end{table}

The development \mbox{\textsf{COF-dev-2}} of the theory \textsf{COF}
presented in Chapter~13 of~\cite{Farmer25} includes a system of
natural numbers~\cite[Proposition 13.11]{Farmer25}.  Therefore, we can define a
theory of strings as an extension of \textsf{COF} plus a base type $A$
that represents an abstract alphabet:

\begin{theory-ext}
{Strings}
{\textsf{STR}.}
{\textsf{COF}.}
{$A$.}
{}
{}
\end{theory-ext}

Since \textsf{STR} is an extension of \textsf{COF}, we can assume that
\textsf{STR-1} is a development of \textsf{STR} that contains the 7
definitions of \textsf{COF-dev-2} named as \textsf{COF-Def1}, \ldots,
\textsf{COF-Def7} and the 22 theorems of \textsf{COF-dev-2} named as
\textsf{COF-Thm1}, \ldots, \textsf{COF-Thm22}.  We can extend
\textsf{STR-1} as follows to include the basic definitions and
theorems of strings:

\begin{dev-ext}
{Strings 2}
{\textsf{STR-2}.}
{\textsf{STR-1}.}
{
\bi

  \item[] $\mName{Def10}$:
  ${\cEqX
     {\mName{str}_{\cSetTy {\cFunTyX {R} {A}}}}
     {\cSeqFinQTy {A}}}$
  \hfill (string quasitype).

  \item[] $\mName{Def11}$:
  ${\cEqX
      {\epsilon_{\cFunTyX {R} {A}}}
      {\cEmpListPC {R} {A}}}$ \hfill (empty string).

  \item[] $\mName{Def12}$:
  ${\cEqX
      {\mName{cat}_{\cFunTyX {\cProdTy {\cFunTy {R} {A}} {\cFunTy {R} {A}}} {\cFunTy {R} {A}}}} 
      {\cAppendPC {R} {A}}}$\\
  \phantom{x} \hfill (concatenation).

  \item[] $\mName{Thm33}$:
  ${\cForallX 
      {x} 
      {\mName{str}}
      {\cBinBX 
         {\cCatAppX {\epsilon} {x}}
         {=}
         {\cCatAppX {x} {\epsilon}}
         {=}
         {x}}}$
    \hfill ($\epsilon$ is an identity element). 

  \item[] $\mName{Thm34}$:
  ${\cForallX 
      {x,y,z} 
      {\mName{str}}
      {\cEqX
         {\cCatAppX {x} {\cCatApp {y} {z}}}
         {\cCatAppX {\cCatApp {x} {y}} {z}}}}$
     \hfill ($\mName{cat}$ is associative).

\ei
}
\end{dev-ext}

\bsp
\noindent
\textsf{Def10--Def12} utilize the compact notation introduced in
Table~\ref{tab:nd-sequences} and \textsf{Thm33--Thm34} utilize the
compact notation introduced in Table~\ref{tab:monoids-special}.
\esp

We can define a development translation from \textsf{MON-over-COF} to
\textsf{STR-2} as follows:

\begin{dev-trans-def}
{\small \textnormal{\textsf{MON-over-COF}} to \textnormal{\textsf{STR-2}}}
{\textsf{MON-over-COF-to-STR-2}.}
{\textsf{MON-over-COF}.}
{\textsf{STR-2}.}
{
\be 

  \item $R \mapsto R$.

  \item $M \mapsto {\mName{str}_{\cSetTy {\cFunTyX {R} {A}}}}$.

\ee
}
{
\be 

  \item $0_R \mapsto 0_R$.

  \item[] $\vdots$ \setcounter{enumi}{9}

  \item $\mName{lub}_{\cFunTyBX {R} {\cSetTy {R}} {\cB}} \mapsto
    \mName{lub}_{\cFunTyBX {R} {\cSetTy {R}} {\cB}}$.

  \item ${\cdot_{\cFunTyX {\cProdTy {M} {M}} {M}}} \mapsto
    {\mName{cat}_{\cFunTyX {\cProdTy {\cFunTy {R} {A}} {\cFunTy {R} {A}}} {\cFunTy {R} {A}}}}$.

  \item ${\mathsf{e}_M} \mapsto {\epsilon_{\cFunTyX {R} {A}}}$.

\ee
}
\end{dev-trans-def}

\textsf{MON-over-COF-to-STR-2} has one obligation of the first kind
for the mapped base type $M$, which is clearly valid since
$\mName{str}_{\cSetTy {\cFunTyX {R} {A}}}$ is nonempty.
\textsf{MON-over-COF-to-STR-2} has 12 obligations of the second kind
for the 12 mapped constants.  The first 10 are trivially valid.  The
last 2 are valid by $\mathsf{Def12}$ and $\mathsf{Def11}$,
respectively.  And \textsf{MON-over-COF-to-STR-2} has 20 obligations
of the third kind for the 20 axioms of \textsf{MON-over-COF}. The
first 18 are trivially valid.  The last 2 are valid by
$\mathsf{Thm34}$ and $\mathsf{Thm33}$, respectively.  Therefore,
\textsf{MON-over-COF-to-STR-2} is a development morphism from the
theory \textsf{MON-over-COF} to the development \textsf{STR-2} by the
Morphism Theorem \cite[Theorem 14.16]{Farmer25}.

The development morphism \textsf{MON-over-COF-to-STR-2} allows us to
transport definitions and theorems about monoids to the development
\textsf{STR-2}.  Here are five examples transported as a group:

\begin{group-transport}
{Transport to \textnormal{\textsf{STR-2}}}
{\textsf{monoid-machinery-via-MON-over-COF-1-to-STR-2}.}
{\textsf{MON-over-COF-1}.}
{\textsf{STR-2}.}
{\textsf{MON-over-COF-to-STR-2}.}
{
\bi

  \item[] $\mName{Thm1}$: ${\cMonoid {\cUnivSetPC {M}}
    {\cdot_{\cFunTyX {\cProdTy {M} {M}} {M}}} {\mathsf{e}_M}}$\\
  \phantom{x} \hfill (models of \textsf{MON} define monoids).

  \item[] $\mName{Def3}$: 
  ${\cEqX 
      {\odot_{\cFunTyX {\cProdTy {\cSetTy {M}} {\cSetTy {M}}} {\cSetTy {M}}}} 
      {\cFunAppX
         {\textsf{set-op}_{\cFunTyX {\cFunTy {\cProdTy {M} {M}} {M}} {\cFunTy {\cProdTy {\cSetTy {M}} {\cSetTy {M}}} {\cSetTy {M}}}}} 
         {\cdot}}}$\\ \phantom{x} \hfill (set product).

  \item[] $\mName{Def4}$: ${\cEqX {\mathsf{E}_{\cSetTy {M}}}
    {\cFinSetL {\mathsf{e}_M}}}$ \hfill (set identity element).

  \item[] \textsf{Thm12 (Thm1-via-MON-to-set-monoid)}:\\
  \hspace*{2ex}
  ${\cMonoid
     {\cUnivSetPC {\cSetTy {M}}}
     {\odot_{\cFunTyX {\cProdTy {\cSetTy {M}} {\cSetTy {M}}} {\cSetTy {M}}}} 
     {\mathsf{E}_{\cSetTy {M}}}}$\\
  \phantom{x} \hfill (set  monoids are monoids).

  \item[] $\mName{Def9}$: 
  ${\cEqX {\mName{prod}_{\cFunTyCX {R} {R} {\cFunTy {R} {M}} {M}}} {}}$\\
    \hspace*{2ex}
   ${\cDefDesX {f} 
    {\cFunTyCX {Z_{\cSetTy {R}}} {Z_{\cSetTy {R}}} {\cFunTy {Z_{\cSetTy {R}}} {M}} {M}} {}}$\\
    \hspace*{4ex}${\cForallBX {m,n} {Z_{\cSetTy {R}}} {g} {\cFunTyX {Z_{\cSetTy {R}}} {M}}
    {\cQuasiEqX {\cFunAppCX {f} {m} {n} {g}} {}}}$\\
    \hspace*{6ex}${\cIf {\cBinX {m} {>} {n}} {\mathsf{e}} 
    {\cBinX {\cFunAppC {f} {m} {\cBin {n} {-} {1}} {g}} {\cdot} 
    {\cFunApp {g} {n}}}}$ \hfill (iterated product).

\ei
}
{
\bi

\item[] \textsf{Thm35 (Thm1-via-MON-over-COF-to-STR-2)}:\\
  \hspace*{2ex}
  ${\cMonoid
      {\mName{str}_{\cSetTy {\cFunTyX {R} {A}}}}
      {\mName{cat}_{\cFunTyX {\cProdTy {\cFunTy {R} {A}} {\cFunTy {R} {A}}} {\cFunTy {R} {A}}}}
      {\epsilon_{\cFunTyX {R} {A}}}}$\\
  \phantom{x} \hfill (strings form a monoid).

  \item[] \textsf{Def13 (Def3-via-MON-over-COF-to-STR-2)}:\\
  \hspace*{2ex} 
  ${\cEqX 
      {\textsf{set-cat}_{\cFunTyX {\cProdTy {\cSetTy {\cFunTyX {R} {A}}} {\cSetTy {\cFunTyX {R} {A}}}} {\cSetTy {\cFunTyX {R} {A}}}}} {}}$\\
      \hspace*{2ex}
     ${\cFunAppX
         {\textsf{set-op}_{\cFunTyX {\cFunTy {\cProdTy {\cFunTy {R} {A}} {\cFunTy {R} {A}}} {\cFunTy {R} {A}}} {\cFunTy {\cProdTy {\cSetTy {\cFunTyX {R} {A}}} {\cSetTy {\cFunTyX {R} {A}}}} {\cSetTy {\cFunTyX {R} {A}}}}}} 
         {\mName{cat}}}$\\
  \phantom{x} \hfill (set concatenation). 

  \item[] \textsf{Def14 (Def4-via-MON-over-COF-to-STR-2)}:\\
    \hspace*{2ex}
    ${\cEqX {\mathsf{E}_{\cSetTy {{\cFunTyX {R} {A}}}}}
    {\cFinSetL {\epsilon_{\cFunTyX {R} {A}}}}}$ \hfill (set identity element). 

  \item[] \textsf{Thm36 (Thm12-via-MON-over-COF-1-to-STR-2)}:\\
  \hspace*{2ex}
  ${\cMonoid 
      {\cSetQTy {\mName{str}_{\cSetTy {\cFunTyX {R} {A}}}}} 
      {\textsf{set-cat}_{\cFunTyX {\cProdTy {\cSetTy {\cFunTyX {R} {A}}} {\cSetTy {\cFunTyX {R} {A}}}} {\cSetTy {\cFunTyX {R} {A}}}}} 
      {\mathsf{E}_{\cSetTy {{\cFunTyX {R} {A}}}}}}$\\
  \phantom{x} \hfill (string sets form a monoid).

  \item[] \textsf{Def15 (Def9-via-MON-over-COF-1-to-STR-2)}:\\
  \hspace*{2ex} 
  ${\cEqX {\textsf{iter-cat}_{\cFunTyCX {R} {R} {\cFunTy {R} {\cFunTy {R} {A}}} {\cFunTy {R} {A}}}} {}}$\\
    \hspace*{2ex}
   ${\cDefDesX {f} 
    {\cFunTyCX {Z_{\cSetTy {R}}} {Z_{\cSetTy {R}}} {\cFunTy {Z_{\cSetTy {R}}} {\cFunTy {R} {A}}} {\cFunTy {R} {A}}} {}}$\\
    \hspace*{4ex}${\cForallBX {m,n} {Z_{\cSetTy {R}}} {g} {\cFunTyX {Z_{\cSetTy {R}}} {\cFunTy {R} {A}}}
    {\cQuasiEqX {\cFunAppCX {f} {m} {n} {g}} {}}}$\\
    \hspace*{6ex}${\cIf {\cBinX {m} {>} {n}} {\epsilon} 
    {\cBinX {\cFunAppC {f} {m} {\cBin {n} {-} {1}} {g}} {\mathsf{cat}} 
    {\cFunApp {g} {n}}}}$\\
  \phantom{x} \hfill (iterated concatenation).

\ei
}
{\textsf{STR-3}.}
{\textsf{MON-over-COF-1-to-STR-3}.}
\end{group-transport} 

\noindent
Notation for the application of \[{\textsf{set-cat}_{\cFunTyX
    {\cProdTy {\cSetTy {\cFunTyX {R} {A}}} {\cSetTy {\cFunTyX {R}
          {A}}}} {\cSetTy {\cFunTyX {R} {A}}}}}\]
and \[{\textsf{iter-cat}_{\cFunTyCX {R} {R} {\cFunTy {R} {\cFunTy {R}
        {A}}} {\cFunTy {R} {A}}}}\] are defined in
Table~\ref{tab:monoids-special}.

\section{Related Work}\label{sec:related-work}

As we have seen, a theory (or development) graph provides an effective
architecture for formalizing a body of mathematical knowledge.  It is
especially useful for creating a large library of formal mathematical
knowledge that, by necessity, must be constructed in parallel by
multiple developers.  The library is built in parts by separate
development teams and then the parts are linked together by morphisms.
Mathematical knowledge is organized as a theory graph in several proof
assistants and logical frameworks including Ergo~\cite{NicksonEtAl96},
IMPS~\cite{FarmerEtAl92b,FarmerEtAl98b}\index{IMPS}, Isabelle~\cite{Ballarin14},
LF~\cite{RabeSchuermann09}, MMT~\cite{RabeKohlhase13}, and
PVS~\cite{OwreShankar01}.  Theory graphs are also employed in several
software specification and development systems including
ASL~\cite{SannellaWirsing83}, CASL~\cite{AstesianoEtAl02,
  AutexierEtAl99}, EHDM~\cite{RushbyEtAl91},
Hets~\cite{MossakowskiEtAl07}, IOTA~\cite{NakajimaYuasa83},
KIDS~\cite{Smith91}, OBJ~\cite{GoguenEtAl00}, and
Specware~\cite{SrinivasJullig95}.

Simple type theory in the form of Church's type theory is a popular
logic for formal mathematics.  There are several proof assistants that
implement versions of Church's type theory including
HOL~\cite{GordonMelham93}, HOL Light~\cite{Harrison09},
IMPS~\cite{FarmerEtAl93,FarmerEtAl98b}\index{IMPS},
Isabelle/HOL~\cite{Paulson94}, ProofPower~\cite{ProofPowerWebSite},
PVS~\cite{OwreEtAl96}, and TPS~\cite{AndrewsEtAl96}.  As we mentioned
in Section~\ref{sec:intro}, the IMPS proof assistant is especially
noteworthy here since it implements
LUTINS~\cite{Farmer90,Farmer93b,Farmer94}, a version of Church's type
theory that admits undefined expressions and is closely related to
Alonzo.

In recent years, there has been growing interest in formalizing
mathematics within dependent logics.  Several proof assistants and
programming languages are based on versions of dependent type theory
including Agda~\cite{BoveEtAl09, Norell07},
Automath~\cite{NederpeltEtAl94}, Epigram~\cite{EpigramWebSite},
{\fstar}~\cite{F*WebSite}, Idris~\cite{IdrisWebSite},
Lean~\cite{deMouraEtAl15}, Nuprl~\cite{Constable86}, and
Rocq~\cite{RocqWebSite}.  So which type theory is better for formal
mathematics, simple type theory or dependent type theory?  This
question has become hotly contested.  We hope that the reader will see
our formalization of monoid theory in Alonzo as evidence for the
efficacy of simple type theory as a logical basis for formal
mathematics.  The reader might also be interested in looking at these
recent papers that advocate for simple type theory:~\cite{BordgEtAl22,
  Paulson19, Paulson23}.

Since monoid theory is a relatively simple subject, there have not
been many attempts to formalize it by itself, but there have been
several formalizations of group theory.  Here are some
examples:~\cite{Garillot11, GonthierEtAl07, Kachapova21, Russinoff22,
  YuEtAl03, Zipperer16}.

There are two other important alternatives to the standard approach to
formal mathematics.  The first is Tom Hales' \emph{formal abstracts in
mathematics} project~\cite{FormAbsWebSite,Hales18} in which proof
assistants are used to the create \emph{formal abstracts}, which are
formal presentations of mathematical theorems without formal proofs.
The second is Michael Kohlhase's \emph{flexiformal
mathematics}~\cite{Iancu17,Kohlhase12,KohlhaseEtAl17} initiative in
which mathematics is a mixture of traditional and formal mathematics
and proofs can be either traditional or formal.  The alternative
approach we offer is similar to both of these approaches, but there
are important differences.  The formal abstracts approach seeks to
formalize \emph{collections of theorems} without proofs using proof
assistants, while we seek to formalize \emph{theory graphs} with
either traditional or formal proofs using supporting software that can
be much simpler than a proof assistant.  The objective of the
flexiformal mathematics approach is to give the user the flexibility
to produce mathematics with varying degrees of formality.  In
contrast, our approach is to produce mathematics that is fully formal
except for proofs.

\section{Conclusion}\label{sec:conc}

The developments and development morphisms presented in
Sections~\ref{sec:monoids}--\ref{sec:strings} form the development
graph $G_{\rm mon}$ shown in Figure~\ref{fig:dev-graph}.  The
development graph shows all the development morphisms that we have
explicitly defined (7 inclusions via theory extension modules and 10
noninclusions via theory and development definition modules) plus an
implicit inclusion from \textsf{COM-MON} to \textsf{COM-MON-over-COF}.
A development morphism that is an inclusion is designated by a
$\hookrightarrow$ arrow and a noninclusion is designated by a
$\rightarrow$ arrow.  There are many, many more useful development
morphisms that are not shown in $G_{\rm mon}$, including implicit
inclusions and a vast number of development morphisms into the theory
\textsf{COF}.

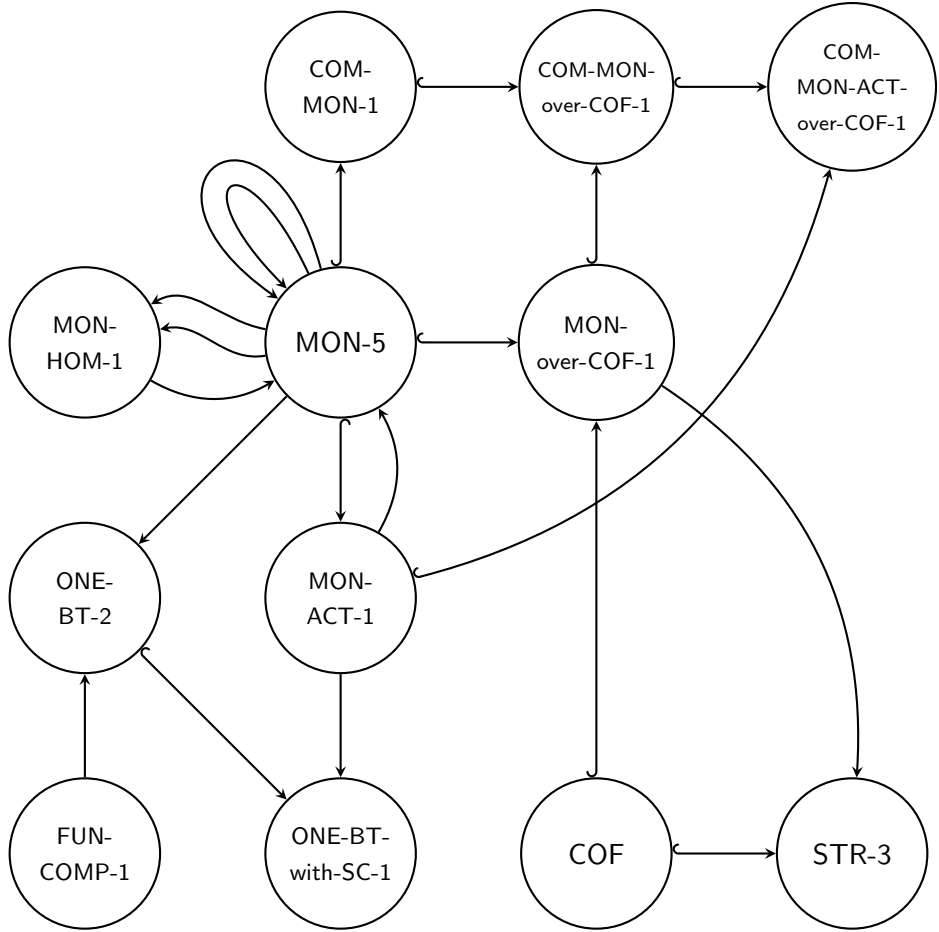
\begin{figure}[t]
\tikzset{every loop/.style={min distance=5mm,in=0,out=60,looseness=6}}
\tikzset{state/.style={circle, draw, minimum size=2cm}}
\tikzset{node distance=3.4cm,on grid,auto}
\center
\begin{tikzpicture}
  \node[state, thick] (mon) {\textsf{MON-5}};
  \node[state, thick, align=center] (com-mon) [above = of mon] 
    {\footnotesize{\textsf{COM-}} \\ \footnotesize{\textsf{MON-1}}};
  \node[state, thick, align=center] (com-hom) [left = of mon] 
    {\footnotesize{\textsf{MON-}} \\ \footnotesize{\textsf{HOM-1}}};
  \node[state, thick, align=center] (com-mon-cof) [right = of com-mon] 
    {\scriptsize{\textsf{COM-MON-}} \\ \scriptsize{\textsf{over-COF-1}}};
  \node[state, thick, align=center] (mon-cof) [right = of mon]  
    {\footnotesize{\textsf{MON-}} \\ \footnotesize{\textsf{over-COF-1}}};
  \node[state, thick, align=center] (mon-act) [below = of mon] 
    {\footnotesize{\textsf{MON-}} \\ \footnotesize{\textsf{ACT-1}}};
  \node[state, thick, align=center] (one) [left = of mon-act] 
    {\footnotesize{\textsf{ONE-}} \\ \footnotesize{\textsf{BT-2}}}; 
  \node[state, thick, align=center] (fun-comp) [below = of one]  
    {\footnotesize{\textsf{FUN-}} \\ \footnotesize{\textsf{COMP-1}}};
  \node[state, thick, align=center] (one-sc) [below = of mon-act]
    {\footnotesize{\textsf{ONE-BT-}} \\ \footnotesize{\textsf{with-SC-1}}};
  \node[state, thick] (cof) [right = of one-sc] {\textsf{COF}};
  \node[state, thick, align=center] (com-mon-act-cof) [right = of com-mon-cof] 
     {\scriptsize{\textsf{COM-}} \\ 
      \scriptsize{\textsf{MON-ACT-}} \\
      \scriptsize{\textsf{over-COF-1}}};
  \node[state, thick] (str) [right = of cof] {\textsf{STR-3}};
  \path [inclusion]
    (mon)             edge (com-mon)
                      edge (mon-cof)
                      edge (mon-act)
    (com-mon)         edge (com-mon-cof)
    (com-mon-cof)     edge (com-mon-act-cof)
    (mon-act)         edge [bend right] (com-mon-act-cof)
    (mon-cof)         edge (com-mon-cof)
    (one)             edge (one-sc)
    (cof)             edge (str)
                      edge (mon-cof);
  \path [morphism]
    (mon)             edge [out=115, in=135, looseness=15] (mon)
                      edge [out=105, in=145, looseness=10] (mon)
                      edge [out=170, in=30] (com-hom)
                      edge [out=190, in=10] (com-hom)
                      edge (one)
    (com-hom)         edge [bend right] (mon)
    (mon-act)         edge (one-sc)
                      edge [bend right] (mon)
    (fun-comp)        edge (one)
    (mon-cof)         edge [bend left] (str);
\end{tikzpicture}
\caption{The Monoid Theory Development Graph}
\label{fig:dev-graph}
\end{figure}

The construction of $G_{\rm mon}$ illustrates how a body of
mathematical knowledge can be formalized in Alonzo as a development
graph in accordance with the little theories method and the
alternative approach.  $G_{\rm mon}$ could be extended to include
other mathematical concepts related to monoids such as categories.  It
could be incorporated in a development graph that formalizes a more
extensive body of mathematical knowledge.  And it could also be used
as a foundation for building a formalization of group theory.  This
would be done by lifting each development $D$ of a theory $T$ that
extends \textsf{MON} to a development $D'$ of a theory $T'$ that
extends a theory \textsf{GRP} of groups obtained by adding an inverse
operation to \textsf{MON}.  The lifting of $D$ to $D'$ would include
constructing inclusions from \textsf{MON} to \textsf{GRP} and from $T$
to~$T'$ via theory extensions.

The formalization of monoid theory we have presented demonstrates
three things.  First, it demonstrates the power of the little theories
method.  The formalization is largely free of redundancy since each
mathematical topic is articulated in just one development $D$, the
development for the little theory that is optimal for the topic in
level of abstraction and choice of vocabulary.  If we create a
translation $\Phi$ from $D$ to another development $D'$ and prove that
$\Phi$ is a morphism, then we can freely transport the definitions and
theorems of $D$ to $D'$ via $\Phi$.  That is, an abstract concept or
fact that has been validated in $D$ can be translated to a concrete
instance of the concept or fact that is automatically validated in
$D'$ provided the translation is a morphism.  (This is illustrated by
our use of the development morphism \textsf{MON-over-COF-to-STR-2} to
transport definitions and theorems about monoids to a development
about strings.)  As the result, the same concept or fact can appear in
many places in the theory graph but under different assumptions and
involving different vocabulary.  (For example, the notion of a
submonoid represented by the constant ${\mathsf{submonoid}_{\cFunTyX
    {\cSetTy {M}} {\cB}}}$ defined in \textsf{MON-1} appears in
\textsf{ONE-BT-2} as the notion of a transformation monoid represented
by the constant ${\textsf{trans-monoid}_{\cFunTyX {\cSetTy {\cFunTyX
        {S} {S}}} {\cB}}}$.)  In short, we have shown how the little
theories method enables mathematical knowledge to be formalized to
maximize clarity and minimize redundancy.

Second, the formalization demonstrates that the alternative approach
to formal mathematics (with traditional and formal proofs) has two
advantages over the standard approach (with only formal proofs):
(1)~communication is more effective since the user has greater freedom
of expression and (2)~formalization is easier since the approach
offers greater accessibility.  The standard approach is done with the
help of a proof assistant and all proofs are formal and mechanically
checked.  Proof assistants are consequently very complex and
notoriously difficult to learn how to use.  Traditional proofs are
easier to read and write than formal proofs and are better suited for
communicating the ideas behind proofs.  Moreover, since the
alternative approach does not require a facility for developing and
checking formal proofs, it can be done with software support that is
much simpler and easier to use than a proof assistant.  (In this
paper, our software support was just a set of LaTeX macros and
environments.)

Third, the formalization demonstrates that Alonzo is well suited for
expressing and reasoning about mathematical ideas.  The simple type
theory machinery of Alonzo --- function and product types, function
application and abstraction, definite description, and ordered pairs
--- enables mathematical expressions to be formulated in a direct and
natural manner.  It also enables almost every single mathematical
structure or set of similar mathematical structures to be specified by
an Alonzo development.  (For example, the development
\textsf{ONE-BT-2} specifies the set of mathematical structures
consisting of a set $S$ and the set $S \tarrow S$ of transformations
on $S$.)  The admission of undefined expressions in Alonzo enables
statements involving partial and total functions and definite
descriptions to be expressed directly, naturally, and succinctly.
(For example, if $M = (m,\cdot,e)$ is a monoid, the operation that
makes a submonoid $m' \subseteq m$ of $M$ a monoid itself is exactly
what is expected: the partial function that results from restricting
$\cdot$ to $m' \times m'$.)  And the notational definitions and
conventions employed in Alonzo enables mathematical expressions to be
presented with largely the same notation that is used mathematical
practice.  (For example, \textsf{Thm33}: ${\cForallX {x} {\mName{str}}
  {\cBinBX {\cCatAppX {\epsilon} {x}} {=} {\cCatAppX {x} {\epsilon}}
    {=} {x}}}$, that states $\epsilon$ is an identity element for
concatenation, is written just as one would expect it to be written in
mathematical practice.)

\newpage

We believe that this paper achieves our overarching goal: To
demonstrate that mathematical knowledge can be very effectively
formalized in a version of simple type theory like Alonzo using the
little theories method and the alternative approach to formal
mathematics.  We also believe that it illustrates the benefits of
employing the little theories method, the alternative approach, and
Alonzo in formal mathematics.

\appendix

\section{Validation of Definitions and Theorems}\label{app:validation}

Let $D = (T, \Xi)$ be a development where $T$ is the bottom theory of
the development and $\Xi = \mList{P_1,\ldots,P_n}$ is the list of
definition and theorem packages of the development.  For each $i$ with
$1 \le i \le n$, $P_i$ has the form $\mTuple{p,{\cCon {\mathbf{c}}
    {\alpha}},\mathbf{A}_\alpha,\pi}$ if $P_i$ is a definition package
and has the form $P_i =\mTuple{p,\mathbf{A}_\cB,\pi}$ if $P_i$ is a
theorem package.  Define $T_0 = T$ and, for all $i$ with $0 \le i \le
n - 1$, define $T_{i+1} = T[P_{i+1}]$ if $P_{i+1}$ is a definition
package and $T_{i+1} = T_i$ if $P_{i+1}$ is a theorem package.  In the
former case, $\pi$ is a proof that $\cIsDefX {\mathbf{A}_\alpha}$ is
valid in $T_i$, and in the latter case, $\pi$ is a proof that
$\mathbf{A}_\cB$ is valid in $T_i$.  These proofs may be either
traditional or formal.  See Chapter 12 of~\cite{Farmer25} for further
details.

The validation proofs for the definitions and theorems of a
development are not included in the modules we have used to construct
developments and to transport definitions and theorems.  Instead, we
give in this appendix, for each of the definitions and theorems in the
developments defined in Sections~\ref{sec:monoids}--\ref{sec:strings},
a traditional proof that validates the definition or theorem.  The
proofs are almost entirely straightforward.  The proofs extensively
reference the axioms, rules of inference, and metatheorems of
$\mathfrak{A}$, the formal proof system for Alonzo presented
in~\cite{Farmer25}.  These are legitimate to use since $\mathfrak{A}$
is~sound by the Soundness Theorem \cite[Theorem B.11]{Farmer25}.

\subsection{Development of \textnormal{\textsf{MON}}} 

\be

  \item $\mName{Thm1}$: ${\cMonoid {\cUnivSetPC {M}}
    {\cdot_{\cFunTyX {\cProdTy {M} {M}} {M}}} {\mathsf{e}_M}}$\\
  \phantom{x} \hfill (models of \textsf{MON} define monoids).

\begin{proof}[ of the theorem.] 
Let $T = (L, \Gamma)$ be $\textsf{MON}$.  We must show \[(\star)
{\sglsp} T \vDash {\cMonoid {\cUnivSetPC M} {\cdot_{\cFunTyX {\cProdTy
        {M} {M}} {M}}} {\mathsf{e}_M}}.\]
\begin{align*}
\Gamma &\vDash
    {\cIsDefX {\cUnivSetPC M}}\tag{1}\\
\Gamma &\vDash
    \cUnivSetPC{M} \not= \cEmpSetPC{M}\tag{2}\\
\Gamma &\vDash
    {\cIsDefInQTyX
       {\cdot_{\cFunTyX{\cProdTy {M} {M}} {M}}}
       {\cFunQTyX
          {\cProdQTy {\cUnivSetPC {M}} {\cUnivSetPC {M}}}
          {\cUnivSetPC {M}}}}\tag{3}\\
\Gamma &\vDash
    {\cIsDefInQTyX
       {\mathsf{e}_M}
       {\cUnivSetPC {M}}}\tag{4}\\
\Gamma &\vDash
    \cForallX{x, y, z} {\cUnivSetPC M}
    {\cEqX 
       {\cBinX {x} {\cdot} {\cBin {y} {\cdot} {z}}}
       {\cBinX {\cBin {x} {\cdot} {y}} {\cdot} {z}}}\tag{5}\\
\Gamma &\vDash
    {\cForallX
       {x} 
       {\cUnivSetPC M} 
       {\cEqX
          {\cBinX {\textsf{e}} {\cdot} {x}}
          {x}}}\tag{6}\\
\Gamma &\vDash
    {\cMonoid 
       {\cUnivSetPC M} 
       {\cdot_{\cFunTyX {\cProdTy {M} {M}} {M}}} 
       {\mathsf{e}_M}}\tag{7}
\end{align*}
(1) and (2)~follow from parts 1 and 2, respectively, of Lemma~\ref{lem:univ-sets}; 
(3)~follows from~\cite[Axiom~A5.2]{Farmer25} and parts 8--10 of Lemma~\ref{lem:univ-sets};
(4)~follows from~\cite[Axiom~A5.2]{Farmer25} and part 8 of Lemma~\ref{lem:univ-sets}; 
(5) and (6) follow from Axioms 1 and~2, respectively, of $T$ and part 5 of
Lemma~\ref{lem:univ-sets}; 
and (7)~follows from (1)--(6) and the definition of $\textsf{MONOID}$ in 
Table~\ref{tab:monoids-abbr}.  Therefore, $(\star)$ holds.
\end{proof}

  \item $\mName{Thm2}$: ${\cTotal {\cdot_{\cFunTyX {\cProdTy {M} {M}}
        {M}}}}$ \hfill ($\cdot$ is total).

\begin{proof}[ of the theorem.]
Let $\textbf{A}_\cB$ be 
\[\cForallX {x} {\cProdTyX {M} {M}}
{\cIsDefX {\cFunApp {\cdot_{\cFunTyX {\cProdTy {M} {M}} {M}}} {x}}}\]
and $T = (L,\Gamma)$ be \textsf{MON}.  $\mathsf{TOTAL}$ is the
abbreviation introduced by the notational definition given in
Table~\ref{tab:nd-functions}, and so ${\cTotal {\cdot_{\cFunTyX
      {\cProdTy {M} {M}} {M}}}}$ stands for $\textbf{A}_\cB$.  Thus we
must show $(\star)$~$T \vDash \textbf{A}_\cB$.  
\begin{align*}
\Gamma &\vDash 
    {\cIsDefX {\cVar {x} {\cProdTyX {M} {M}}}}\tag{1}\\
\Gamma &\vDash
    {\cEqX 
       {\cVar {x} {\cProdTyX {M} {M}}}
       {\cOrdPair 
          {\cFunAppX {\mathsf{fst}} {x}}
          {\cFunAppX {\mathsf{snd}} {x}}}}\tag{2}\\
\Gamma &\vDash
    {\cAndX 
       {\cIsDefX {\cFunApp {\mathsf{fst}} {x}}}
       {\cIsDefX {\cFunApp {\mathsf{snd}} {x}}}}\tag{3}\\
\Gamma &\vDash
    {\cEqX 
       {\cBinX 
          {\cFunApp {\mathsf{fst}} {x}} 
          {\cdot} 
          {\cBin 
             {\cFunApp {\mathsf{fst}} {x}} 
             {\cdot} 
             {\cFunApp {\mathsf{snd}} {x}}}}
       {\cBinX 
          {\cBin 
             {\cFunApp {\mathsf{fst}} {x}}
             {\cdot} 
             {\cFunApp {\mathsf{fst}} {x}}}
          {\cdot} 
          {\cFunApp {\mathsf{snd}} {x}}}}\tag{4}\\
\Gamma &\vDash
    {\cIsDefX
       {\cBin 
          {\cFunApp {\mathsf{fst}} {x}} 
          {\cdot} 
          {\cFunApp {\mathsf{snd}} {x}}}}\tag{5}\\
\Gamma &\vDash
    {\cIsDefX
       {\cFunApp 
          {\cdot_{\cFunTyX {\cProdTy {M} {M}} {M}}} 
          {\cOrdPair 
             {\cFunAppX {\mathsf{fst}} {x}} 
             {\cFunAppX {\mathsf{snd}} {x}}}}}\tag{6}\\
\Gamma &\vDash
    \textbf{A}_\cB\tag{7}
\end{align*}
(1)~follows from variables always being defined by~\cite[Axiom A5.1]{Farmer25};
(2)~follows from (1) and~\cite[Axiom A7.4]{Farmer25} by
Universal Instantiation~\cite[Theorem A.14]{Farmer25}; 
(3)~follows from (2) by~\cite[Axioms A5.5, A7.2, and A7.3]{Farmer25};
(4)~follows from (3) and Axiom 1 of $T$ by 
Universal Instantiation~\cite[Theorem A.14]{Farmer25}; 
(5)~follows from (4) by~\cite[Axioms A5.4 and A5.10]{Farmer25};
(6)~follows from (5) by notational definition; and
(7)~follows from (6) by Universal Generalization~\cite[Theorem A.30]{Farmer25} 
using (2) and the fact that ${\cVar{x} {\cProdTy {M} {M}}}$ is not free in 
$\Gamma$ since $\Gamma$ is a set of sentences.
\end{proof}

  \item $\mName{Thm3}$: ${\cForallX {x} {M} {\cImpliesX {\cForall
        {y} {M} {\cBinBX {\cBinX {x} {\cdot} {y}} {=} {\cBinX {y}
            {\cdot} {x}} {=} {y}}} {\cEqX {x}
        {\mathsf{e}}}}}$\\ \phantom{x} \hfill (uniqueness of identity
    element).

\begin{proof}[ of the theorem.]
Let $\textbf{A}_\cB$ be \[{\cForallX {y} {M} {\cBinBX {\cBinX {\cVar
        {x} {M}} {\cdot} {y}} {=} {\cBinX {y} {\cdot} {\cVar {x} {M}}}
    {=} {y}}}\] and $T = (L,\Gamma)$ be \textsf{MON}.  We must show
$(\star)$ $T \vDash {\cForallX {x} {M} {\cImpliesX {\textbf{A}_\cB}
    {\cEqX {x} {\mathsf{e}}}}}$.
\begin{align*}
\Gamma \cup \mSet{\textbf{A}_\cB} 
  &\vDash 
  {\cIsDefX {\mathsf{e}}}\tag{1}\\
\Gamma \cup \mSet{\textbf{A}_\cB} 
  &\vDash 
  {\cIsDefX {\cVar {x} {M}}}\tag{2}\\
\Gamma \cup \mSet{\textbf{A}_\cB} 
  &\vDash 
  {\cBinBX 
     {\cBinX {\cVar {x} {M}} {\cdot} {\mathsf{e}}} 
     {=} 
     {\cBinX {\mathsf{e}} {\cdot} {{\cVar {x} {M}}}} 
     {=} 
     {\mathsf{e}}}\tag{3}\\
\Gamma \cup \mSet{\textbf{A}_\cB} 
  &\vDash  
  {\cBinBX 
     {\cBinX {\mathsf{e}} {\cdot} {\cVar {x} {M}}} 
     {=} 
     {\cBinX {\cVar {x} {M}} {\cdot} {\mathsf{e}}} 
     {=} 
     {\cVar {x} {M}}}\tag{4}\\ 
\Gamma \cup \mSet{\textbf{A}_\cB} 
  &\vDash  
  {\cEqX {\cVar {x} {M}} {\mathsf{e}}}\tag{5}\\
\Gamma
  &\vDash
  {\cImpliesX {\textbf{A}_\cB} {\cEqX {\cVar {x} {M}} {\mathsf{e}}}}\tag{6}\\
\Gamma
  &\vDash
  {\cForallX 
     {x}
     {M} 
     {\cImpliesX {\textbf{A}_\cB} {\cEqX {x} {\mathsf{e}}}}}\tag{7}
\end{align*}
(1) follows from constants always being defined by~\cite[Axiom
  A5.2]{Farmer25}; 
(2)~follows from variables always being defined by~\cite[Axiom A5.1]{Farmer25}; 
(3)~follows (1) and $\textbf{A}_\cB$ by
Universal Instantiation~\cite[Theorem A.14]{Farmer25}; 
(4) follows (2) and Axiom 2 of $T$ by Universal Instantiation; 
(5) follows from (3) and (4) by the Equality Rules~\cite[Lemma A.13]{Farmer25}; 
(6)~follows from (5) by the Deduction Theorem~\cite[Lemma A.50]{Farmer25}; 
and (7) follows from~(6) by Universal Generalization~\cite[Theorem A.30]{Farmer25}
using the fact that $\cVar {x} {M}$ is not free in~$\Gamma$ since
$\Gamma$ is a set of sentences. Therefore, $(\star)$ holds.
\end{proof}

  \item $\mName{Def1}$: ${\cEqX {\mathsf{submonoid}_{\cFunTyX
        {\cSetTy {M}} {\cB}}} {}}$\\ \hspace*{2ex} ${\cFunAbsX {s}
    {\cSetTy {M}} {\cAndX {\cAndX {\cNotEqX {s} {\cEmpSetPC {M}}}
        {\cIsDefInQTy {\cRestrictX {\cdot} {\cProdTyX {s} {s}}}
          {\cFunTyX {\cProdTy {s} {s}} {s}}}} {\cInX {\mathsf{e}}
        {s}}}}$ \hfill (submonoid).

\begin{proof}[ that RHS is defined.] 
Let $\textbf{A}_{\cFunTyX {\cSetTy {M}} {\cB}}$ be the RHS of
\textsf{Def1}.~We must show that $\textsf{MON} \vDash \cIsDefX
       {\textbf{A}_{\cFunTyX {\cSetTy {M}} {\cB}}}$.  This follows
       immediately from function abstractions always being defined
       by~\cite[Axiom A5.11]{Farmer25}.
\end{proof}

  \item $\mName{Thm4}$: ${\cForallX {s} {\cSetTy {M}} {\cImpliesX
      {\cFunAppX {\mathsf{submonoid}} {s}} {\cMonoid {s}
        {\cdot\wrestricted_{\cProdTyX {s} {s}}}
        {\mathsf{e}}}}}$\\ \phantom{x} \hfill (submonoids are
    monoids).

\begin{proof}[ of the theorem.]
Let $\textbf{A}_\cB$ be
\[\mathsf{submonoid}(s)\]
and $T = (L,\Gamma)$ be \textsf{MON} extended by $\mName{Def1}$.  We
must show
\[(\star) {\sglsp} T \vDash \cForallX {s} {\cSetTy{M}} {\cImpliesX{\textbf{A}_\cB}
    {\cMonoid {s} {\cRestrictX{\cdot} {\cProdTy {s} {s}}} {\mathsf{e}}}}.\]
\begin{align*}
    \Gamma \cup \mSet{\textbf{A}_\cB}
        &\vDash
        \cIsDefX{s_{\cSetTy{M}}}\tag{1}\\
    \Gamma \cup \mSet{\textbf{A}_\cB}
        &\vDash
        s \not= \cEmpSetPC{M}\tag{2}\\
    \Gamma \cup \mSet{\textbf{A}_\cB}
        &\vDash
        \cIsDefInQTyX{\cRestrictX{\cdot} {\cProdTy {s} {s}}} {\cFunQTyX{\cProdQTy{s} {s}} {s}}\tag{3}\\
    \Gamma \cup \mSet{\textbf{A}_\cB}
        &\vDash
        \mathsf{e} \in s\tag{4}\\
    \Gamma \cup \mSet{\textbf{A}_\cB}
        &\vDash
        \cIsDefInQTyX{\mathsf{e}} {s}\tag{5}\\
    \Gamma \cup \mSet{\textbf{A}_\cB}
        &\vDash
        \cForallX {x,y,z} {s} {\cFunAppX{\cRestrictX{\cdot} {\cProdTy {s} {s}}} {\cOrdPair{x} {\cFunAppX{\cRestrictX{\cdot} {\cProdTy {s} {s}}} {\cOrdPair{y} {z}}}}}\\
            &\hspace*{2ex}=
            \cFunAppX{\cRestrictX{\cdot} {\cProdTy {s} {s}}} {\cOrdPair{\cFunAppX{\cRestrictX{\cdot} {\cProdTy {s} {s}}} {\cOrdPair{x} {y}}} {z}}\tag{6}\\
    \Gamma \cup \mSet{\textbf{A}_\cB}
        &\vDash
        \cForallX{x} {s} {\cFunAppX{\cRestrictX{\cdot} {\cProdTy {s} {s}}} {\cOrdPair{\mathsf{e}} {x}}
        = \cFunAppX{\cRestrictX{\cdot} {\cProdTy {s} {s}}} {\cOrdPair{x} {\mathsf{e}}}
        = x}\tag{7}\\
    \Gamma \cup \mSet{\textbf{A}_\cB}
        &\vDash
        \cMonoid{s} {\cRestrictX{\cdot} {\cProdTy {s} {s}}} {\mathsf{e}}\tag{8}\\
    \Gamma
        &\vDash
        \cImpliesX{\textbf{A}_\cB} {\cMonoid{s} {\cRestrictX{\cdot} {\cProdTy {s} {s}}} {\mathsf{e}}}\tag{9}\\
    \Gamma
        &\vDash
        \cForallX{s} {\cSetTy{M}} {\cImpliesX{\textbf{A}_\cB} {\cMonoid{s} {\cRestrictX{\cdot} {\cProdTy {s} {s}}} {\mathsf{e}}}}\tag{10}
\end{align*}
(1)~follows from variables always being defined by~\cite[Axiom A5.1]{Farmer25}; 
(2), (3), and (4) follow directly from $\mathsf{Def1}$; 
(5)~follows from \cite[Axiom A5.2]{Farmer25} and (4); 
(6)~and (7) follow from $\textsf{Thm1}$,  
$\cRestrictX{\cdot} {\cProdTy {s} {s}} \sqsubseteq 
\cdot_{\cFunTyX{\cProdTy {M} {M}} {M}}$, 
and the fact that $\cRestrictX{\cdot} {\cProdTy {s} {s}}$ is total on 
$\cProdTyX {s} {s}$ by $\mathsf{Thm2}$; 
(8)~follows from (1)--(3) and (5)--(7) by the definition of $\textsf{MONOID}$ in 
Table~\ref{tab:monoids-abbr}; 
(9)~follows from (8) by the Deduction Theorem \cite[Theorem A.50]{Farmer25}; 
and (10)~follows from (9) by Universal Generalization \cite[Theorem A.30]{Farmer25} 
using the fact that $\cVar{s} {\cSetTy{M}}$ is not free in $\Gamma$ 
since $\Gamma$ is a set of sentences. Therefore, $(\star)$ holds.
\end{proof}

  \item $\mName{Thm5}$: ${\cFunAppX {\mathsf{submonoid}} {\cFinSetL
      {\mathsf{e}}}}$ \hfill (minimum submonoid).

\begin{proof}[ of the theorem.]
Let $T = (L, \Gamma)$ be $\mathsf{MON}$ extended by $\mName{Def1}$. We
must show $(\star)$ $T \vDash {\cFunAppX {\mathsf{submonoid}}
  {\cFinSetL{\mathsf{e}}}}$.
\begin{align*}
    \Gamma &\vDash
      \cInX{\mathsf{e}}{\cFinSetL{\mathsf{e}}} \tag{1}\\
    \Gamma &\vDash
        \cFinSetL{\mathsf{e}} \not= \cEmpSetPC{M}\tag{2}\\
    \Gamma &\vDash
        \mathsf{e} \cdot \mathsf{e} = \mathsf{e}\tag{3}\\
    \Gamma &\vDash
        \cIsDefInQTyX{\cRestrictX{\cdot} {\cProdTyX{\cSetTy{\mathsf{e}}} {\cSetTy{\mathsf{e}}}}} {\cFunQTyX{\cProdQTy{\cFinSetL{\mathsf{e}}}{\cFinSetL{\mathsf{e}}}}{\cFinSetL{\mathsf{e}}}}\tag{4}\\
    \Gamma &\vDash
        {\cFunAppX {\mathsf{submonoid}} {\cFinSetL{\mathsf{e}}}}\tag{5}
\end{align*}
(1)~is trivial; 
(2)~follows from (1) because $\cFinSetL{\mathsf{e}}$ has at least one member; 
(3) follows from Axiom 2 of $T$ by Universal Instantiation 
\cite[Theorem A.14]{Farmer25}; 
(4) follows directly from (1), (3), and the fact that the only member of 
$\cFinSetL{\mathsf{e}}$ is $\mathsf{e}$; and
(5) follows from (1), (2), (4), and $\mathsf{Def1}$. 
Therefore, $(\star)$ holds.
\end{proof}

  \item $\mName{Thm6}$: ${\cFunAppX {\mathsf{submonoid}}
    {\cUnivSetPC {M}}}$ \hfill (maximum submonoid).

\begin{proof}[ of the theorem.]
Let $T = (L, \Gamma)$ be $\mathsf{MON}$ extended by $\mName{Def1}$.
We must show $(\star)$ $T \vDash {\cFunAppX {\mathsf{submonoid}}
  {\cUnivSetPC {M}}}$.
\begin{align*}
    \Gamma &\vDash
        \cMonoid{\cUnivSetPC{M}} {\cdot_{\cFunTyX{\cProdTy {M} {M}} {M}}} {\mathsf{e}}\tag{1}\\
    \Gamma &\vDash
        \cUnivSetPC{M} \not= \cEmpSetPC{M} \land \mathsf{e} \in \cUnivSetPC{M}\tag{2}\\
    \Gamma &\vDash
      \cIsDefInQTyX{\cRestrictX{\cdot} {\cProdTyX{\cUnivSetPC M} {\cUnivSetPC M}}} {\cFunQTyX{\cProdQTy{\cUnivSetPC M}{\cUnivSetPC M}}{\cUnivSetPC M}}\tag{3}\\
    \Gamma &\vDash 
       {\cFunAppX {\mathsf{submonoid}} {\cUnivSetPC {M}}}\tag{4}\\
\end{align*}
(1) is $\mathsf{Thm1}$; (2) follows immediately from (1); (3) follows
from (1) by part 12 of Lemma~\ref{lem:univ-sets}; and (4) follows from
(1), (2), (3), and $\mathsf{Def1}$.  Therefore, $(\star)$ holds.
\end{proof}

  \item $\mName{Def2}$: ${\cEqX {\cdot^{\rm op}_{\cFunTyX {\cProdTy
          {M} {M}} {M}}} {\cFunAbsX {p} {\cProdTyX {M} {M}} {\cBinX
        {\cFunApp {\mathsf{snd}} {p}} {\cdot} {\cFunApp {\mathsf{fst}}
          {p}}}}}$ \hfill (opposite of $\cdot$).

\begin{proof}[ that RHS is defined.] 
Similar to the proof that the RHS of $\mathsf{Def1}$ is defined.
\end{proof}

  \item $\mName{Thm7}$: ${\cForallX {x,y,z} {M} {\cEqX {\cBinX {x}
        {\cdot^{\rm op}} {\cBin {y} {\cdot^{\rm op}} {z}}} {\cBinX
        {\cBin {x} {\cdot^{\rm op}} {y}} {\cdot^{\rm op}} {z}}}}$\\
    \phantom{x} \hfill ($\cdot^{\rm op}$ is associative).

\begin{proof}[ of the theorem.]
Let $\textbf{A}_\cB$ be
\[\cEqX {\cBinX {x}
    {\cdot^{\rm op}} {\cBin {y} {\cdot^{\rm op}} {z}}} {\cBinX
    {\cBin {x} {\cdot^{\rm op}} {y}} {\cdot^{\rm op}} {z}}\]
and $T = (L, \Gamma)$ be $\mathsf{MON}$ extended by \textsf{Def2}.  We
must show \[(\star) {\sglsp} T \vDash \cForallX{x, y, z} {M}
{\textbf{A}_\cB}.\]
\begin{align*}
    \Gamma &\vDash
        \cIsDefX{\cVar{x} {M}} \land
        \cIsDefX{\cVar{y} {M}} \land
        \cIsDefX{\cVar{z} {M}}\tag{1}\\
    \Gamma &\vDash
        {\cEqX 
           {\cBinX {\cBin {z} {\cdot} {y}} {\cdot} {x}}
           {\cBinX {z} {\cdot} {\cBin {y} {\cdot} {x}}}}\tag{2}\\
    \Gamma &\vDash
        \textbf{A}_\cB\tag{3}\\
    \Gamma &\vDash
        {\cForallX {x,y,z} {M} {\textbf{A}_\cB}}\tag{4}
\end{align*}
(1)~follows from variables always being defined by \cite[Axiom A5.1]{Farmer25}; 
(2)~follows from (1) and Axiom 1 of $T$ by Universal Instantiation
\cite[Theorem A.14]{Farmer25} and the Equality Rules \cite[Theorem A.13]{Farmer25};
(3)~follows from Lemma~\ref{lem:cdot-opposite} and (2)
by repeated applications of Rule $\mbox{R2}'$ \cite[Lemma A.2]{Farmer25}
using $({\star}{\star})$ the fact that $\cVar{x}{M}$, $\cVar{y}{M}$, and 
$\cVar{z}{M}$ are not free in $\Gamma$ since $\Gamma$ is a set of sentences; and
(4)~follows from (3) by Universal Generalization \cite[Theorem A.30]{Farmer25} 
again using $({\star}{\star})$.  
Therefore $(\star)$ holds.
\end{proof}

  \item $\mName{Thm8}$: ${\cForallX {x} {M} {\cBinBX {\cBinX
        {\mathsf{e}} {\cdot^{\rm op}} {x}} {=} {\cBinX {x} {\cdot^{\rm
            op}} {\mathsf{e}}} {=} {x}}}$\\ 
   \phantom{x} \hfill ($\mathsf{e}$ is an identity element with
   respect to $\cdot^{\rm op}$).

\begin{proof}[ of the theorem.]
Let $\textbf{A}_\cB$ be
\[{\cBinBX 
     {\cBinX {\mathsf{e}} {\cdot^{\rm op}} {x}}
     {=}
     {\cBinX {x} {\cdot^{\rm op}} {\mathsf{e}}}
     {=} 
     {x}}\]
and $T = (L, \Gamma)$ be $\mathsf{MON}$ extended by \textsf{Def2}.  We
must show \[(\star) {\sglsp} T \vDash \cForallX{x} {M}
{\textbf{A}_\cB}.\]
\begin{align*}
    \Gamma &\vDash
        {\cIsDefX {\cVar{x} {M}}}\tag{1}\\
    \Gamma &\vDash
        {\cBinBX 
           {\cBinX {x} {\cdot} {\mathsf{e}}}
           {=}
           {\cBinX {\mathsf{e}} {\cdot} {x}}
           {=} 
           {x}}\tag{2}\\
    \Gamma &\vDash
        \textbf{A}_\cB\tag{3}\\
    \Gamma &\vDash
        {\cForallX {x} {M} {\textbf{A}_\cB}}\tag{4}
\end{align*}
(1)~follows from variables always being defined by \cite[Axiom A5.1]{Farmer25}; 
(2)~follows from (1) and Axiom 2 of $T$ by Universal Instantiation
\cite[Theorem A.14]{Farmer25} and the Equality Rules \cite[Theorem A.13]{Farmer25};
(3)~follows from Lemma~\ref{lem:cdot-opposite} and (2)
by repeated applications of Rule $\mbox{R2}'$ \cite[Lemma A.2]{Farmer25}
using $({\star}{\star})$ the fact that $\cVar{x}{M}$ is not free in $\Gamma$ 
since $\Gamma$ is a set of sentences; and
(4)~follows from (3) by Universal Generalization \cite[Theorem A.30]{Farmer25} 
again using $({\star}{\star})$.  
Therefore $(\star)$ holds.
\end{proof}

  \item $\mName{Def3}$: ${\cEqX {\odot_{\cFunTyX {\cProdTy {\cSetTy
            {M}} {\cSetTy {M}}} {\cSetTy {M}}}} {\cFunAppX
      {\cSetProdPC {\textsf{set-op}} {M} {M} {M}} {\cdot}}}$\\ 
  \phantom{x} \hfill (set product).

\begin{proof}[ that RHS is defined.]
Let $\textbf{A}_{\cFunTyX {\cProdTy {\cSetTy {M}} {\cSetTy {M}}}
  {\cSetTy {M}}}$ be the RHS of \textsf{Def3}.~We must show $(\star)$
$\textsf{MON} \vDash \cIsDefX {\textbf{A}_{\cFunTyX {\cProdTy {\cSetTy
        {M}} {\cSetTy {M}}} {\cSetTy {M}}}}$.  Since constants are
always defined by~\cite[Axiom A5.2]{Farmer25}, $\textbf{A}_{\cFunTyX
  {\cProdTy {\cSetTy {M}} {\cSetTy {M}}} {\cSetTy {M}}}$ beta-reduces
to a function abstraction by \cite[Axiom A4]{Farmer25}.  Since every
function abstraction is defined by~\cite[Axiom A5.11]{Farmer25}, we
have $(\star)$ by Quasi-Equality Substitution \cite[Lemma
  A.2]{Farmer25}.
\end{proof}

  \item $\mName{Def4}$: ${\cEqX {\mathsf{E}_{\cSetTy {M}}}
    {\cFinSetL {\mathsf{e}_M}}}$ \hfill (set identity element).

\begin{proof}[ that RHS is defined.]
We must show $(\star)$ $\textsf{MON} \vDash
\cIsDefX{\cFinSetL{\mathsf{e}_M}}$. Now $\cFinSetL{\mathsf{e}_M}$
stands for
\[
\cFunAppX{(\cFunAbsX{x_1} {M} {\cFunAbsX{x} {M} {x = x_1}})} {(\mathsf{e}_M)}.
\]
Since constants are always defined by~\cite[Axiom A5.2]{Farmer25},
$\cFinSetL{\mathsf{e}_M}$ beta-reduces to
\[
\cFunAbsX{x} {M} {x = \textsf{e}_M}
\]
by \cite[Axiom A4]{Farmer25}.  Since every function abstraction is
defined by~\cite[Axiom A5.11]{Farmer25}, we have $(\star)$ by
Quasi-Equality Substitution \cite[Lemma A.2]{Farmer25}.
\end{proof}

  \item $\mName{Thm9}$: ${\cForallX {x,y,z} {\cSetTy {M}} {\cEqX
      {\cBinX {x} {\odot} {\cBin {y} {\odot} {z}}} {\cBinX {\cBin {x}
          {\odot} {y}} {\odot} {z}}}}$
  \hfill ($\odot$ is  associative).

\begin{proof}[ of the theorem.]
Let $T = (L, \Gamma)$ be $\mathsf{MON}$ extended by $\mathsf{Def3}$.
We must show
\[(\star) {\sglsp} T \vDash \cForallX{x,y,z} {\cSetTy {M}} {\cEqX
      {\cBinX {x} {\odot} {\cBin {y} {\odot} {z}}} {\cBinX {\cBin {x}
          {\odot} {y}} {\odot} {z}}}.\]
\begin{align*}
\Gamma &\vDash
    {\cIsDefX {\cVar{x} {\cSetTy M}}} \land 
    {\cIsDefX {\cVar{y} {\cSetTy M}}} \land 
    {\cIsDefX {\cVar{z} {\cSetTy M}}}\tag{1}\\
\Gamma &\vDash
   {\cEqX
      {\cBinX {x} {\odot} {\cBin {y} {\odot} {z}}} {}}\\
      &\phantom{{}\vDash{}}
      \hspace*{2ex}
      {\cSet 
         {d} 
         {M} 
         {\cForsomeCX
            {a}
            {x}
            {b}
            {y}
            {c}
            {z}
            {\cEqX 
               {d} 
               {\cBinX {a} {\cdot} {\cBin {b} {\cdot} {c}}}}}}\tag{2}\\
\Gamma &\vDash
   {\cEqX
      {\cBinX {\cBin {x} {\odot} {y}} {\odot} {z}} {}}\\
      &\phantom{{}\vDash{}}
      \hspace*{2ex}
      {\cSet 
         {d} 
         {M} 
         {\cForsomeCX
            {a}
            {x}
            {b}
            {y}
            {c}
            {z}
            {\cEqX 
               {d} 
               {\cBinX {\cBin {a} {\cdot} {b}} {\cdot} {c}}}}}\tag{3}\\
\Gamma &\vDash
   {\cEqX
     {\cBinX {x} {\odot} {\cBin {y} {\odot} {z}}}
     {\cBinX {\cBin {x} {\odot} {y}} {\odot} {z}}}\tag{4}\\
\Gamma &\vDash
   {\cForallX 
      {x,y,z} 
      {\cSetTy M} 
      {\cEqX
         {\cBinX {x} {\odot} {\cBin {y} {\odot} {z}}} 
         {\cBinX {\cBin {x} {\odot} {y}} {\odot} {z}}}}\tag{5}
\end{align*}
(1)~follows from variables always being defined by~\cite[Axiom A5.1]{Farmer25};
(2) and (3)~follow from (1) and $\mathsf{Def3}$;
(4)~follows from (2) and (3) by Axiom 1 of~$T$; and
(5)~follows from (4) by Universal Generalization \cite[Theorem A.30]{Farmer25}
using the fact that $x$, $y$, and $z$ are not free in $\Gamma$ since 
$\Gamma$ is a set of sentences. Therefore, $(\star)$ holds.
\end{proof}

  \item $\mName{Thm10}$: ${\cForallX {x} {\cSetTy {M}} {\cBinBX
      {\cBinX {\mathsf{E}} {\odot} {x}} {=} {\cBinX {x}
        {\odot} {\mathsf{E}}} {=} {x}}}$\\ \phantom{x}
    \hfill ($\mathsf{E}$ is an identity element with
    respect to $\odot$).

\begin{proof}[ of the theorem.]
Let $T = (L, \Gamma)$ be \textsf{MON} extended by $\mathsf{Def3}$ and
$\mathsf{Def4}$. We must show
\[(\star) {\sglsp} T \vDash {\cForallX {x} {\cSetTy {M}} {\cBinBX
      {\cBinX {\mathsf{E}} {\odot} {x}} {=} {\cBinX {x}
        {\odot} {\mathsf{E}}} {=} {x}}}.\]
\begin{align*}
\Gamma &\vDash
    {\cIsDefX {\cVar{x} {\cSetTy M}}}\tag{1}\\
\Gamma &\vDash
    {\cIsDefX {\mathsf{E}}}\tag{2}\\
\Gamma &\vDash
   {\cEqX
      {\cBinX {\mathsf{E}} {\odot} {x}}
      {\cSet 
         {b} 
         {M} 
         {\cForsomeX
            {a}
            {x}
            {\cEqX 
               {b} 
               {\cBinX {\mathsf{e}} {\cdot} {a}}}}}}\tag{3}\\
\Gamma &\vDash
   {\cEqX
      {\cBinX {x} {\odot} {\mathsf{E}}}
      {\cSet 
         {b} 
         {M} 
         {\cForsomeX
            {a}
            {x}
            {\cEqX 
               {b} 
               {\cBinX {a} {\cdot} {\mathsf{e}}}}}}}\tag{4}\\
\Gamma &\vDash
   {\cBinBX
      {\cBinX {\mathsf{E}} {\odot} {x}} 
      {=} 
      {\cBinX {x} {\odot} {\mathsf{E}}} 
      {=} 
      {x}}\tag{5}\\
\Gamma &\vDash
   {\cForallX
      {x}
      {\cSetTy {M}}
      {\cBinBX
         {\cBinX {\mathsf{E}} {\odot} {x}} 
         {=} 
         {\cBinX {x} {\odot} {\mathsf{E}}} 
         {=} 
         {x}}}\tag{6}
\end{align*}
(1)~follows from variables always being defined by~\cite[Axiom A5.1]{Farmer25};
(2)~follows from constants always defined by~\cite[Axiom A5.2]{Farmer25};
(3) and (4)~follow from (1), (2), $\mathsf{Def3}$, and $\mathsf{Def4}$;
(5)~follows from (3) and (4) by Axiom 2 of~$T$; and
(6)~follows from (5) by Universal Generalization \cite[Theorem A.30]{Farmer25}
using the fact that $x$ is not free in $\Gamma$ since 
$\Gamma$ is a set of sentences. Therefore, $(\star)$ holds.
\end{proof}

  \item \textsf{Thm11 (Thm1-via-MON-to-opposite-monoid)}:\\
  \hspace*{2ex} ${\cMonoid {\cUnivSetPC {M}} {\cdot^{\rm op}_{\cFunTyX
        {\cProdTy {M} {M}} {M}}} {\mathsf{e}_M}}$
  \hfill (opposite monoids are monoids).
  
\begin{proof}[ of the theorem.] 
Let $T$ be the top theory of \textsf{MON-1}.  We must show $T \vDash
\textsf{Thm11}$.  We have previously proved $(\star)$ $\textsf{MON}
\vDash \textsf{Thm1}$.  $\Phi = \textsf{MON-to-opposite-monoid}$ is a
development morphism from \textsf{MON} to \textsf{MON-1}, and so
$\widetilde{\Phi} = (\mu,\nu)$ is a theory morphism from \textsf{MON}
to $T$.  Thus $(\star)$ implies $T \vDash \nu(\textsf{Thm1})$.
Therefore, $T \vDash \textsf{Thm11}$ since $\textsf{Thm11} =
\nu(\textsf{Thm1})$.
\end{proof}

  \item \textsf{Thm12 (Thm1-via-MON-to-set-monoid)}:\\
  \hspace*{2ex}
  ${\cMonoid
     {\cUnivSetPC {\cSetTy {M}}}
     {\odot_{\cFunTyX {\cProdTy {\cSetTy {M}} {\cSetTy {M}}} {\cSetTy {M}}}} 
     {\mathsf{E}_{\cSetTy {M}}}}$\\
  \phantom{x} \hfill (set monoids are monoids).

\begin{proof}[ of the theorem.] 
Similar to the proof of \textsf{Thm11}.
\end{proof}

\ee

\subsection{Development of \textnormal{\textsf{COM-MON}}}

\be

  \item $\mName{Thm13}$: ${\cComMonoid {\cUnivSetPC {M}}
    {\cdot_{\cFunTyX {\cProdTy {M} {M}} {M}}} {\mathsf{e}_M}}$\\
  \phantom{x} \hfill (models of \textsf{COM-MON} define commutative monoids).

\begin{proof}[ of the theorem.]
Let $T = (L, \Gamma)$ be $\textsf{COM-MON}$. We must show
\[(\star) {\sglsp} T \vDash \cComMonoid{\cUnivSetPC {M}} {\cdot_{\cFunTyX {\cProdTy {M} {M}} {M}}} {\mathsf{e}_M}.\]
\begin{align*}
\Gamma &\vDash
    {\cMonoid 
       {\cUnivSetPC {M}} 
       {\cdot_{\cFunTyX {\cProdTy {M} {M}} {M}}} 
       {\mathsf{e}_M}}\tag{1}\\
\Gamma &\vDash
    {\cForallX {x,y} {\cUnivSetPC M} {x \cdot y = y \cdot x}}\tag{2}
\end{align*}
(1)~follows from $\textsf{MON} \le T$ and the fact that 
$\mName{Thm1}$ is a theorem of $\textsf{MON}$; and 
(2)~follows from Axiom 3 of $T$ and part 5 of Lemma~\ref{lem:univ-sets}.
Therefore, $(\star)$~follows from (1), (2), and the notational definition of 
$\textsf{COM-MONOID}$ given in Table~\ref{tab:monoids-abbr}. 
\end{proof}

  \item $\mName{Def5}$: ${\cEqX {\mathsf{\le}_{\cFunTyBX {M} {M}
        {\cB}}} {\cFunAbsX {x,y} {M} {\cForsomeX {z} {M} {\cEqX
          {\cBinX {x} {\cdot} {z}} {y}}}}}$ \hfill (weak order).

\begin{proof}[ that RHS is defined.]
Similar to the proof that the RHS of $\textsf{Def1}$ is defined.
\end{proof}

  \item $\mName{Thm14}$: ${\cForallX {x} {M} {\cBinX {x} {\le} {x}}}$\hfill
    (reflexivity).

\begin{proof}[ of the theorem.]
Let $T = (L, \Gamma)$ be $\textsf{COM-MON}$ extended by \textsf{Def5}.
We must show
\[(\star) {\sglsp} T \vDash {\cForallX {x} {M} {\cBinX {x} {\le} {x}}}.\]
\begin{align*}
\Gamma &\vDash
    {\cIsDefX {\cVar {x} {M}}}\tag{1}\\
\Gamma &\vDash
    {\cQuasiEqX 
       {\cBin {x} {\le} {x}}
       {\cForsome {z} {M} {\cEqX {\cBinX {x} {\cdot} {z}} {x}}}}\tag{2}\\
\Gamma &\vDash
    {\cEqX {\cBinX {x} {\cdot} {\mathsf{e}}} {x}}\tag{3}\\
\Gamma &\vDash
    {\cForsomeX {z} {M} {\cEqX {\cBinX {x} {\cdot} {z}} {x}}}\tag{4}\\
\Gamma &\vDash
    {\cBinX {x} {\le} {x}}\tag{5}\\
\Gamma &\vDash
    {\cForallX {x} {M} {\cBinX {x} {\le} {x}}}\tag{6}
\end{align*}
(1)~follows from variables always being defined by~\cite[Axiom A5.1]{Farmer25};
(2)~follows from \textsf{Def5} and Extensionality~\cite[Axiom A3]{Farmer25} using
the Substitution Rule~\cite[Theorem A.31]{Farmer25} and 
Beta-Reduction~\cite[Axiom A4]{Farmer25};
(3)~follows from (1) and Axiom 2 of $T$ by Universal 
Instantiation~\cite[Theorem A.14]{Farmer25}; 
(4)~follows from (3) by Existential Generalization \cite[Theorem A.51]{Farmer25}; 
(5)~follows from (2) and (4) by Rule $\mbox{R2}'$ \cite[Lemma A.2]{Farmer25}; and
(6)~follows from (5) by Universal Generalization \cite[Theorem A.30]{Farmer25} 
using the fact that $x$ is not free in $\Gamma$ since $\Gamma$ is a set of sentences.
Therefore, $(\star)$ holds.
\end{proof}

  \item $\mName{Thm15}$: ${\cForallX {x,y,z} {M} {\cImpliesX {\cAnd
        {\cBinX {x} {\le} {y}} {\cBinX {y} {\le} {z}}} {\cBinX {x}
        {\le} {z}}}}$ \hfill (transitivity).

\begin{proof}[ of the theorem.]
Let $\textbf{A}_\cB$ be ${\cAnd {\cBinX {x} {\le} {y}} {\cBinX {y}
    {\le} {z}}}$, $\textbf{B}_\cB$ be ${\cEqX {\cBinX {x} {\cdot} {u}}
  {y}}$, and $\textbf{C}_\cB$ be ${\cEqX {\cBinX {y} {\cdot} {v}}
  {z}}$ (where these variables all have type $M$).  Also let $T = (L,
\Gamma)$ be $\textsf{COM-MON}$ extended by \textsf{Def5}.  We must
show
\[(\star) {\sglsp} T \vDash 
{\cForallX {x,y,z} {M} {\cImpliesX {\textbf{A}_\cB} {\cBinX {x} {\le} {z}}}}.\]
\begin{align*}
\Gamma \cup \mSet{\textbf{B}_\cB, \textbf{C}_\cB} &\vDash
   {\cIsDefX {\cVar {x} {M}}} \land
   {\cIsDefX {\cVar {y} {M}}} \land
   {\cIsDefX {\cVar {z} {M}}} \land
   {\cIsDefX {\cVar {u} {M}}} \land {}\\
&\phantom{{}\vDash{}}
   \hspace*{2ex}
   {\cIsDefX {\cVar {v} {M}}}\tag{1}\\
\Gamma \cup \mSet{\textbf{B}_\cB, \textbf{C}_\cB} &\vDash
   {\cEqX
      {\cBinX
         {\cBin {x} {\cdot} {u}} 
         {\cdot}
         {v}}
      {z}}\tag{2}\\
\Gamma \cup \mSet{\textbf{B}_\cB, \textbf{C}_\cB} &\vDash
   {\cEqX
      {\cBinX
         {\cBin {x} {\cdot} {u}} 
         {\cdot}
         {v}}
      {\cBinX
         {x}
         {\cdot}
         {\cBin {u} {\cdot} {v}}}}\tag{3}\\
\Gamma \cup \mSet{\textbf{B}_\cB, \textbf{C}_\cB} &\vDash
   {\cEqX
      {\cBinX
         {x}
         {\cdot}
         {\cBin {u} {\cdot} {v}}}
      {z}}\tag{4}\\
\Gamma \cup \mSet{\textbf{B}_\cB, \textbf{C}_\cB} &\vDash
   {\cForsomeX 
      {w} 
      {M} 
      {\cEqX
         {\cBinX {x} {\cdot} {w}}
         {z}}}\tag{5}\\
\Gamma \cup \mSet{\textbf{B}_\cB} &\vDash
   {\cImpliesX
      {\cEq {\cBinX {y} {\cdot} {v}} {z}}
      {\cForsome {w} {M} {\cEqX {\cBinX {x} {\cdot} {w}} {z}}}}\tag{6}\\
\Gamma \cup \mSet{\textbf{B}_\cB} &\vDash
   {\cImpliesX
      {\cForsome {v} {M} {\cEqX {\cBinX {y} {\cdot} {v}} {z}}}
      {\cForsome {w} {M} {\cEqX {\cBinX {x} {\cdot} {w}} {z}}}}\tag{7}\\
\Gamma &\vDash
   {\cImpliesX
      {\cEq {\cBinX {x} {\cdot} {u}} {y}} {}}\\
       &\phantom{{}\vDash{}}
   \hspace*{2ex}
   {\cImplies
      {\cForsome {v} {M} {\cEqX {\cBinX {y} {\cdot} {v}} {z}}}
      {\cForsome {w} {M} {\cEqX {\cBinX {x} {\cdot} {w}} {z}}}}\tag{8}\\
\Gamma &\vDash
   {\cImpliesX
      {\cForsome {u} {M} {\cEqX {\cBinX {x} {\cdot} {u}} {y}}} {}}\\
       &\phantom{{}\vDash{}}
   \hspace*{2ex}
   {\cImplies
      {\cForsome {v} {M} {\cEqX {\cBinX {y} {\cdot} {v}} {z}}}
      {\cForsome {w} {M} {\cEqX {\cBinX {x} {\cdot} {w}} {z}}}}\tag{9}\\
\Gamma &\vDash
   {\cImpliesX
      {\cBinX {x} {\le} {y}}
      {\cImplies
         {\cBinX {y} {\le} {z}}
         {\cBinX {x} {\le} {z}}}}\tag{10}\\
\Gamma &\vDash
   {\cImpliesX
      {\textbf{A}_\cB}
      {\cBinX {x} {\le} {z}}}\tag{11}\\
\Gamma &\vDash
    {\cForallX {x,y,z} {M} {\cImpliesX {\textbf{A}_\cB} {\cBinX {x}
        {\le} {z}}}}\tag{12}
\end{align*}
(1)~follows from variables always being defined by~\cite[Axiom A5.1]{Farmer25};
(2)~follows from $\textbf{B}_\cB$ and $\textbf{C}_\cB$ by the 
Equality Rules \cite[Lemma A.13]{Farmer25};
(3)~follows from Axiom 1 of $T$ by Universal 
Instantiation~\cite[Theorem A.14]{Farmer25};
(4)~follows from (2) and (3) by the Equality Rules \cite[Lemma A.13]{Farmer25};
(5)~follows from (1), (4), and $\mName{Thm2}$ by Existential Generalization 
\cite[Theorem A.51]{Farmer25};
(6) and (8)~follow from (5) and (7), respectively, by the Deduction Theorem 
\cite[Theorem A.50]{Farmer25};
(7)~and (9)~follow from (6) and (8), respectively, by 
Existential Instantiation~\cite[Theorem A.52]{Farmer25};
(10)~follows from (1), (9), and $\mName{Def5}$ by Beta-Reduction 
\cite[Axiom A4]{Farmer25} and 
Alpha-Conversion \cite[Theorem A.18]{Farmer25};
(11)~follows from (10) by the Tautology Rule \cite[Corollary A.46]{Farmer25}; and
(12)~follows from (11) by Universal Generalization \cite[Theorem A.30]{Farmer25} 
using the fact that $x$, $y$, and $z$ are not free in $\Gamma$ 
since $\Gamma$ is a set of sentences.
Therefore, $(\star)$ holds.
\end{proof}

\ee

\subsection{Development of \textnormal{\textsf{FUN-COMP}}} 

\be

  \item $\mName{Thm16}$: 
  ${\cForallCX 
     {f} {\cFunTyX {\cBaseTy A} {\cBaseTy B}} 
     {g} {\cFunTyX {\cBaseTy B} {\cBaseTy C}} 
     {h} {\cFunTyX {\cBaseTy C} {\cBaseTy D}} 
     {\cEqX 
        {\cFunCompX {f} {\cFunComp {g} {h}}}
        {\cFunCompX {\cFunComp {f} {g}} {h}}}}$\\
  \phantom{x} \hfill ($\circ$ is associative).

\begin{proof}[ of the theorem.]
Let $\textbf{A}_\cB$ be the theorem and $T = (L, \Gamma)$ be
$\textsf{FUN-COMP}$. We must show $(\star) {\sglsp} T \vDash
\textbf{A}_\cB$.
\begin{align*}
\Gamma &\vDash
   {\cIsDefX {\cVar{f} {\cFunTyX{A} {B}}}} \land
   {\cIsDefX {\cVar{g} {\cFunTyX{B} {C}}}} \land
   {\cIsDefX {\cVar{h} {\cFunTyX{C} {D}}}} \land
   {\cIsDefX {\cVar{x} {A}}}\tag{1}\\
\Gamma &\vDash
    {\cQuasiEqX
       {\cFunAppX {((f \circ g) \circ h)} {x}} 
       {\cFunAppX {h} {\cFunApp {g} {\cFunApp {f} {x}}}}}\tag{2}\\
\Gamma &\vDash
    {\cQuasiEqX
       {\cFunAppX {(f \circ (g \circ h))} {x}} 
       {\cFunAppX {h} {\cFunApp {g} {\cFunApp {f} {x}}}}}\tag{3}\\
\Gamma &\vDash
    {\cQuasiEqX
       {\cFunAppX {((f \circ g) \circ h)} {x}}
       {\cFunAppX {(f \circ (g \circ h))} {x}}}\tag{4}\\ 
\Gamma &\vDash
    {\cForallX {x}{A}
       {\cFunAppX {(f \circ (g \circ h))} {x}} \simeq
       {\cFunAppX {((f \circ g) \circ h)} {x}}}\tag{5}\\
\Gamma &\vDash
    f \circ (g \circ h) = (f \circ g) \circ h\tag{6}\\
\Gamma &\vDash \textbf{A}_\cB\tag{7}
\end{align*}
(1)~follows from variables always being defined by~\cite[Axiom A5.1]{Farmer25}; 
(2) and (3) both follow from (1), the definition of $\circ$ in 
Table~\ref{tab:monoids-pc}, function abstractions are always defined by 
\cite[Axiom A5.11]{Farmer25}, ordered pairs of defined components are always 
defined by \cite[Axiom A7.1]{Farmer25}, Beta-Reduction 
\cite[Axiom A4]{Farmer25}, and 
Quasi-Equality Substitution \cite[Lemma A.2]{Farmer25};
(4) follows from (2) and (3) by the Quasi-Equality Rules \cite[Lemma A.4]{Farmer25};
(5)~follows from (4) by Universal Generalization \cite[Theorem A.30]{Farmer25} 
using the fact that $x$ is not free in $\Gamma$ since $\Gamma$ is a set of sentences;
(6)~follows from (5) by Extensionality \cite[Axiom A3]{Farmer25}; and 
(7) follows from (6) by Universal Generalization using the fact that 
$f$, $g$, and $h$ are not free in $\Gamma$ since $\Gamma$ is a set of sentences. 
Therefore, $(\star)$ holds.
\end{proof}

  \item $\mName{Thm17}$:
  ${\cForallX 
     {f} {\cFunTyX {\cBaseTy A} {\cBaseTy B}} 
     {\cBinBX 
        {\cFunCompX {\cIdFunPC {A}} {f}}
        {=}
        {\cFunCompX {f} {\cIdFunPC {B}}}
        {=}
        {f}}}$\\
  \phantom{x} \hfill (identity functions are left and right identity elements).

\begin{proof}[ of the theorem.]
Let $\textbf{A}_\cB$ be the theorem and $T = (L, \Gamma)$ be
$\textsf{FUN-COMP}$. We must show $(\star) {\sglsp} T \vDash
\textbf{A}_\cB$.
\begin{align*}
\Gamma &\vDash
   {\cAndX
      {\cIsDefX {\cVar {f} {\cFunTyX{A} {B}}}}
      {\cIsDefX {\cVar {x} {A}}}}\tag{1}\\
\Gamma &\vDash
   {\cQuasiEqX
      {\cFunAppX {\cBin {\cIdFunPC {A}} {\circ} {f}} {x}}
      {\cFunAppX {f} {x}}}\tag{2}\\
\Gamma &\vDash
   {\cQuasiEqX
      {\cFunAppX {\cBin {f} {\circ} {\cIdFunPC {B}}} {x}}
      {\cFunAppX {f} {x}}}\tag{3}\\
\Gamma &\vDash
   {\cForallX
      {x}
      {A}
      {\cQuasiEqX
         {\cFunAppX {\cBin {\cIdFunPC {A}} {\circ} {f}} {x}}
         {\cFunAppX {f} {x}}}}\tag{4}\\
\Gamma &\vDash
   {\cForallX
      {x}
      {A}
      {\cQuasiEqX
         {\cFunAppX {\cBin {f} {\circ} {\cIdFunPC {B}}} {x}}
         {\cFunAppX {f} {x}}}}\tag{5}\\
\Gamma &\vDash
   {\cEqX {\cBinX {\cIdFunPC {A}} {\circ} {f}} {f}}\tag{6}\\ 
\Gamma &\vDash
   {\cEqX {\cBinX {f} {\circ} {\cIdFunPC {B}}} {f}}\tag{7}\\       
\Gamma &\vDash
   {\cBinBX 
      {\cFunCompX {\cIdFunPC {A}} {f}}
      {=}
      {\cFunCompX {f} {\cIdFunPC {B}}}
      {=}
      {f}}\tag{8}\\
\Gamma &\vDash \textbf{A}_\cB\tag{9}
\end{align*}
(1)~follows from variables always being defined by~\cite[Axiom A5.1]{Farmer25}; 
(2) and (3) both follow from (1), the definitions of $\mathsf{id}$ and $\circ$ in 
Table~\ref{tab:monoids-pc}, function abstractions are always defined by 
\cite[Axiom A5.11]{Farmer25}, 
ordered pairs of defined components are always defined by \cite[Axiom A7.1]{Farmer25}, 
Beta-Reduction \cite[Axiom A4]{Farmer25}, and 
Quasi-Equality Substitution \cite[Lemma A.2]{Farmer25};
(4) and (5) both follow from (2) and (3), respectively, by Universal 
Generalization \cite[Theorem A.30]{Farmer25} using the fact that $x$ is not free 
in $\Gamma$ since $\Gamma$ is a set of sentences;
(6) and (7) follow from (4)~and (5), respectively, by Extensionality 
\cite[Axiom A3]{Farmer25}; 
(8)~follows from (6) and (7) by the Equality Rules~\cite[Lemma A.13]{Farmer25};
and (9) follows from (8) by Universal Generalization using the fact that 
$f$ is not free in $\Gamma$ since $\Gamma$ is a set of sentences. 
Therefore, $(\star)$ holds.
\end{proof}

\ee

\subsection{Development of \textnormal{\textsf{ONE-BT}}} 

\be

  \item \textsf{Thm18 (Thm16-via-FUN-COMP-to-ONE-BT)}:\\
  \hspace*{2ex}
  ${\cForallX 
     {f,g,h} 
     {\cFunTyX {\cBaseTy S} {\cBaseTy S}} 
     {\cEqX 
        {\cFunCompX {f} {\cFunComp {g} {h}}}
        {\cFunCompX {\cFunComp {f} {g}} {h}}}}$
  \hfill ($\circ$ is associative).

\begin{proof}[ of the theorem.] 
Similar to the proof of \textsf{Thm11}.
\end{proof}

  \item \textsf{Thm19 (Thm17-via-FUN-COMP-to-ONE-BT)}:\\
  \hspace*{2ex}
  ${\cForallX 
     {f} {\cFunTyX {\cBaseTy S} {\cBaseTy S}} 
     {\cBinBX 
        {\cFunCompX {\cIdFunPC {S}} {f}}
        {=}
        {\cFunCompX {f} {\cIdFunPC {S}}}
        {=}
        {f}}}$\\
  \phantom{x} \hfill (${\cIdFunPC {S}}$ is an identity element with respect to $\circ$).

\begin{proof}[ of the theorem.] 
Similar to the proof of \textsf{Thm11}.
\end{proof}

  \item \textsf{Def6 (Def1-via-MON-to-ONE-BT)}:\\
  \hspace*{2ex} 
  ${\cEqX 
      {\textsf{trans-monoid}_{\cFunTyX {\cSetTy {\cFunTyX {S} {S}}} {\cB}}} {}}$\\
      \hspace*{2ex}
     ${\cFunAbsX 
         {s} 
         {\cSetTy {\cFunTyX {S} {S}}} {}}$\\
         \hspace*{4ex}
        ${\cNotEqX {s} {\cEmpSetPC {\cFunTyX {S} {S}}}} \And {}$\\
         \hspace*{4ex}
        ${\cTotalOn 
            {{\cFunCompPairPC {S} {S} {S}}\wrestricted_{\cProdTyX {s} {s}}} 
            {\cProdTyX {s} {s}} 
            {s}} \And {}$\\
         \hspace*{4ex}
        ${\cInX {\cIdFunPC {S}} {s}}$ 
  \hfill (transformation monoid).

\begin{proof}[ that RHS is defined.] 
Let $\textbf{A}^{1}_{\cFunTyX {\cSetTy {M}} {\cB}}$ be the RHS of
\textsf{Def1}, $\textbf{A}^{2}_{\cFunTyX {\cSetTy {\cFunTyX {S} {S}}}
  {\cB}}$ be the RHS of \textsf{Def6}, $T_1$ be \textsf{MON}, and
$T_2$ be \textsf{ONE-BT}, the top theory of \textsf{ONE-BT-1}.  We
must show $T_2 \vDash \cIsDefX {\textbf{A}^{2}_{\cFunTyX {\cSetTy
      {\cFunTyX {S} {S}}} {\cB}}}$.  We have previously proved
$(\star)$ $T_1 \vDash \cIsDefX {\textbf{A}^{1}_{\cFunTyX {\cSetTy {M}}
    {\cB}}}$.  $\textsf{MON-to-ONE-BT} = (\mu,\nu)$ is a theory
morphism from $T_1$ to $T_2$.  Thus $(\star)$ implies $T_2 \vDash
\nu(\cIsDefX{\textbf{A}^{1}_{\cFunTyX {\cSetTy {M}} {\cB}}})$.
Therefore, $T_2 \vDash \cIsDefX {\textbf{A}^{2}_{\cFunTyX {\cSetTy
      {\cFunTyX {S} {S}}} {\cB}}}$ since ${\textbf{A}^{2}_{\cFunTyX
    {\cSetTy {\cFunTyX {S} {S}}} {\cB}}} =
\nu(\textbf{A}^{1}_{\cFunTyX {\cSetTy {M}} {\cB}})$.
\end{proof}

  \item \textsf{Thm20 (Thm4-via-MON-to-ONE-BT)}:\\
  \hspace*{2ex}
  ${\cForallX 
      {s} 
      {\cSetTy {\cFunTyX {S} {S}}} {}}$\\
      \hspace*{4ex}
     ${\cImpliesX
         {\cFunAppX {\textsf{trans-monoid}} {s}} 
         {\cMonoid 
            {s}
            {{\cFunCompPairPC {S} {S} {S}}\wrestricted_{\cProdTyX {s} {s}}}
            {\cIdFunPC {S}}}}$\\ 
  \phantom{x} \hfill (transformation monoids are monoids).

\begin{proof}[ of the theorem.] 
Similar to the proof of \textsf{Thm11}.
\end{proof}

\ee

\subsection{Development of \textnormal{\textsf{MON-ACT}}} 

\be

  \item $\mName{Thm21}$: ${\cMonAction {\cUnivSetPC {M}}
    {\cUnivSetPC {S}} {\cdot_{\cFunTyX {\cProdTy {M} {M}} {M}}}
    {\mathsf{e}_M} {\mName{act}_{\cFunTyX {\cProdTy {M} {S}}
        {S}}}}$\\ 
  \phantom{x} \hfill (models of \textsf{MON-ACT} define monoid actions).

\begin{proof}[ of the theorem.] 
Let $T = (L, \Gamma)$ be $\textsf{MON-ACT}$. We must show
\[(\star) {\sglsp} T \vDash {\cMonAction {\cUnivSetPC {M}}
    {\cUnivSetPC {S}} {\cdot_{\cFunTyX {\cProdTy {M} {M}} {M}}}
    {\mathsf{e}_M} {\mName{act}_{\cFunTyX {\cProdTy {M} {S}}
        {S}}}}.\]
\begin{align*}
\Gamma &\vDash
    {\cMonoid
      {\cUnivSetPC {M}} 
      {\cdot_{\cFunTyX{\cProdTy {M} {M}} {M}}} 
      {\textsf{e}_M}}\tag{1}\\
\Gamma &\vDash
    {\cIsDefX {\cUnivSetPC S}}\tag{2}\\
\Gamma &\vDash
    \cUnivSetPC{S} \not= \cEmpSetPC{S}\tag{3}\\
\Gamma &\vDash
    {\cIsDefInQTyX
       {\mName{act}_{\cFunTyX {\cProdTy {M} {S}} {S}}}
       {\cFunQTyX
          {\cProdQTy {\cUnivSetPC {M}} {\cUnivSetPC {S}}}
          {\cUnivSetPC {S}}}}\tag{4}\\
\Gamma &\vDash
    \cForallBX{x, y} {\cUnivSetPC M}
    {s} {\cUnivSetPC S}
    {x \: \textsf{act} \: (y \: \textsf{act} \: s)
    =
    (x \cdot y) \: \textsf{act} \: s}\tag{5}\\
\Gamma &\vDash
    \cForallX{s} {\cUnivSetPC S} {\textsf{e} \: \textsf{act} \: s = s}\tag{6}\\
\Gamma &\vDash
    {\cMonAction {\cUnivSetPC {M}}
    {\cUnivSetPC {S}} {\cdot_{\cFunTyX {\cProdTy {M} {M}} {M}}}
    {\mathsf{e}_M} {\mName{act}_{\cFunTyX {\cProdTy {M} {S}}
        {S}}}}\tag{7}
\end{align*}
(1)~follows from $\textsf{MON} \vDash \textsf{Thm1}$ and $\textsf{MON} \le T$; 
(2) and (3)~follow from parts 1 and 2, respectively, of Lemma~\ref{lem:univ-sets}; 
(4) follows from~\cite[Axiom~5.2]{Farmer25} and parts 8--10 of Lemma~\ref{lem:univ-sets}; 
(5) and (6) follow from Axioms 3 and~4, respectively, of $T$ and part 5 of
Lemma~\ref{lem:univ-sets}; 
and (7)~follows from (1)--(6) and the definition of $\textsf{MON-ACTION}$ in 
Table~\ref{tab:monoids-abbr}.  Therefore, $(\star)$ holds.
\end{proof}

  \item[] $\mName{Thm22}$: ${\cTotal {\mName{act}_{\cFunTyX {\cProdTy
          {M} {S}} {S}}}}$ \hfill ($\mName{act}$ is total).

\begin{proof}[ of the theorem.]  
Let $T = (L, \Gamma)$ be \textsf{MON-ACT}.  $T \vDash {\cTotal
  {\mName{act}_{\cFunTyX {\cProdTy {M} {S}} {S}}}}$ follows from Axiom
3 of $T$ in the same way that $T \vDash {\cTotal {\cdot_{\cFunTyX
      {\cProdTy {M} {M}} {M}}}}$ follows from Axiom 1 of \textsf{MON}
as shown in the proof of $\mName{Thm2}$.
\end{proof}

  \item $\mName{Def7}$: ${\cEqX {\mName{orbit}_{\cFunTyX {S} {\cSetTy
          {S}}}} {\cFunAbsX {s} {S} {\cSet {t} {S} {\cForsomeX {x} {M}
          {\cEqX {\cBinX {x} {\mName{act}} {s}} {t}}}}}}$\hfill
    (orbit).

\begin{proof}[ that RHS is defined.] 
Similar to the proof that the RHS of \textsf{Def1} is defined.
\end{proof}

  \item $\mName{Def8}$: ${\cEqX {\mName{stabilizer}_{\cFunTyX {S}
        {\cSetTy {M}}}} {\cFunAbsX {s} {S} {\cSet {x} {M} {\cEqX
          {\cBinX {x} {\mName{act}} {s}} {s}}}}}$\hfill (stabilizer).

\begin{proof}[ that RHS is defined.] 
Similar to the proof that the RHS of \textsf{Def1} is defined.
\end{proof}

  \item $\mName{Thm23}$: ${\cForallX {s} {S} {\cFunAppX
      {\mName{submonoid}} {\cFunApp {\mName {stabilizer}} {s}}}}$
    \hfill (stabilizers are submonoids).

\begin{proof}[ of the theorem.]
Let $T = (L, \Gamma)$ be \textsf{MON-ACT} extended by \textsf{Def7}
and \textsf{Def8}. We must show
\[(\star) {\sglsp} T \vDash {\cForallX {s} {S} {\cFunAppX
      {\mName{submonoid}} {\cFunApp {\mName {stabilizer}} {s}}}}.\]
\begin{align*}
\Gamma &\vDash
    {\cIsDefX {\cVar{s} {S}}}\tag{1}\\
\Gamma &\vDash
    {\cIsDefX {\textsf{e}_M}}\tag{2}\\
\Gamma &\vDash
    {\cEqX 
       {\cFunApp {\mName{stabilizer}} {s}} 
        {\cSet {x} {M} {\cEqX {\cBinX {x} {\mName{act}} {s}} {s}}}}\tag{3}\\
\Gamma &\vDash
    {\textsf{e} \in {\cFunApp {\mName {stabilizer}} {s}}}\tag{4}\\
\Gamma &\vDash
    {\cFunApp {\mName {stabilizer}} {s}} \not= \cEmpSetPC{M}\tag{5}\\
\Gamma &\vDash
    {\cIsDefInQTyX 
       {\cRestrictX 
          {\cdot}  
          {\cProdTyX 
             {\cFunApp {\mName {stabilizer}} {s}} 
             {\cFunApp {\mName {stabilizer}} {s}}}} {}}\\
       &\phantom{{}\vDash{}}
       {\cFunTyX 
          {\cProdTy 
             {\cFunApp {\mName {stabilizer}} {s}}
             {\cFunApp {\mName {stabilizer}} {s}}}
          {\cFunApp {\mName {stabilizer}} {s}}}\tag{6}\\
\Gamma &\vDash
    \cIsDefX{\cFunApp {\mName {stabilizer}} {s}}\tag{7}\\
\Gamma &\vDash
    {\cFunAppX {\mName{submonoid}} {\cFunApp {\mName {stabilizer}} {s}}} = {}\\
       &\phantom{{}\vDash{}}
       {\cNotEqX {\cFunApp {\mName {stabilizer}} {s}} {\cEmpSetPC{M}}} \land {}\\
       &\phantom{{}\vDash{}}
    {\cIsDefInQTyX 
       {\cRestrictX 
          {\cdot}  
          {\cProdTyX 
             {\cFunApp {\mName {stabilizer}} {s}} 
             {\cFunApp {\mName {stabilizer}} {s}}}} {}}\\
       &\phantom{{}\vDash{}}
       {\cFunTyX 
          {\cProdTy 
             {\cFunApp {\mName {stabilizer}} {s}}
             {\cFunApp {\mName {stabilizer}} {s}}}
          {\cFunApp {\mName {stabilizer}} {s}}} \land {}\\
       &\phantom{{}\vDash{}}
       {\textsf{e} \in {\cFunApp {\mName {stabilizer}} {s}}}\tag{8}\\ 
\Gamma &\vDash
    \cFunAppX
      {\mName{submonoid}} {\cFunApp {\mName {stabilizer}} {s}}\tag{9}\\
\Gamma &\vDash
    \cForallX{s} {S} {\cFunAppX
      {\mName{submonoid}} {\cFunApp {\mName {stabilizer}} {s}}}\tag{10}
\end{align*}
(1)~follows from variables always being defined by~\cite[Axiom A5.1]{Farmer25}; 
(2)~follows from constants always being defined by~\cite[Axiom A5.2]{Farmer25}; 
(3)~follows from \textsf{Def8} by the Equality Rules~\cite[Lemma A.13]{Farmer25} and
Beta-Reduction \cite[Axiom A4]{Farmer25} applied to (1) and the RHS of the result; 
(4)~follows from (3) and Axiom 4 of $T$; 
(5)~follows immediately from (4); 
(6)~follows from \textsf{Thm2}, (3), and Axiom 3 of $T$;
(7)~follows from (3) and \cite[Axiom A5.4]{Farmer25};
(8)~follows from \textsf{Def1} by the Equality Rules and Beta-Reduction
applied to (7) and the RHS of the result;
(9)~follows from (4), (5), (6), and (8) by the Tautology 
Rule~\cite[Corollary A.46]{Farmer25};
(10)~follows from (9) by Universal Generalization \cite[Theorem A.30]{Farmer25} 
using the fact that $s$ is free in $\Gamma$ because $\Gamma$ is a set of sentences. 
Therefore, $(\star)$ holds.
\end{proof}

  \item \textsf{Thm24 (Thm21-via-MON-ACT-to-MON)}:\\
  \hspace*{2ex} ${\cMonAction {\cUnivSetPC {M}} {\cUnivSetPC {M}}
    {\cdot_{\cFunTyX {\cProdTy {M} {M}} {M}}} {\mathsf{e}_M}
    {\cdot_{\cFunTyX {\cProdTy {M} {M}} {M}}}}$\\ 
  \phantom{x} \hfill (first example is a monoid action).

\begin{proof}[ of the theorem.] 
Similar to the proof of \textsf{Thm11}.
\end{proof}

\ee

\subsection{Development of \textnormal{\textsf{ONE-BT-with-SC}}} 

\be

  \item \textsf{Thm25 (Thm21-via-MON-ACT-to-ONE-BT-with-SC)}:\\
  \hspace*{2ex} 
  $\textsf{MON-ACTION}(
  {\mathsf{F}_{\cSetTy {\cFunTyX {S} {S}}}},\\
  \hspace*{18.2ex}
  {\cUnivSetPC {S}},\\ 
  \hspace*{18.2ex}
  {\circ_{\cFunTyX {\cProdTy {\cFunTy {S} {S}} {\cFunTy {S} {S}}}
        {\cFunTy {S} {S}}}}{\wrestricted_{\cProdQTyX {\mathsf{F}}
        {\mathsf{F}}}},\\
  \hspace*{18.2ex}
  {\cIdFunPC {S}},\\
  \hspace*{18.2ex}
  {\cFunAppPairPC {S} {S}}{\wrestricted_{\cProdQTyX {\mathsf{F}} {S}}})$\\
  \phantom{x} \hfill (second example is a monoid action).

\begin{proof}[ of the theorem.] 
Similar to the proof of \textsf{Thm11}.
\end{proof}

\ee

\subsection{Development of \textnormal{\textsf{MON-HOM}}} 

\be

  \item $\mName{Thm26}$:\\
  \hspace*{2ex} 
  $\textsf{MON-HOM}(
  {\cUnivSetPC {M_1}},\\
  \hspace*{15.2ex}
  {\cUnivSetPC {M_2}},\\
  \hspace*{15.2ex}
  {\cdot_{\cFunTyX {\cProdTy {M_1} {M_1}} {M_1}}},\\
  \hspace*{15.2ex}
  {\mathsf{e}_{M_1}},\\
  \hspace*{15.2ex}
  {\cdot_{\cFunTyX {\cProdTy {M_2} {M_2}} {M_2}}},\\
  \hspace*{15.2ex}
  {\mathsf{e}_{M_2}},\\
  \hspace*{15.2ex}
  {\mathsf{h}_{\cFunTyX {M_1} {M_2}}})$\\ 
  \phantom{x} \hfill (models of \textsf{MON-HOM} define monoid homomorphisms).

\begin{proof}[ of the theorem.]
Let $T = (L, \Gamma)$ be $\textsf{MON-HOM}$ and $\textbf{A}_\cB$ be
\begin{align*}
& \textsf{MON-HOM}(
  {\cUnivSetPC {M_1}},
  {\cUnivSetPC {M_2}},
  {\cdot_{\cFunTyX {\cProdTy {M_1} {M_1}} {M_1}}},
  {\mathsf{e}_{M_1}},
  {\cdot_{\cFunTyX {\cProdTy {M_2} {M_2}} {M_2}}},\\
& \hspace{2ex}
  {\mathsf{e}_{M_2}},
  {\mathsf{h}_{\cFunTyX {M_1} {M_2}}}).
\end{align*}
We must show $(\star) {\sglsp} T \vDash \textbf{A}_\cB$.
\begin{align*}
\Gamma &\vDash
   {\cMonoid{\cUnivSetPC{M_1}} {\cdot_{\cFunTyX{\cProdTy {M_1} {M_1}} {M_1}}} {\textsf{e}_{M_1}}}\tag{1}\\
\Gamma &\vDash
   {\cMonoid{\cUnivSetPC{M_2}} {\cdot_{\cFunTyX{\cProdTy {M_2} {M_2}} {M_2}}} {\textsf{e}_{M_2}}}\tag{2}\\
\Gamma &\vDash
   {\cIsDefInQTyX{\textsf{h}_{\cFunTyX{M_1} {M_2}}}{\cFunQTyX{\cUnivSetPC{M_1}} {\cUnivSetPC{M_2}}}}\tag{3}\\
\Gamma &\vDash
   {\cForallX{x, y} {\cUnivSetPC{M_1}} {\textsf{h} \: (x \cdot y) = (\textsf{h} \: x) \cdot (\textsf{h} \: y)}}\tag{4}\\
\Gamma &\vDash {\textbf{A}_\cB}\tag{5}
\end{align*}
(1)~and~(2) follow similarly to the proof of $\textsf{Thm1}$; 
(3)~follows from~\cite[Axiom~5.2]{Farmer25} and parts 8 and 9 of 
Lemma~\ref{lem:univ-sets};
(4)~follows from Axiom 5 of $T$ and part 5 of Lemma~\ref{lem:univ-sets}; 
(5)~follows from (1)--(4),  Axiom 6 of $T$, and the definition of 
$\textsf{MON-HOM}$ in Table~\ref{tab:monoids-abbr}.  Therefore, $(\star)$ holds.~
\end{proof}

  \item[] $\mName{Thm27}$: ${\cTotal {\mathsf{h}_{\cFunTyX {M_1}
        {M_2}}}}$ \hfill (${\mathsf{h}_{\cFunTyX {M_1} {M_2}}}$ is
    total).

\begin{proof}[ of the theorem.]
Let $T = (L, \Gamma)$ be \textsf{MON-HOM}.  $T \vDash {\cTotal
  {\mathsf{h}_{\cFunTyX {M_1} {M_2}}}}$ follows from Axiom 5 of $T$ in
the same way that $T \vDash {\cTotal {\cdot_{\cFunTyX {\cProdTy {M}
        {M}} {M}}}}$ follows from Axiom 1 of \textsf{MON} as shown in
the proof of $\mName{Thm2}$.
\end{proof}

  \item \textsf{Thm28 (Thm26-via-MON-HOM-to-MON-4)}\\
  \hspace*{2ex} 
  $\textsf{MON-HOM}(
  {\cUnivSetPC {M}},\\
  \hspace*{15.2ex}
  {\cUnivSetPC {\cSetTy {M}}},\\
  \hspace*{15.2ex}
  {\cdot_{\cFunTyX {\cProdTy {M} {M}} {M}}},\\
  \hspace*{15.2ex}
  {\mathsf{e}_{M}},\\
  \hspace*{15.2ex}
  {\cdot_{\cFunTyX {\cProdTy {\cSetTy {M}} {\cSetTy {M}}} {\cSetTy {M}}}},\\
  \hspace*{15.2ex}
  {\mathsf{E}_{\cSetTy {M}}},\\
  \hspace*{15.2ex}
  {\mathsf{h}_{\cFunTyX {M} {\cSetTy {M}}}})$
  \hfill (example is a monoid homomorphism).

\begin{proof}[ of the theorem.] 
Similar to the proof of \textsf{Thm11}.
\end{proof}

\ee

\subsection{Development of \textnormal{\textsf{MON-over-COF}}}

\be

  \item $\mName{Def9}$: ${\cEqX {\mName{prod}_{\cFunTyCX {R} {R} {\cFunTy {R} {M}} {M}}} {}}$\\
    \hspace*{2ex}
   ${\cDefDesX {f} 
    {\cFunTyCX {Z_{\cSetTy {R}}} {Z_{\cSetTy {R}}} {\cFunTy {Z_{\cSetTy {R}}} {M}} {M}} {}}$\\
    \hspace*{4ex}${\cForallBX {m,n} {Z_{\cSetTy {R}}} {g} {\cFunTyX {Z_{\cSetTy {R}}} {M}}
    {\cQuasiEqX {\cFunAppCX {f} {m} {n} {g}} {}}}$\\
    \hspace*{6ex}${\cIf {\cBinX {m} {>} {n}} {\mathsf{e}} 
    {\cBinX {\cFunAppC {f} {m} {\cBin {n} {-} {1}} {g}} {\cdot} 
    {\cFunApp {g} {n}}}}$ \hfill (iterated product).

\begin{proof}[ that RHS is defined.]
Let
\begin{align*}
\textbf{A}_\cB =
{\cForallBX {m,n} {Z_{\cSetTy {R}}} {g} {\cFunTyX {Z_{\cSetTy {R}}} {M}}
    {\cQuasiEqX {\cFunAppCX {f} {m} {n} {g}} {}}}\\
    \hspace*{6ex}{\cIf {\cBinX {m} {>} {n}} {\mathsf{e}} 
    {\cBinX {\cFunAppC {f} {m} {\cBin {n} {-} {1}} {g}} {\cdot} 
    {\cFunApp {g} {n}}}}.
\end{align*}
Suppose that two functions $f_1$ and $f_2$ satisfy
$\textbf{A}_\cB$. It is easy to see that $f_1$ and $f_2$ must be the
same function based on the recursive structure of $f$ in
$\textbf{A}_\cB$.  Thus, $\textbf{A}_\cB$ specifies a unique function,
and so the RHS of $\mName{Def9}$ is defined by \cite[Axiom
  A6.1]{Farmer25}.
\end{proof}

  \item $\mName{Thm29}$: ${\cForallBX {m} {Z_{\cSetTy {R}}} {g}
    {\cFunTyX {Z_{\cSetTy {R}}} {M}} {\cQuasiEqX {\cProd {i} {m} {m}
        {\cFunAppX {g} {i}}}{\cFunAppX {g} {m}}}}$\\
    \phantom{x} \hfill (trivial product).

\begin{proof}[ of the theorem.]
Let $\textbf{A}_\cB$ be the theorem and $T = (L, \Gamma)$ be
$\textsf{COM-MON-over-COF}$ extended with $\mathsf{Def9}$. We must
show $(\star) {\sglsp} T \vDash \textbf{A}_\cB$.

Let $\Delta$ be the set $\mSet{{\cInX {m} {Z_{\cSetTy R}}}, {\cInX {g}
    {\cFunQTyX {Z_{\cSetTy R}} {M}}}}$.
\begin{align*}
\Gamma \cup \Delta &\vDash
   {\cIsDefX{\cVar{m} {R}}} \land
   {\cIsDefX{\cVar{g} {\cFunTyX{R} {M}}}}\tag{1}\\
\Gamma \cup \Delta &\vDash
   {\cQuasiEqX
      {\cProd {i} {m} {m} {\cFunAppX {g} {i}}}
      {\cBinX
         {\cProd {i} {m} {m-1} {\cFunAppX {g} {i}}}
         {\cdot}
         {\cFunAppX {g} {m}}}}\tag{2}\\
\Gamma \cup \Delta &\vDash
   {\cQuasiEqX
      {\cBinX
         {\cProd {i} {m} {m-1} {\cFunAppX {g} {i}}}
         {\cdot}
         {\cFunAppX {g} {m}}}
      {\cBinX
         {\mName{e}}
         {\cdot}
         {\cFunAppX {g} {m}}}}\tag{3}\\
\Gamma \cup \Delta &\vDash
   {\cQuasiEqX
      {\cBinX
         {\mName{e}}
         {\cdot}
         {\cFunAppX {g} {m}}}
      {\cFunAppX {g} {m}}}\tag{4}\\
\Gamma \cup \Delta &\vDash
   {\cQuasiEqX
      {\cProd {i} {m} {m} {\cFunAppX {g} {i}}}
      {\cFunAppX {g} {m}}}\tag{5}\\
\Gamma &\vDash
   {\textbf{A}_\cB}\tag{6}
\end{align*}
(1)~follows from variables always being defined by~\cite[Axiom A5.1]{Farmer25};
(2) and (3)~follow from (1) and $\mName{Def9}$;
(4)~follows from Axiom 20 of~$T$;
(5)~follows from (2), (3), and (4) by the Quasi-Equality Rules 
\cite[Lemma A.4]{Farmer25}; and
(6)~follows from (5) by the Deduction Theorem \cite[Theorem
  A.50]{Farmer25} and by Universal Generalization \cite[Theorem
  A.30]{Farmer25} using the fact that $m$ and $g$ are not free in
$\Gamma$ since $\Gamma$ is a set of sentences. Therefore, $(\star)$
holds.
\end{proof}

  \item $\mName{Thm30}$: 
  ${\cForallBX 
      {m,k,n} 
      {Z_{\cSetTy {R}}} 
      {g}
      {\cFunTyX {Z_{\cSetTy {R}}} {M}} {}}$\\
      \hspace*{2ex}
     ${\cImpliesX
         {\cBinBX {m} {<} {k} {<} {n}}
         {\cQuasiEqX 
           {\cBinX
              {\cProd {i} {m} {k} {\cFunAppX {g} {i}}}
              {\cdot}
              {\cProd {i} {k+1} {n} {\cFunAppX {g} {i}}}}
           {\cProdX {i} {m} {n} {\cFunAppX {g} {i}}}}}$\\
     \phantom{x} \hfill (extended iterated product).

\begin{proof}[ of the theorem.]
Let $\mathbf{A}_\cB$ be the theorem and $T = (L, \Gamma)$ be
$\textsf{MON-over-COF}$ extended by $\mName{Def9}$. We must show $(\mbox{a})
{\sglsp} T \vDash \mathbf{A}_\cB$.

Let $\Delta$ be the set 
\[\mSet{{\cInX {m} {Z_{\cSetTy {R}}}}, {\cInX
    {k} {Z_{\cSetTy {R}}}}, {\cInX {n} {Z_{\cSetTy {R}}}}, {\cBinBX
  {m} {<} {k} {<} {n}}}.\] We will prove
\[(\mbox{b}) {\sglsp} \Gamma \cup \Delta \vDash 
     {\cQuasiEqX 
        {\cBinX
           {\cProd {i} {m} {k} {\cFunAppX {g} {i}}}
           {\cdot}
           {\cProd {i} {k+1} {n} {\cFunAppX {g} {i}}}}
        {\cProd {i} {m} {n} {\cFunAppX {g} {i}}}}\]
from all $n > k$ by induction on the $n$.

\noindent
\emph{Base case: $n = k+1$.} Then:
\begin{align*}
\Gamma \cup \Delta &\vDash
   {\cIsDefX{\cVar{m} {R}}} \land
   {\cIsDefX{\cVar{k} {R}}} \land
   {\cIsDefX{\cVar{n} {R}}} \land
   {\cIsDefX{\cVar{g} {\cFunTyX{R} {M}}}}\tag{1}\\
\Gamma \cup \Delta &\vDash
   {\cQuasiEqX 
      {\cBinX
         {\cProd {i} {m} {k} {\cFunAppX {g} {i}}}
         {\cdot}
         {\cProd {i} {k+1} {n} {\cFunAppX {g} {i}}}}
      {\cBinX
         {\cProd {i} {m} {k} {\cFunAppX {g} {i}}}
         {\cdot}
         {\cFunAppX {g} {n}}}}\tag{2}\\
\Gamma \cup \Delta &\vDash
   {\cQuasiEqX 
      {\cBinX
         {\cProd {i} {m} {k} {\cFunAppX {g} {i}}}
         {\cdot}
         {\cFunAppX {g} {n}}}
      {\cProd {i} {m} {n} {\cFunAppX {g} {i}}}}\tag{3}
\end{align*}
(1)~follows from variables always being defined by~\cite[Axiom A5.1]{Farmer25};
(2)~follows from $n = k+1$ and $\mName{Thm29}$; and
(3)~follows from $n = k+1$, (1), and $\mName{Def9}$.
Thus (b) holds by the Quasi-Equality Rules \cite[Lemma A.4]{Farmer25} 
when $n = k+1$.

\noindent
\emph{Induction step: $n > k+1$ and assume} 
\[\Gamma \cup \Delta \vDash 
     {\cQuasiEqX 
        {\cBinX
           {\cProd {i} {m} {k} {\cFunAppX {g} {i}}}
           {\cdot}
           {\cProd {i} {k+1} {n-1} {\cFunAppX {g} {i}}}}
        {\cProd {i} {m} {n-1} {\cFunAppX {g} {i}}}}.\]
Then:
\begin{align*}
\Gamma \cup \Delta &\vDash
   {\cIsDefX{\cVar{m} {R}}} \land
   {\cIsDefX{\cVar{k} {R}}} \land
   {\cIsDefX{\cVar{n} {R}}} \land
   {\cIsDefX{\cVar{g} {\cFunTyX{R} {M}}}}\tag{1}\\
\Gamma \cup \Delta &\vDash
   {\cQuasiEqX 
      {\cBinX
         {\cProd {i} {m} {k} {\cFunAppX {g} {i}}}
         {\cdot}
         {\cProd {i} {k+1} {n} {\cFunAppX {g} {i}}}}
      {\cBinX   
         {\cProd {i} {m} {k} {\cFunAppX {g} {i}}}
         {\cdot}
         {\cBin
            {\cProd {i} {k+1} {n-1} {\cFunAppX {g} {i}}}
            {\cdot}
            {\cFunAppX {g} {n}}}}}\tag{2}\\
\Gamma \cup \Delta &\vDash
   {\cQuasiEqX 
      {\cBinX   
         {\cProd {i} {m} {k} {\cFunAppX {g} {i}}}
         {\cdot}
         {\cBin
            {\cProd {i} {k+1} {n-1} {\cFunAppX {g} {i}}}
            {\cdot}
            {\cFunAppX {g} {n}}}}
      {\cBinX
         {\cProd {i} {m} {n-1} {\cFunAppX {g} {i}}}
         {\cdot}
         {\cFunAppX {g} {n}}}}\tag{3}\\
\Gamma \cup \Delta &\vDash
   {\cQuasiEqX 
      {\cBinX
         {\cProd {i} {m} {n-1} {\cFunAppX {g} {i}}}
         {\cdot}
         {\cFunAppX {g} {n}}}
      {\cProd {i} {m} {n} {\cFunAppX {g} {i}}}}\tag{4}
\end{align*}
(1)~follows from variables always being defined by~\cite[Axiom A5.1]{Farmer25};
(2)~and (4)~follows from (1) and $\mName{Def9}$; and
(3)~follows from Axiom~19 of~$T$ and the induction hypothesis.
Thus (b) holds by the Quasi-Equality Rules \cite[Lemma A.4]{Farmer25} 
when $n > k+1$.

Therefore, (b) holds for all $n > k$, and (a) follows from this by the
Deduction Theorem \cite[Theorem A.50]{Farmer25} and by Universal
Generalization \cite[Theorem A.30]{Farmer25} using the fact that $m$,
$k$, $n$, and $g$ are not free in $\Gamma$ since $\Gamma$ is a set of
sentences.
\end{proof}

\ee

\subsection{Development of \textnormal{\textsf{COM-MON-over-COF}}} 

\be

  \item $\mName{Thm31}$: 
  ${\cForallBX 
      {m,n} 
      {Z_{\cSetTy {R}}} 
      {g,h}
      {\cFunTyX {Z_{\cSetTy {R}}} {M}} {}}$\\
      \hspace*{2ex}
     ${\cQuasiEqX 
        {\cBinX
           {\cProd {i} {m} {n} {\cFunAppX {g} {i}}}
           {\cdot}
           {\cProd {i} {m} {n} {\cFunAppX {h} {i}}}}
        {\cProdX 
           {i} 
           {m} 
           {n} 
           {\cBinX
              {\cFunApp {g} {i}}
              {\cdot}
              {\cFunApp {h} {i}}}}}$\\
     \phantom{x} \hfill (product of iterated products).

\begin{proof}[ of the theorem.]
Let $\mathbf{A}_\cB$ be the theorem and $T = (L, \Gamma)$ be
$\textsf{COM-MON-over-COF}$ extended by $\mName{Def9}$. We must show $(\mbox{a})
{\sglsp} T \vDash \mathbf{A}_\cB$.  

Let $\Delta$ be the set $\mSet{{\cInX {n} {Z_{\cSetTy {R}}}}, {\cInX
    {g} {\cFunQTyX {Z_{\cSetTy R}} {M}}}}$.  We will prove
\[(\mbox{b}) {\sglsp} \Gamma \cup \Delta \vDash 
     {\cQuasiEqX 
        {\cBinX
           {\cProd {i} {m} {n} {\cFunAppX {g} {i}}}
           {\cdot}
           {\cProd {i} {m} {n} {\cFunAppX {h} {i}}}}
        {\cProdX 
           {i} 
           {m} 
           {n} 
           {\cBinX
              {\cFunApp {g} {i}}
              {\cdot}
              {\cFunApp {h} {i}}}}}\]
for all $n$ by induction on the $n$.

\noindent
\emph{Base case: $n < m$.} Then:
\begin{align*}
\Gamma \cup \Delta &\vDash
   {\cIsDefX{\cVar{n} {R}}} \land
   {\cIsDefX{\cVar{g} {\cFunTyX{R} {M}}}}\tag{1}\\
\Gamma \cup \Delta &\vDash
   {\cQuasiEqX 
      {\cBinX
         {\cProd {i} {m} {n} {\cFunAppX {g} {i}}}
         {\cdot}
         {\cProd {i} {m} {n} {\cFunAppX {h} {i}}}}
      {\cBinX {\mName{e}} {\cdot} {\mName{e}}}}\tag{2}\\
\Gamma \cup \Delta &\vDash
   {\cQuasiEqX 
      {\cProdX 
         {i} 
         {m} 
         {n} 
         {\cBinX
            {\cFunApp {g} {i}}
            {\cdot}
            {\cFunApp {h} {i}}}}
      {\mName{e}}}\tag{3}
\end{align*}
(1)~follows from variables always being defined by~\cite[Axiom A5.1]{Farmer25}; and
(2)~and (3)~follow from $n < m$, (1), and $\mName{Def9}$.
Thus (b) holds by Axiom 20 of $T$ and the Quasi-Equality Rules
\cite[Lemma A.4]{Farmer25} when $n < m$.

\noindent
\emph{Induction step: $n \ge m$ and assume} 
\[\Gamma \cup \Delta \vDash 
     {\cQuasiEqX 
        {\cBinX
           {\cProd {i} {m} {n-1} {\cFunAppX {g} {i}}}
           {\cdot}
           {\cProd {i} {m} {n-1} {\cFunAppX {h} {i}}}}
        {\cProdX 
           {i} 
           {m} 
           {n-1} 
           {\cBinX
              {\cFunApp {g} {i}}
              {\cdot}
              {\cFunApp {h} {i}}}}}.\]
Then:
\begin{align*}
\Gamma \cup \Delta &\vDash
   {\cIsDefX{\cVar{n} {R}}} \land
   {\cIsDefX{\cVar{g} {\cFunTyX{R} {M}}}}\tag{1}\\
\Gamma \cup \Delta &\vDash
   {\cQuasiEqX 
      {\cBinX
         {\cProd {i} {m} {n} {\cFunAppX {g} {i}}}
         {\cdot}
         {\cProd {i} {m} {n} {\cFunAppX {h} {i}}}}
      {\cBinX
         {\cBinX
            {\cProd {i} {m} {n-1} {\cFunAppX {g} {i}}}
            {\cdot}
            {\cFunAppX {g} {n}}}
         {\cdot}
         {\cBinX
            {\cProd {i} {m} {n-1} {\cFunAppX {h} {i}}}
            {\cdot}
            {\cFunAppX {h} {n}}}}}\tag{2}\\
\Gamma \cup \Delta &\vDash
   {\cQuasiEqX 
      {\cBinX
         {\cBinX
            {\cProd {i} {m} {n-1} {\cFunAppX {g} {i}}}
            {\cdot}
            {\cFunAppX {g} {n}}}
         {\cdot}
         {\cBinX
            {\cProd {i} {m} {n-1} {\cFunAppX {h} {i}}}
            {\cdot}
            {\cFunAppX {h} {n}}}} {}}\\
&\phantom{{}\vDash{}}
    {\cBinX
       {\cBinX
          {\cProd {i} {m} {n-1} {\cFunAppX {g} {i}}}
          {\cdot}
          {\cProd {i} {m} {n-1} {\cFunAppX {h} {i}}}}
       {\cdot}
       {\cBinX
          {\cFunAppX {g} {n}}
          {\cdot}
          {\cFunAppX {h} {n}}}}\tag{3}\\
\Gamma \cup \Delta &\vDash
   {\cQuasiEqX
      {\cBinX
         {\cBinX
            {\cProd {i} {m} {n-1} {\cFunAppX {g} {i}}}
            {\cdot}
           {\cProd {i} {m} {n-1} {\cFunAppX {h} {i}}}}
         {\cdot}
         {\cBinX
            {\cFunAppX {g} {n}}
            {\cdot}
            {\cFunAppX {h} {n}}}} {}}\\
&\phantom{{}\vDash{}}
      {\cBinX
         {\cProd 
            {i} 
            {m} 
            {n-1} 
            {\cBinX
               {\cFunApp {g} {i}}
               {\cdot}
               {\cFunApp {h} {i}}}}
         {\cdot}
         {\cBin
            {\cFunAppX {g} {n}}
            {\cdot}
            {\cFunAppX {h} {n}}}}\tag{4}\\
\Gamma \cup \Delta &\vDash
   {\cQuasiEqX
       {\cBinX
          {\cProd 
             {i} 
             {m} 
             {n-1} 
             {\cBinX
                {\cFunApp {g} {i}}
                {\cdot}
                {\cFunApp {h} {i}}}}
          {\cdot}
          {\cBin
             {\cFunAppX {g} {n}}
             {\cdot}
             {\cFunAppX {h} {n}}}}
        {\cProdX 
           {i} 
           {m} 
           {n} 
           {\cBinX
              {\cFunApp {g} {i}}
              {\cdot}
              {\cFunApp {h} {i}}}}}\tag{5}       
\end{align*}
(1)~follows from variables always being defined by~\cite[Axiom A5.1]{Farmer25};
(2)~and (5)~follow from (1) and $\mName{Def9}$; 
(3)~follows from Axiom 21 of~$T$; and
(4)~follows from the induction hypothesis.
Thus (b) holds by the Quasi-Equality Rules \cite[Lemma A.4]{Farmer25} 
when $n \ge m$.

Therefore, (b) holds for all $n$, and (a) follows from this by the
Deduction Theorem \cite[Theorem A.50]{Farmer25} and by Universal
Generalization \cite[Theorem A.30]{Farmer25} using the fact that $n$
and $g$ are not free in $\Gamma$ since $\Gamma$ is a set of sentences.
\end{proof}

\ee

\subsection{Development of \textnormal{\textsf{COM-MON-ACT-over-COF}}} 

\be

  \item $\mName{Thm32}$: 
  ${\cForallBX      
     {x,y} 
     {M} 
     {s} 
     {S} 
     {\cEqX    
        {\cBinX {x} {\mName{act}} {\cBin {y} {\mName{act}} {s}}} 
        {\cBinX {y} {\mName{act}} {\cBin {x} {\mName{act}} {s}}}}}$\\
   \phantom{x} \hfill ($\mName{act}$ has commutative-like property).

\begin{proof}[ of the theorem.] 
Let $\mathbf{A}_\cB$ be the theorem and $T = (L, \Gamma)$ be
$\textsf{COM-MON-ACT-over-}$ $\textsf{COF}$. We must show $(\star)
{\sglsp} T \vDash \mathbf{A}_\cB$.
\begin{align*}
\Gamma &\vDash
    \cIsDefX{\cVar{x} {M}} \land
    \cIsDefX{\cVar{y} {M}} \land
    \cIsDefX{\cVar{s} {S}}\tag{1}\\
\Gamma &\vDash
    x \: \cFunAppX{\textsf{act}} {(y \: \cFunAppX{\textsf{act}} {s})} =
    (x \cdot y) \: \cFunAppX{\textsf{act}} {s}\tag{2}\\
\Gamma &\vDash
    y \: \cFunAppX{\textsf{act}} {(x \: \cFunAppX{\textsf{act}} {s})} =
    (y \cdot x) \: \cFunAppX{\textsf{act}} {s}\tag{3}\\    
\Gamma &\vDash
    \cBinX{x} {\cdot} {y} = \cBinX{y} {\cdot} {x}\tag{4}\\
\Gamma &\vDash
    y \: \cFunAppX{\textsf{act}} {(x \: \cFunAppX{\textsf{act}} {s})} =
    (x \cdot y) \: \cFunAppX{\textsf{act}} {s}\tag{5}\\
\Gamma &\vDash
    x \: \cFunAppX{\textsf{act}} {(y \: \cFunAppX{\textsf{act}} {s})} =
    y \: \cFunAppX{\textsf{act}} {(x \: \cFunAppX{\textsf{act}} {s})}\tag{6}\\
\Gamma &\vDash
    {\mathbf{A}_\cB}\tag{7}
\end{align*}
(1)~follows from variables always being defined by~\cite[Axiom A5.1]{Farmer25}; 
(2)~and (3) follow from (1) and Axiom 22 of $T$ by
Universal Instantiation \cite[Theorem A.14]{Farmer25}; 
(4)~follows from (1) and Axiom 21 of $T$ by Universal Instantiation; 
(5)~follows from (4) and (3) by Quasi-Equality Substitution \cite[Lemma A.2]{Farmer25}; 
(6)~follows from (2) and (5) by the Equality Rules~\cite[Lemma A.13]{Farmer25}; 
(7)~follows from (6) by Universal Generalization \cite[Theorem A.30]{Farmer25} 
using the fact that $x$, $y$, and $s$ are not free in $\Gamma$ since $\Gamma$ is a set of
sentences. Therefore, $(\star)$ holds.~
\end{proof}

\ee

\subsection{Development of \textnormal{\textsf{STR}}} 

\be

  \item $\mName{Def10}$:
  ${\cEqX
     {\mName{str}_{\cSetTy {\cFunTyX {R} {A}}}}
     {\cSeqFinQTy {A}}}$
  \hfill (string quasitype).

\begin{proof}[ that RHS is defined.] 
Let $T$ be the top theory of \textsf{STR-1}.  We must show $(\star)$
$T \vDash \cIsDefX{\cSeqFinQTy {A}}$. Now $\cSeqFinQTy{A}$ stands for
\[\cSet{s}
{\cSeqQTy{A}} {\cForsomeX{n} {\textbf{C}_{\cSetTy{R}}^N} {\cForallX{m}
    {\textbf{C}_{\cSetTy{R}}^N} {\cIsDefX{(s \: m)} \Leftrightarrow
      \textbf{C}_{\cFunTyX{A} {\cFunTyX{A} {\cB}}}m \: n}}}\] based on
the notational definitions in Table~\ref{tab:nd-sequences}.  Thus
$(\star)$ holds because function abstractions are always defined
by~\cite[Axiom A5.11]{Farmer25}.
\end{proof}

  \item $\mName{Def11}$:
  ${\cEqX
      {\epsilon_{\cFunTyX {R} {A}}}
      {\cEmpListPC {R} {A}}}$ \hfill (empty string).

\begin{proof}[ that RHS is defined.] 
Let $T$ be the top theory of \textsf{STR-1}.  We must show $(\star)$
$T \vDash \cIsDefX{{\cEmpListPC {R} {A}}}$. Now $\cEmpListPC {R} {A}$
stands for
\[\cFunAbsX{x} {R} {\cBotPC{A}}\]
based on the notational definitions in Tables~\ref{tab:nd-definedness}
and~\ref{tab:nd-sequences}.  Thus $(\star)$ holds because function
abstractions are always defined by~\cite[Axiom A5.11]{Farmer25}.~
\end{proof}

  \item[] $\mName{Def12}$:
  ${\cEqX
      {\mName{cat}_{\cFunTyX {\cProdTy {\cFunTy {R} {A}} {\cFunTy {R} {A}}} {\cFunTy {R} {A}}}} 
      {\cAppendPC {R} {A}}}$\\
  \phantom{x} \hfill (concatenation).

\begin{proof}[ that RHS is defined.] 
Let $T$ be the top theory of \textsf{STR-1}.  We must show
\[(\star) {\sglsp} T \vDash \cIsDefX{{\cAppendPC {R} {A}}}.\]
The pseudoconstant ${\cAppendPC {\alpha} {\beta}}$ is defined in
Table~\ref{tab:nd-sequences}.  For all $\alpha$ and $\beta$,
${\cAppendPC {\alpha} {\beta}}$ denotes the concatenation function for
finite sequences over the denotation of $\beta$.  Therefore, $(\star)$
holds.
\end{proof}

  \item $\mName{Thm33}$:
  ${\cForallX 
      {x} 
      {\mName{str}}
      {\cBinBX 
         {\cCatAppX {\epsilon} {x}}
         {=}
         {\cCatAppX {x} {\epsilon}}
         {=}
         {x}}}$
    \hfill ($\epsilon$ is an identity element). 

\begin{proof}[ of the theorem.]
Let $T = (L, \Gamma)$ be the top theory of $\textsf{STR-1}$ extended
by \textsf{Def10--Def12}.  We must show:
\bi

  \item[] (a)
  $T \vDash 
   {\cForallX 
      {x} 
      {\mName{str}}
      {\cEqX
         {\cCatAppX {\epsilon} {x}}
         {x}}}.$

  \item[] (b)
  $T \vDash 
   {\cForallX 
      {x} 
      {\mName{str}}
      {\cEqX
         {\cCatAppX {x} {\epsilon}}
         {x}}}.$

\ei

Let $\Delta$ be the set $\mSet{\cInX {x} {\mName{str}}}$. Then:
\begin{align*}
\Gamma \cup \Delta &\vDash
  {\cEqX 
     {\cCatAppX {\epsilon} {x}}
     {x}}\tag{1}\\
\Gamma &\vDash
  {\cForallX 
     {x} 
     {\mName{str}}
     {\cEqX {\cCatAppX {\epsilon} {x}} {x}}}\tag{2}
\end{align*}
(1)~follows from $\cInX {x} {\mName{str}}$ and $\mathsf{Def12}$; and
(2)~follows from (1) by the Deduction Theorem \cite[Theorem
  A.50]{Farmer25} and then by Universal Generalization \cite[Theorem
  A.30]{Farmer25} using the fact that $\cVar{x}{\cFunTyX{R}{A}}$ is
not free in $\Gamma$ since $\Gamma$ is a set of sentences.  
Therefore, (a) holds.

We will prove (c) $\Gamma \cup \Delta \vDash {\cEqX {\cCatAppX {x} {\epsilon}}
  {x}}$ by induction on the length of $x$.

\noindent
\emph{Base case: $x$ is $\epsilon$.} Then $\Gamma \cup \Delta \vDash
     {\cEqX {\cCatAppX {\epsilon} {\epsilon}} {\epsilon}}$ is an
     instance of (1) above.

\noindent
\emph{Induction step: $x$ is $\cCons {a} {y}$ and assume $\Gamma \cup \Delta
\vDash {\cEqX {\cCatAppX {y} {\epsilon}} {y}}$.}  Then:
\begin{align*}
\Gamma \cup \Delta &\vDash
  {\cEqX 
     {\cCatAppX {\cCons {a} {y}} {\epsilon}}
     {\cCons {a} {\cCatAppX {y} {\epsilon}}}}\tag{1}\\
\Gamma \cup \Delta &\vDash
  {\cEqX 
     {\cCons {a} {\cCatAppX {y} {\epsilon}}}
     {\cCons {a} {y}}}\tag{2}
\end{align*}
(1) follows from $\cInX {x} {\mName{str}}$ and $\mathsf{Def12}$; and
(2) follows from the induction hypothesis and (1) by Quasi-Equality
Substitution \cite[Lemma A.2]{Farmer25}.  Thus $\Gamma \cup \Delta
\vDash {\cEqX {\cCatAppX {\cCons {a} {y}} {\epsilon}} {\cCons {a}
    {y}}}$ holds by the Equality Rules~\cite[Lemma A.13]{Farmer25}.

Therefore (c) holds, and (b) follows from (c) by the Deduction Theorem
\cite[Theorem A.50]{Farmer25} and then by Universal Generalization
\cite[Theorem A.30]{Farmer25} using the fact that
$\cVar{x}{\cFunTyX{R}{A}}$ is not free in $\Gamma$ since $\Gamma$ is a
set of sentences.  
\end{proof}

  \item $\mName{Thm34}$:
  ${\cForallX 
      {x,y,z} 
      {\mName{str}}
      {\cEqX
         {\cCatAppX {x} {\cCatApp {y} {z}}}
         {\cCatAppX {\cCatApp {x} {y}} {z}}}}$
     \hfill ($\mName{cat}$ is associative).

\begin{proof}[ of the theorem.]
Let $T = (L, \Gamma)$ be the top theory of $\textsf{STR-1}$ extended
by \textsf{Def10--Def12}.  We must show
\[\mbox{(a)} {\sglsp} 
T \vDash
  {\cForallX 
     {x,y,z} 
     {\mName{str}}
     {\cEqX
        {\cCatAppX {x} {\cCatApp {y} {z}}}
        {\cCatAppX {\cCatApp {x} {y}} {z}}}}.
\]
Let $\Delta$ be the set $\mSet{\cInX {x} {\mName{str}}, \cInX
{y} {\mName{str}}, \cInX {z} {\mName{str}}}$.  We will prove 
\[\mbox{(b)} {\sglsp} \Gamma \cup \Delta \vDash 
{\cEqX
   {\cCatAppX {x} {\cCatApp {y} {z}}}
   {\cCatAppX {\cCatApp {x} {y}} {z}}}
\]
by induction on the length of $x$.

\noindent
\emph{Base case: $x$ is $\epsilon$.} Then:
\begin{align*}
\Gamma \cup \Delta &\vDash 
   {\cEqX
      {\cCatAppX {\epsilon} {\cCatApp {y} {z}}}
      {\cCatApp {y} {z}}}\tag{1}\\
\Gamma \cup \Delta &\vDash 
   {\cEqX
      {\cCatApp {y} {z}}
      {\cCatAppX {\cCatApp {\epsilon} {y}} {z}}}\tag{2}
\end{align*}
(1) and (2) follow from $\mathsf{Thm33}$.  Thus $\Gamma \cup \Delta
\vDash {\cEqX {\cCatAppX {\epsilon} {\cCatApp {y} {z}}} {\cCatAppX
    {\cCatApp {\epsilon} {y}} {z}}}$ holds by the Equality
Rules~\cite[Lemma A.13]{Farmer25}.

\noindent
\emph{Induction step: $x$ is $\cCons {a} {w}$ and assume $\Gamma \cup \Delta
\vDash {\cEqX {\cCatAppX {w} {\cCatApp {y} {z}}} {\cCatAppX
    {\cCatApp {w} {y}} {z}}}$.}  Then:
\begin{align*}
\Gamma \cup \Delta &\vDash 
   {\cEqX
      {\cCatAppX {\cCons {a} {w}} {\cCatApp {y} {z}}}
      {\cConsX {a} {\cCatAppX {w} {\cCatApp {y} {z}}}}}\tag{1}\\
\Gamma \cup \Delta &\vDash 
   {\cEqX
      {\cConsX {a} {\cCatAppX {w} {\cCatApp {y} {z}}}}
      {\cConsX {a} {\cCatAppX {\cCatApp {w} {y}} {z}}}}\tag{2}\\
\Gamma \cup \Delta &\vDash 
   {\cEqX
      {\cConsX {a} {\cCatAppX {\cCatApp {w} {y}} {z}}}
      {\cCatAppX {\cCons {a} {\cCatAppX {w} {y}}} {z}}}\tag{3}\\
\Gamma \cup \Delta &\vDash 
   {\cEqX
      {\cCatAppX {\cCons {a} {\cCatAppX {w} {y}}} {z}}
      {\cCatAppX {\cCatApp {\cCons {a} {w}} {y}} {z}}}\tag{4}      
\end{align*}
(1), (3), and (4) follow from $\cInX {x} {\mName{str}}$, $\cInX {y}
    {\mName{str}}$, and $\cInX {z} {\mName{str}}$ and
    $\mathsf{Def12}$; and (2)~follows from the induction hypothesis.
    Thus \[\Gamma \cup \Delta \vDash {\cEqX {\cCatAppX {\cCons {a} {w}} {\cCatApp
          {y} {z}}} {\cCatAppX {\cCatApp {\cCons {a} {w}} {y}} {z}}}\]
    holds by the Equality Rules~\cite[Lemma A.13]{Farmer25}.

Therefore (b) holds, and (a) follows from (b) by the Deduction Theorem
\cite[Theorem A.50]{Farmer25} and then by Universal Generalization
\cite[Theorem A.30]{Farmer25} using the fact that
$\cVar{x}{\cFunTyX{R}{A}}$, $\cVar{y}{\cFunTyX{R}{A}}$, and
$\cVar{z}{\cFunTyX{R}{A}}$ are not free in $\Gamma$ since $\Gamma$ is
a set of sentences.
\end{proof}

  \item \textsf{Thm35 (Thm1-via-MON-over-COF-to-STR-2)}:\\
  \hspace*{2ex}
  ${\cMonoid
      {\mName{str}_{\cSetTy {\cFunTyX {R} {A}}}}
      {\mName{cat}_{\cFunTyX {\cProdTy {\cFunTy {R} {A}} {\cFunTy {R} {A}}} {\cFunTy {R} {A}}}}
      {\epsilon_{\cFunTyX {R} {A}}}}$\\
  \phantom{x} \hfill (strings form a monoid).

\begin{proof}[ of the theorem.] 
Similar to the proof of \textsf{Thm11}.
\end{proof}

  \item \textsf{Def13 (Def3-via-MON-over-COF-to-STR-2)}:\\
  \hspace*{2ex} 
  ${\cEqX 
      {\textsf{set-cat}_{\cFunTyX {\cProdTy {\cSetTy {\cFunTyX {R} {A}}} {\cSetTy {\cFunTyX {R} {A}}}} {\cSetTy {\cFunTyX {R} {A}}}}} {}}$\\
      \hspace*{2ex}
     ${\cFunAppX
         {\textsf{set-op}_{\cFunTyX {\cFunTy {\cProdTy {\cFunTy {R} {A}} {\cFunTy {R} {A}}} {\cFunTy {R} {A}}} {\cFunTy {\cProdTy {\cSetTy {\cFunTyX {R} {A}}} {\cSetTy {\cFunTyX {R} {A}}}} {\cSetTy {\cFunTyX {R} {A}}}}}} 
         {\mName{cat}}}$\\
  \phantom{x} \hfill (set concatenation).

\begin{proof}[ that RHS is defined.] 
Similar to the proof that the RHS of \textsf{Def6} is defined.
\end{proof} 

  \item \textsf{Def14 (Def4-via-MON-over-COF-to-STR-2)}:\\
    \hspace*{2ex}
    ${\cEqX {\mathsf{E}_{\cSetTy {{\cFunTyX {R} {A}}}}}
    {\cFinSetL {\epsilon_{\cFunTyX {R} {A}}}}}$ \hfill (set identity element). 

\begin{proof}[ that RHS is defined.] 
Similar to the proof that the RHS of \textsf{Def6} is defined.
\end{proof}

  \item \textsf{Thm36 (Thm12-via-MON-over-COF-1-to-STR-2)}:\\
  \hspace*{2ex}
  ${\cMonoid 
      {\cSetQTy {\mName{str}_{\cSetTy {\cFunTyX {R} {A}}}}} 
      {\textsf{set-cat}_{\cFunTyX {\cProdTy {\cSetTy {\cFunTyX {R} {A}}} {\cSetTy {\cFunTyX {R} {A}}}} {\cSetTy {\cFunTyX {R} {A}}}}} 
      {\mathsf{E}_{\cSetTy {{\cFunTyX {R} {A}}}}}}$\\
  \phantom{x} \hfill (string sets form a monoid).

\begin{proof}[ of the theorem.] 
Similar to the proof of \textsf{Thm11}.
\end{proof}

  \item \textsf{Def15 (Def9-via-MON-over-COF-1-to-STR-2)}:\\
  \hspace*{2ex} 
  ${\cEqX {\textsf{iter-cat}_{\cFunTyCX {R} {R} {\cFunTy {R} {\cFunTy {R} {A}}} {\cFunTy {R} {A}}}} {}}$\\
    \hspace*{2ex}
   ${\cDefDesX {f} 
    {\cFunTyCX {Z_{\cSetTy {R}}} {Z_{\cSetTy {R}}} {\cFunTy {Z_{\cSetTy {R}}} {\cFunTy {R} {A}}} {\cFunTy {R} {A}}} {}}$\\
    \hspace*{4ex}${\cForallBX {m,n} {Z_{\cSetTy {R}}} {g} {\cFunTyX {Z_{\cSetTy {R}}} {\cFunTy {R} {A}}}
    {\cQuasiEqX {\cFunAppCX {f} {m} {n} {g}} {}}}$\\
    \hspace*{6ex}${\cIf {\cBinX {m} {>} {n}} {\epsilon} 
    {\cBinX {\cFunAppC {f} {m} {\cBin {n} {-} {1}} {g}} {\mathsf{cat}} 
    {\cFunApp {g} {n}}}}$\\
  \phantom{x} \hfill (iterated concatenation).

\begin{proof}[ that RHS is defined.] 
Similar to the proof that the RHS of \textsf{Def6} is defined.
\end{proof}
\ee

\section{Miscellaneous Theorems}\label{app:misc-thms}

\begin{lem}[Universal Sets]\label{lem:univ-sets}
The following formulas are valid:

\be

  \item ${\cIsDefX {\cUnivSetPC {\alpha}}}$.

  \item ${\cNotEqX {\cUnivSetPC {\alpha}} {\cEmpSetPC {\alpha}}}$.

  \item ${\cForallX {x} {\alpha} {\cInX {x} {\cUnivSetPC {\alpha}}}}$.

  \item ${\cEqX
            {\cFunAbs {\mathbf{x}} {\alpha} {\mathbf{B}_\beta}}
            {\cFunAbsQTy {\mathbf{x}} {\cUnivSetPC {\alpha}} {\mathbf{B}_\beta}}}$.

  \item ${\cIffX
            {\cForall {\mathbf{x}} {\alpha} {\mathbf{B}_\cB}}
            {\cForallQTy {\mathbf{x}} {\cUnivSetPC {\alpha}} {\mathbf{B}_\cB}}}$.

  \item ${\cIffX
            {\cForsome {\mathbf{x}} {\alpha} {\mathbf{B}_\cB}}
            {\cForsomeQTy {\mathbf{x}} {\cUnivSetPC {\alpha}} {\mathbf{B}_\cB}}}$.

  \item ${\cQuasiEqX
            {\cDefDes {\mathbf{x}} {\alpha} {\mathbf{B}_\cB}}
            {\cDefDesQTy {\mathbf{x}} {\cUnivSetPC {\alpha}} {\mathbf{B}_\cB}}}$.

  \item ${\cIffX 
            {\cIsDefX {\mathbf{A}_\alpha}} 
            {\cIsDefInQTy {\mathbf{A}_\alpha} {\cUnivSetPC {\alpha}}}}$

  \item ${\cEqX
            {\cUnivSetPC {\cFunTyX {\alpha} {\beta}}}
            {\cFunQTy {\cUnivSetPC {\alpha}} {\cUnivSetPC {\beta}}}}$.

  \item ${\cEqX
            {\cUnivSetPC {\cProdTyX {\alpha} {\beta}}}
            {\cProdQTy {\cUnivSetPC {\alpha}} {\cUnivSetPC {\beta}}}}$.

  \item ${\cEqX
            {\cUnivSetPC {\cSetTy {\alpha}}}
            {\cSetQTy {\cUnivSetPC {\alpha}}}}$.

  \item ${\cEqX
            {\textbf{A}_{\cFunTyX {\cProdTy {\alpha} {\beta}} {\gamma}}}
            {\cRestrictX
               {\textbf{A}_{\cFunTyX {\cProdTy {\alpha} {\beta}} {\gamma}}}
               {\cProdTyX {\cUnivSetPC {\alpha}} {\cUnivSetPC {\beta}}}}}$.

\ee
\end{lem}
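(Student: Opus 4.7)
The plan is to prove each of the twelve parts by unfolding the notational definitions from Tables~\ref{tab:nd-boolean}--\ref{tab:nd-quasitypes} and then applying beta-reduction, extensionality, and a few basic axioms of $\mathfrak{A}$. Recall that $\cUnivSetPC{\alpha}$ stands for $\cFunAbsX{x}{\alpha}{\cT}$ and $\cEmpSetPC{\alpha}$ stands for $\cFunAbsX{x}{\alpha}{\cF}$, so parts 1 and 3 are immediate: part 1 from \cite[Axiom A5.11]{Farmer25} (function abstractions are always defined), and part 3 by beta-reducing $\cFunAppX{(\cFunAbsX{x}{\alpha}{\cT})}{x}$ to $\cT$ after noting that $\cVar{x}{\alpha}$ is defined by \cite[Axiom A5.1]{Farmer25}. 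Part 2 proceeds by contradiction: assume $\cUnivSetPC{\alpha} = \cEmpSetPC{\alpha}$, instantiate both sides at an arbitrary $\cVar{x}{\alpha}$, apply beta-reduction, and derive $\cT = \cF$, which contradicts the notational definitions of $\cT$ and $\cF$ in Table~\ref{tab:nd-boolean}.

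Parts 4 through 7 all exploit the same observation: each universal-set-restricted binder (Table~\ref{tab:nd-quasitypes}) unfolds to an ordinary binder in which the guard $\cInX{\mathbf{x}}{\cUnivSetPC{\alpha}}$ appears; by part 3 this guard reduces to $\cT$, so the conditional $\cIf{\cT}{\mathbf{B}_\beta}{\cBotPC{\beta}}$ collapses to $\mathbf{B}_\beta$, the conjunction $\cT \And \mathbf{B}_\cB$ collapses to $\mathbf{B}_\cB$, and the implication $\cT \Implies \mathbf{B}_\cB$ collapses to $\mathbf{B}_\cB$. Extensionality (\cite[Axiom A3]{Farmer25}) then yields the required equalities, with Notational Convention~14 guaranteeing that no variable capture arises in the unfolded bodies. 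Part 8 is proved by unfolding $\cIsDefInQTy{\cdot}{\cdot}$ to the conjunction of $\cIsDefX{\mathbf{A}_\alpha}$ and $\cInX{\mathbf{A}_\alpha}{\cUnivSetPC{\alpha}}$ and noting that the second conjunct reduces via part 3 to $\cT$ whenever the first holds.

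Parts 9--11 are the main obstacle, since they equate a universal set on a compound type with the corresponding quasitype built from universal sets. In each case I would apply Extensionality to compare membership: for part 9, an element $f \in D_{\cFunTyX{\alpha}{\beta}}$ belongs to $\cUnivSetPC{\cFunTyX{\alpha}{\beta}}$ trivially by part 3, and belongs to $\cFunQTy{\cUnivSetPC{\alpha}}{\cUnivSetPC{\beta}}$ because the defining condition in Table~\ref{tab:nd-quasitypes} reduces, again by part 3, to the vacuous requirement that defined outputs lie in $D_\beta$, which is automatic. Parts 10 and 11 follow analogously for product and power quasitypes, using parts 9 and 10 as appropriate. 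Finally, part 12 is a restriction computation: unfolding the notational definition of $\cRestrict{\cdot}{\cdot}$ from Table~\ref{tab:nd-functions} and using part 3 to show the guard is always true, the restricted function beta-reduces to $\cFunAbsX{p}{\cProdTy{\alpha}{\beta}}{\cFunApp{\textbf{A}}{p}}$, which equals $\textbf{A}$ by eta-reduction (derivable in $\mathfrak{A}$) combined with Extensionality.
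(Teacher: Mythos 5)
The paper offers no proof to compare against: Lemma~\ref{lem:univ-sets} is explicitly ``left to the reader as an exercise.'' Your proposal is the natural way to discharge that exercise --- unfold $\cUnivSetPC {\alpha}$ to $\cFunAbsX {x} {\alpha} {\cT}$, observe via part~3 that every membership guard introduced by the quasitype notation collapses to $\cT$, and finish with beta-reduction and Extensionality --- and parts 1--11 are essentially correct as you describe them. Two small points of care: in part~9 the notational definition of $\cFunQTy {\mathbf{Q}_{\cSetTy {\alpha}}} {\mathbf{R}_{\cSetTy {\beta}}}$ splits into the $\beta \not= \cB$ case (the $\cFunQTyPC {\alpha} {\beta}$ form) and the $\cFunQTy {\mathbf{Q}_{\cSetTy {\alpha}}} {\cB}$ form, so the argument should be run separately for each, though both reduce to the same vacuous condition; and in part~8 your phrasing ``whenever the first holds'' is exactly right, since one cannot instantiate the universally quantified part~3 at a possibly undefined $\mathbf{A}_\alpha$, and when $\mathbf{A}_\alpha$ is undefined both sides of the biconditional are false by V3 and V4.

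The one place where your argument has a genuine gap is part~12. The right-hand side $\cRestrictX {\textbf{A}_{\cFunTyX {\cProdTy {\alpha} {\beta}} {\gamma}}} {\cProdTyX {\cUnivSetPC {\alpha}} {\cUnivSetPC {\beta}}}$ unfolds to a function abstraction and is therefore \emph{always} defined by Axiom A5.11, whereas if $\textbf{A}_{\cFunTyX {\cProdTy {\alpha} {\beta}} {\gamma}}$ is undefined the stated equality is false by V3. Correspondingly, the eta-step you invoke --- that $\cFunAbsX {p} {\cProdTyX {\alpha} {\beta}} {\cFunAppX {\textbf{A}_{\cFunTyX {\cProdTy {\alpha} {\beta}} {\gamma}}} {p}}$ equals $\textbf{A}_{\cFunTyX {\cProdTy {\alpha} {\beta}} {\gamma}}$ --- only holds under the hypothesis $\cIsDefX {\textbf{A}_{\cFunTyX {\cProdTy {\alpha} {\beta}} {\gamma}}}$; without it the abstraction denotes the empty (or constantly false) function while the right-hand side denotes nothing. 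So part~12 is valid only for defined $\textbf{A}_{\cFunTyX {\cProdTy {\alpha} {\beta}} {\gamma}}$, and your proof should state that hypothesis explicitly before applying eta. This is harmless for the paper's uses of the lemma (e.g., step~(3) of the proof of $\mName{Thm6}$, where the expression is the constant $\cdot_{\cFunTyX {\cProdTy {M} {M}} {M}}$, defined by Axiom A5.2), but as written both the lemma and your eta-step silently assume it.
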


\begin{proof}
The proof is left to the reader as an exercise.
\end{proof}

\begin{lem}\label{lem:cdot-opposite}
Let $T$ be $\mathsf{MON}$ extended by the definition $\mathsf{Def2}$.
The formula
\[{\cQuasiEqX 
    {\cBinX {\textbf{A}_M} {\cdot^{\rm op}} {\textbf{B}_M}}
    {\cBinX {\textbf{B}_M} {\cdot} {\textbf{A}_M}}}\]
is valid in $T$.
\end{lem}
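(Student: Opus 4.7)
The plan is to establish the quasi-equality by unfolding notation and then case-splitting on the definedness of $\mathbf{A}_M$ and $\mathbf{B}_M$. First I would observe that, by the notational definition for binary operators in Table~\ref{tab:nd-bin-op}, the expression $\cBinX{\mathbf{A}_M}{\cdot^{\rm op}}{\mathbf{B}_M}$ stands for $\cFunAppX{\cdot^{\rm op}_{\cFunTyX{\cProdTy{M}{M}}{M}}}{\cOrdPair{\mathbf{A}_M}{\mathbf{B}_M}}$, and similarly for $\cBinX{\mathbf{B}_M}{\cdot}{\mathbf{A}_M}$. So the goal reduces to a quasi-equality of two function applications to ordered pairs.

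In the case that both $\cIsDefX{\mathbf{A}_M}$ and $\cIsDefX{\mathbf{B}_M}$ hold, Axiom~A7.1 of~\cite{Farmer25} gives that $\cOrdPair{\mathbf{A}_M}{\mathbf{B}_M}$ is defined, and since $\cdot^{\rm op}$ is a constant it is defined by~\cite[Axiom A5.2]{Farmer25}. Hence by $\mathsf{Def2}$ and Beta-Reduction~\cite[Axiom A4]{Farmer25}, I would derive
\[{\cEqX{\cFunAppX{\cdot^{\rm op}}{\cOrdPair{\mathbf{A}_M}{\mathbf{B}_M}}}{\cBinX{\cFunApp{\mathsf{snd}}{\cOrdPair{\mathbf{A}_M}{\mathbf{B}_M}}}{\cdot}{\cFunApp{\mathsf{fst}}{\cOrdPair{\mathbf{A}_M}{\mathbf{B}_M}}}}}.\]
Axioms~A7.2 and~A7.3 of~\cite{Farmer25} reduce the projections: $\cFunApp{\mathsf{fst}}{\cOrdPair{\mathbf{A}_M}{\mathbf{B}_M}} = \mathbf{A}_M$ and $\cFunApp{\mathsf{snd}}{\cOrdPair{\mathbf{A}_M}{\mathbf{B}_M}} = \mathbf{B}_M$. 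Quasi-Equality Substitution~\cite[Lemma A.2]{Farmer25} then gives equality of the two sides; since both are defined, this implies quasi-equality.

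In the complementary case where $\cIsUndefX{\mathbf{A}_M}$ or $\cIsUndefX{\mathbf{B}_M}$, Axiom~A7.1 forces both $\cOrdPair{\mathbf{A}_M}{\mathbf{B}_M}$ and $\cOrdPair{\mathbf{B}_M}{\mathbf{A}_M}$ to be undefined, and then~\cite[Axiom A5.4]{Farmer25} makes both function applications undefined (both are of type $M \not= \cB$). By the notational definition of $\simeq$ in Table~\ref{tab:nd-definedness}, the quasi-equality holds vacuously when both sides are undefined. Combining the two cases via the Tautology Rule~\cite[Corollary A.46]{Farmer25} establishes the lemma.

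The main obstacle is bookkeeping: one must be careful to apply Beta-Reduction only after ensuring the argument of $\cdot^{\rm op}$ is defined (so that \cite[Axiom A4]{Farmer25} applies), and must keep track of which steps yield strict equality versus quasi-equality. No genuinely hard reasoning is needed; the proof is essentially an unpacking of $\mathsf{Def2}$ together with the standard definedness behaviour of ordered pairs and function application.
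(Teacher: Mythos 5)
Your proof is correct, and it arrives at the same two-case decomposition as the paper --- split on whether $\mathbf{A}_M$ and $\mathbf{B}_M$ are both defined, note that the quasi-equality holds vacuously when either is undefined, and unfold $\mathsf{Def2}$ in the defined case --- but it travels a genuinely different route. The paper argues semantically: it fixes an arbitrary general model $N$ of $T$ and an assignment $\phi$ and reasons directly about the valuation $V^{N}_{\phi}$, disposing of each case in a single line. You instead sketch a derivation in the proof system $\mathfrak{A}$, citing Axioms A4, A5.2, A5.4, and A7.1--A7.3 together with the quasi-equality machinery, and then rely on soundness to obtain validity. Your version is closer in style to the other validation proofs in Appendix~\ref{app:validation}, which lean heavily on $\mathfrak{A}$, and it makes explicit exactly the bookkeeping (the definedness side condition for beta-reduction, the distinction between equality and quasi-equality, the behaviour of ordered pairs and projections) that the paper's one-line appeal to $\mathsf{Def2}$ leaves implicit; the paper's semantic argument buys brevity at the cost of hiding that bookkeeping. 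One small simplification: in your defined case, once the chain of (quasi-)equalities is established, the remark that ``both sides are defined'' is unnecessary --- under the notational definition of $\simeq$ in Table~\ref{tab:nd-definedness}, a true equality implies the corresponding quasi-equality unconditionally, since it makes the consequent of the defining implication true.
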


\begin{proof}
Let $\textbf{X}_\cB$ be 
\[{\cQuasiEqX 
    {\cBinX {\textbf{A}_M} {\cdot^{\rm op}} {\textbf{B}_M}}
    {\cBinX {\textbf{B}_M} {\cdot} {\textbf{A}_M}}},\]
$N$ be a model of $T$, and $\phi \in \mName{assign}(N)$.  Suppose that
$V^{N}_{\phi}(\textbf{A}_M)$ or $V^{N}_{\phi}(\textbf{B}_M)$ is
undefined.  Then clearly $V^{N}_{\phi}(\textbf{X}_\cB) = \TRUE$.  Now
suppose that $V^{N}_{\phi}(\textbf{A}_M)$ and
$V^{N}_{\phi}(\textbf{B}_M)$ are defined.  Then
$V^{N}_{\phi}(\textbf{X}_\cB) = \TRUE$ by $\mathsf{Def2}$.
\end{proof}

\addcontentsline{toc}{section}{References}
\bibliography{simple-type-theory-bib,monoids-bib}
\bibliographystyle{plain}

\end{document}